\documentclass[11pt]{article}

\usepackage[margin=1in]{geometry}
\usepackage{amsmath,amssymb,amsthm,mathtools}
\usepackage{booktabs}
\usepackage{graphicx}
\usepackage{pgfplots}
\pgfplotsset{compat=1.18}
\usepackage{enumitem}
\usepackage{float}
\usepackage[ruled,linesnumbered,vlined]{algorithm2e}
\usepackage{xcolor}
\usepackage{microtype}
\usepackage[numbers,sort&compress]{natbib}
\usepackage[colorlinks=true,linkcolor=blue!55!black,citecolor=blue!55!black,urlcolor=blue!60!black]{hyperref}
\usepackage{tikz}
\usetikzlibrary{arrows.meta,positioning,calc}

\newtheorem{theorem}{Theorem}[section]
\newtheorem{lemma}[theorem]{Lemma}

\newtheorem{corollary}[theorem]{Corollary}
\theoremstyle{definition}
\newtheorem{definition}[theorem]{Definition}

\newcommand{\MMS}{\operatorname{MMS}}
\newcommand{\E}{\mathbb{E}}
\newcommand{\Prb}{\mathbb{P}}
\newcommand{\wpralg}{\textsc{Weighted Pairwise Rounding}}
\newcommand{\scheduledalg}{\textsc{Scheduled Weighted Allocation}}
\newcommand{\binaryalg}{\textsc{Switched Binary Allocation}}

\newif\ifpolishmarks
\polishmarksfalse
\DeclareRobustCommand{\polish}[1]{\ifpolishmarks\leavevmode{\color{red}#1}\else#1\fi}
\newenvironment{proofof}[1]{{\vspace*{5pt} \noindent\bf Proof of #1:  }}{\hfill\rule{2mm}{2mm}\vspace*{5pt}}

\title{Randomized Algorithm for MMS Allocation of Chores\\
with Online Agent Arrivals}
\author{Haolong Li \and Fangxiao Wang \and Xiaowei Wu}
\date{\today}

\begin{document}
\maketitle

\begin{abstract}
We study the fair allocation of $m$ indivisible chores among $n$ agents in the agent-arrival model against an oblivious adversary. The chore set and the number of agents are known in advance. When agent $i$ arrives, her nonnegative additive cost function $c_i$ is revealed, and the algorithm must immediately and irrevocably assign her a subset of the unallocated chores. We focus on the notion of maximin share (MMS) and measure the fairness of the realized allocation $X$ by the ex-post MMS approximation ratio $\rho(X)$.
We measure an algorithm's performance by its expected ratio, and also the guarantees that hold with high probability.

For binary additive costs, we achieve an expected ex-post MMS ratio of $3/2+o(1)$, and show that every randomized online algorithm has an expected ratio of at least $3/2-o(1)$. For every $n$, the expected ratio is at most $1.755$. 
A variant of our algorithm guarantees an approximation ratio of $2$ with high probability, which is optimal.
For general additive costs, we present a weighted-rounding algorithm that achieves an expected ratio of $2+o(1)$, which also holds with high probability.
The expected ratio is at most $2.594$ for every $n$.
We also show that similar algorithms extend to additive goods, achieving a constant expected ex-post MMS ratio and guaranteeing every agent an $x$-fraction of her MMS with probability at least $1-85x^2$ for every $x\le1/15$.
\end{abstract}

\newpage
\section{Introduction}

Fair allocation studies how to distribute resources or tasks among agents with different preferences while ensuring that each agent receives a fair share. For indivisible items, achieving fairness is challenging because items cannot be divided and agents may value them differently. This difficulty is particularly apparent for chores, which impose costs such as time and effort on the agents who perform them. A widely studied fairness benchmark is the maximin share (MMS), originally introduced by Budish for goods \cite{budish2011combinatorial}. For chores, an agent's MMS is the minimum possible cost of her most costly bundle when she partitions all chores into $n$ bundles. An allocation is $\alpha$-MMS, for $\alpha\geq 1$, if every agent receives a bundle whose cost is at most $\alpha$ times her MMS. Although exact MMS allocations need not exist for additive chores, approximation guarantees have improved considerably. Aziz et al. obtained a $2$-approximation and a $5/3$-approximation using only ordinal preferences \cite{aziz2017algorithms,aziz2024ordinal}. Feige and Huang improved the ordinal guarantee to $8/5$ \cite{feige2023picking}, while Huang and Segal-Halevi obtained a $13/11$ guarantee using numerical cost information \cite{huang2023reduction}.

In many applications, agents or requests arrive gradually even when all items are available in advance. Consider an organizer who needs to distribute a fixed collection of tasks among volunteers who arrive one by one. When a volunteer arrives, the organizer learns her costs and must immediately assign her a bundle of tasks, without knowing how costly the remaining tasks will be for future volunteers. A similar challenge arises in disaster relief, where a relief center holds a fixed stock of food and medical supplies while requests from affected communities or emergency teams arrive over time. Supplies must be distributed promptly to meet urgent needs, without knowing what resources future requests will require. These applications motivate the online agent-arrival model for fair allocation \cite{kulkarni2025online,li2026arrivals}. In this model, the set of items and the number of agents are known in advance, but each agent's valuation or cost function is revealed only upon her arrival. The algorithm must immediately and irrevocably allocate a subset of the remaining items to each arriving agent, and all items must be allocated by the end. The central challenge is to guarantee fairness without knowing the preferences of agents who have not yet arrived.
In this work, we mainly focus on the case of chores, while some of our results extend to the case of goods.

Moving from offline allocation to the agent-arrival model requires making allocation decisions before future agents' preferences are known. Early studies of agent arrivals considered divisible resources, including cake cutting and the allocation of multiple resources \cite{walsh2011cake,kash2014dynamic}. For indivisible goods, Kulkarni et al. studied MMS allocation when the possible valuation functions are known in advance, but each agent's valuation is revealed only upon her arrival \cite{kulkarni2025online}. For chores, Li et al. investigated deterministic MMS guarantees with and without prior knowledge of the possible cost functions \cite{li2026arrivals}. With this prior information, they obtained logarithmic and constant approximation ratios for additive and binary costs, respectively. However, without prior cost information, they showed that no deterministic algorithm can guarantee an approximation factor below $n$, even for binary costs.
The lower bound above applies to deterministic algorithms, and Li et al. explicitly left open whether randomization can improve MMS guarantees \cite{li2026arrivals}. 

In this work, we study this question for both binary and general additive costs, without prior information about future agents' cost functions. We consider an oblivious adversary, which fixes all cost functions and the arrival order before the algorithm makes any random choices. For each realized allocation, we define its \emph{ex-post MMS ratio} as the maximum, over all agents, of the ratio of an agent's allocated cost to her MMS. We seek a constant upper bound on the expectation of this ratio, independent of the number of agents and chores. We also study high-probability guarantees: the ex-post MMS ratio should be bounded by a constant with probability tending to one as the number of agents grows.

\subsection{Our Contributions}

In this work, we show that constant approximation ratios can be achieved both in expectation and with high probability using randomized algorithms.

\paragraph{Result 1 (Binary additive costs).} We propose a randomized online algorithm for binary additive costs that achieves an expected MMS ratio of $3/2+o(1)$. Conversely, every randomized online algorithm has a fixed oblivious binary instance with an expected ratio of at least $3/2-o(1)$.

\medskip

Thus, the asymptotically optimal expected ratio in the binary domain is $3/2$. We also obtain an expected ratio below $1.755$ for every $n$. For a high-probability guarantee, we find that the expected $3/2+o(1)$ guarantee cannot be strengthened to hold with high probability: every randomized online algorithm has a fixed oblivious binary instance on which the realized ratio is at least $2$ with constant probability. We show that this threshold is tight by introducing a randomized online algorithm which has an ex-post MMS ratio of at most $2$ except with probability $e^{-\Omega(n)}$.

\paragraph{Result 2 (General additive costs).} 
We propose a randomized online algorithm for general additive costs with an expected MMS ratio of $2+o(1)$, which also holds with high probability. 

\medskip

We also propose a variant of our algorithm that guarantees an expected MMS ratio of at most $2.594$ for every $n$. Since binary costs are a special case of additive costs, the limiting ratio $2$ is optimal for high-probability guarantees in the general additive domain as well.

\smallskip

For both binary and general additive costs, we show that the expected ratios can be improved for small $n$.
The results are summarized in the following table.

\begin{table}[ht]
  \centering
\small \resizebox{0.85\textwidth}{!}{%
  \begin{tabular}{@{}c@{\qquad}cc@{\qquad}cc@{}}
    \toprule
    & \multicolumn{2}{c}{Binary additive costs} & \multicolumn{2}{c}{General additive costs} \\
    \cmidrule(lr){2-3}\cmidrule(lr){4-5}
    $n$ & Lower bound & Upper bound & Lower bound & Upper bound \\
    \midrule
    $2$ & $1.2500$ & $1.2500$ & $1.2500$ & $1.5000$ \\
    $3$ & $1.2962$ & $1.2985$ & $1.2962$ & $1.8889$ \\
    $4$ & $1.3164$ & $1.6722$ & $1.3164$ & $<2.2130$ \\
    General $n$ & $3/2-O(n^{-1/2})$ & $3/2+O(n^{-1/5})$ & $3/2-O(n^{-1/2})$ & $2+O(\sqrt{\log n/n})$ \\
    \bottomrule
  \end{tabular}}
\caption{Bounds on the optimal expected ratio.}
\end{table}

We also generalize our algorithms and analysis to the case of goods. 

\paragraph{Result 3 (General additive rewards).} 
We propose a randomized online algorithm for general additive valuations with an expected MMS ratio of $1/10$. We also present a hardness result of $0.5$ for the expected ex-post MMS ratio.

\medskip

Unfortunately, as shown by Chawla et al.~\cite{chawla2026singleorsampleonlinefairallocation}, no online algorithm can guarantee a constant MMS ratio with high probability.
Instead, we obtain a trade-off between the ratio lower bound and the failure probability:
For every $0< x \leq 1/15$, we can guarantee that the MMS ratio is at least $x$ with a probability at least $1-85\cdot x^2$. Choosing $x = 1/15$ yields a ratio of $1/15$ with probability $0.62$, improving the $1/24$ ratio with probability $0.423$ in \cite{chawla2026singleorsampleonlinefairallocation}.

\subsection{Overview of the Techniques}

We begin with binary additive costs, where each chore costs either zero or one. All known hard instances for agent-arrival MMS, both in previous work and in this paper, are binary. Understanding this case may therefore guide the design of better algorithms for general costs. It also permits cleaner allocation rules and probabilistic analysis. Throughout this overview, we discuss only chores and focus on the expected MMS ratio.

\paragraph{Binary additive costs: the need for randomization.}
The main difficulty is choosing which chores to allocate without knowing their costs to future agents. Suppose the first agent regards all $m$ chores as costly, while each later agent regards only the same hidden set of $n$ chores as costly. If the early allocations miss this set, too many costly chores may remain for later agents, even when the first agent's allocation appears fair. This lack of information also underlies the hardness of online allocation with arriving agents \cite{kulkarni2025online}. In particular, Li et al.~\cite{li2026arrivals} show that, without prior information about future costs, no deterministic algorithm achieves an MMS ratio below $n$, even for binary costs. Randomization may help allocate some hidden costly chores without knowing their identities.
A natural randomized rule gives each arriving agent all remaining zero-cost chores and a uniformly random subset of her remaining unit-cost chores. The key question is how many costly chores to allocate. Our lower bound shows that even randomized online algorithms cannot guarantee an expected ratio better than $3/2$ asymptotically. To approach this bound, one might assign each agent roughly $3\mu_i/2$ costly chores, where $\mu_i$ is her MMS. However, we show that this only guarantees an expected ratio below $1.755$. The difficulty is that $3\mu_i/2$ already imposes a ratio of $3/2$ in favorable outcomes, while unfavorable outcomes may leave too many chores for later agents and incur a substantially larger ratio. Thus, this rule is overly cautious about future workloads and loses the opportunity to achieve a smaller ratio in favorable outcomes.

\paragraph{Dynamic aggressiveness.}
Our algorithm instead varies its \emph{aggressiveness} over time. In aggressive rounds, the arriving agent receives fewer costly chores, keeping her own ratio small; in cautious rounds, she receives more, reducing the workload left for later agents. Before arrivals, we randomly choose the number $T$ of initial aggressive rounds. The algorithm uses the smaller allocation size in the first $T$ rounds and the larger one afterward, until the common costly set becomes small enough to distribute evenly. Randomizing the transition hedges against the unknown point at which future agents' common costly chores become apparent. A fixed transition would instead allow an oblivious adversary to exploit its timing.
Concentration inequalities are natural analytical tools, but applying them is not immediate. Random selections in different rounds are dependent: a chore allocated earlier cannot be selected again, and the available set changes after every allocation. We compare this process with an auxiliary process that samples independently, with replacement, from the relevant costly sets in each round. Coupling and convex-order comparisons relate the processes, allowing us to apply classical bounds, including Chernoff bounds and Hoeffding's comparison for sampling without replacement. This gives an expected ratio of $3/2+o(1)$.

\paragraph{General additive costs.}
Our general-additive algorithm follows the same idea of balancing the workload assigned now with the flexibility left for later agents. Since chores have different costs, fixing the number of allocated chores is no longer meaningful. Instead, before arrivals we fix a schedule $(p_1,\ldots,p_n)$ with $0<p_i\le1$ and $p_n=1$. On arrival, agent $i$ is fractionally assigned a $p_i$-share of every remaining chore, and hence a $p_i$-fraction of her total remaining cost. The choice $p_n=1$ ensures that all chores are allocated by the end.
Because chores are indivisible, these fractional assignments must be rounded. We want to preserve each chore's allocation probability and the current agent's expected cost while avoiding large fluctuations in her realized cost. Independent rounding preserves expectations, but the errors of many costly chores can accumulate, making a large cost for some agent more likely. We therefore introduce \emph{Weighted Pairwise Rounding}. Starting from a common fraction $p_i$, the procedure randomly adjusts two fractional coordinates at a time, preserving their expectations and the current agent's total fractional cost. Each adjustment makes at least one coordinate integral. Eventually, at most one fractional chore remains, and its final Bernoulli decision is the only source of rounding error in the current agent's total cost. Thus, her realized cost differs from the fractional target by at most one chore's cost. A product-moment inequality also controls the distribution of chores left for future agents.
Finally, concentration bounds compare the remaining chores with an independently retained set and express the expected MMS ratio in terms of the schedule $(p_1,\ldots,p_n)$. We choose the schedule to control the largest fractional workload across arrivals, obtaining an expected ratio of $2+o(1)$.

\subsection{Related Work}

There is an extensive literature on fair allocation, and we review only the results most closely related to our setting. For broader background, we refer to the survey on online fair division by Aleksandrov and Walsh~\cite{aleksandrov2020survey}, Aziz et al.~\cite{aziz2022survey}, and Amanatidis et al.~\cite{amanatidis2023survey}.

\paragraph{Offline MMS allocation.}
Budish introduced the MMS benchmark for goods \cite{budish2011combinatorial}. However, it is later shown that exact MMS allocations need not exist for additive goods or chores \cite{kurokawa2018fair,aziz2017algorithms}. For chores, approximation guarantees improved from $2$ to $4/3$, $11/9$, and $13/11$ \cite{aziz2017algorithms,barman2020approximation,huang2021framework,huang2023reduction}. Algorithms using only ordinal preferences achieve $5/3$ \cite{aziz2024ordinal} and, more recently, $8/5$ \cite{feige2023picking}. For goods, guarantees improved from $2/3$ to $3/4$ and beyond \cite{kurokawa2018fair,ghodsi2018fair,garg2021improved,akrami2023simplification,akrami2024breaking}, including a $10/13$ guarantee \cite{heidari2026improved} and a $7/9$ guarantee with an associated FPTAS \cite{huang2025sevenninths}. Impossibility bounds also constrain what can be achieved offline: Garbuz and Ezra established bounds of $20/21$ for goods and $31/30$ for chores \cite{garbuz2026impossibility}, improving the earlier $39/40$ and $44/43$ bounds \cite{feige2021negative}. These results assume that the preferences of all agents are available before allocation.

\paragraph{Online fair division.}
Online fair division requires irrevocable decisions before all relevant information has been revealed. Early work on arriving agents and divisible resources considered online cake cutting and dynamic multi-resource allocation \cite{walsh2011cake,kash2014dynamic}. Under item arrivals, Aleksandrov et al.~\cite{aleksandrov2015food} studied the sequential allocation of indivisible goods, motivated by food-bank distribution, and analyzed a randomized rule that is envy-free in expectation. Subsequent work explored the evolution of envy, the trade-off between fairness and efficiency, and reallocations needed to maintain envy-freeness up to one item \cite{benade2024efficient,he2019future}. Approximate proportionality under item arrivals was considered by Choo et al.~\cite{choo2025proportionality}, while Chen and Tan~\cite{chen2026competitive} examined competitive guarantees under several fairness notions for goods and chores. Long and Chen~\cite{long2026randomized} investigated randomized algorithms against an oblivious adversary, distinguishing bounds on expected realized fairness from high-probability guarantees. Unlike these fairness notions, the MMS benchmark compares each agent's realized allocation with a partition-based share of the entire item set.

\paragraph{Online MMS allocation.}
Under item arrivals, Zhou et al.~\cite{zhou2023online} obtained a $(2-1/n)$-MMS guarantee for additive chores when each agent's total cost is known; without this information, the tight deterministic ratio is $n$ \cite{seddighin2025lower,song2025online}. In contrast, exact MMS is attainable for binary costs \cite{wang2026binary}. Related results for goods consider restricted valuations and predictions about future items \cite{amanatidis2025personalized,neoh2025additional}. For randomized algorithms, Neoh and Teh~\cite{neoh2026closing} established lower bounds against adaptive adversaries, while Cohen et al.~\cite{cohen2026oblivious} gave an impossibility result for positive MMS guarantees with high probability against oblivious adversaries. Under agent arrivals, Kulkarni et al.~\cite{kulkarni2025online} studied indivisible goods when possible valuation functions are known beforehand, while Li et al.~\cite{li2026arrivals} investigated chores with and without prior cost-type information. Their deterministic lower bound of $n$ without such information holds even for binary costs.

\section{Preliminaries}

For any integer $t$, we use $[t]$ to denote $\{1,2,\ldots,t\}$. 
We define the maximum over an empty index set to be zero.
We write $x^+:=\max\{x,0\}$. Let $N=[n]$ be the set of agents and let $M$ be the set of $m$ indivisible chores. The set $M$ and the number of agents $n$ are known in advance.
Agents arrive in the order $1,2,\ldots,n$. When agent $i$ arrives, her nonnegative additive cost function $c_i:2^M\to\mathbb R_{\ge0}$ is revealed. Throughout, costs are additive, so $c_i(S)=\sum_{e\in S}c_i(e)$ for every $S\subseteq M$.
By adding dummy chores whose costs are fixed to zero for every agent, we may assume without loss of generality (w.l.o.g.) that $M$ contains arbitrarily many chores; in particular, we assume $m>n$.

\begin{definition}[Maximin Share]
Let $\Pi_n(M)$ be the set of ordered $n$-partitions of $M$. We use $\mu_i$ to denote the maximin share (MMS) of agent $i$:
    \begin{equation*}
      \mu_i:=\MMS_i(M,n)
      :=\min_{(P_1,\ldots,P_n)\in\Pi_n(M)}
           \max_{j\in[n]} \left\{c_i(P_j)\right\}.
    \end{equation*}
\end{definition}

For additive cost functions, computing the MMS value is NP-hard in general; our algorithms use $\mu_i$ only in the analysis.
We may assume that $\mu_i>0$ for every agent, because otherwise we can assign all remaining chores to this agent. The following properties of $\mu_i$ will be used repeatedly.

\begin{lemma}[Basic MMS properties]
\label{lem:mms-lb}
For every agent $i$, we have
\begin{equation*}
  \mu_i\ge \frac{c_i(M)}{n}
  \qquad\text{and}\qquad
  \mu_i\ge \max_{e\in M}\{c_i(e)\}.
\end{equation*}
\end{lemma}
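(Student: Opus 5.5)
Both inequalities follow from one observation: fix an arbitrary ordered partition $(P_1,\ldots,P_n)\in\Pi_n(M)$ and bound its largest bundle cost $\max_{j\in[n]}c_i(P_j)$ from below by each of the two quantities. Since $\mu_i$ is the minimum of this largest bundle cost over all partitions, and the lower bounds do not depend on the partition, they carry over to $\mu_i$. The minimum exists because $\Pi_n(M)$ is finite.

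For the first inequality, use additivity of $c_i$ and the fact that the bundles $P_1,\ldots,P_n$ are disjoint and cover $M$. This gives $\sum_{j\in[n]}c_i(P_j)=c_i(M)$. The largest of $n$ nonnegative numbers is at least their average, so $\max_j c_i(P_j)\ge c_i(M)/n$.

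For the second inequality, fix any chore $e\in M$ and let $P_j$ be the bundle containing it. By additivity and nonnegativity of costs, $c_i(P_j)=\sum_{e'\in P_j}c_i(e')\ge c_i(e)$. Hence $\max_j c_i(P_j)\ge c_i(e)$ for every $e\in M$, and therefore $\max_j c_i(P_j)\ge\max_{e\in M}c_i(e)$.

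Finally, take the partition attaining $\mu_i$; both bounds apply to it and give the claim. There is no real obstacle. The only points to check are that nonnegativity of costs is used in the monotonicity step $c_i(P_j)\ge c_i(e)$, and that empty bundles, which are allowed in ordered partitions, cause no issue because they contribute zero cost.
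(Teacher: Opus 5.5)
Your proposal is correct. The paper states this lemma without proof as a standard fact, and your argument (for every partition, the largest bundle is at least the average $c_i(M)/n$, and at least the cost of any chore it contains, by nonnegativity of costs) is the usual justification it implicitly relies on.
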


Upon the arrival of agent $i$, the algorithm must immediately assign her a subset $X_i$ of the remaining chores. The decision is irrevocable. The output is a complete allocation $X=(X_1,\ldots,X_n)$: the bundles are pairwise disjoint and $\bigcup_{i\in N}X_i=M$. The arrival sequence is fixed by an \emph{oblivious adversary}: the full sequence $(c_1,\ldots,c_n)$ is chosen before the randomized online algorithm draws its random coins, although the cost functions are revealed only upon arrival.

\begin{definition}[Approximate MMS allocation]
For $\alpha\ge1$, a complete allocation $X=(X_1,\ldots,X_n)$ is \emph{$\alpha$-MMS} if for every $i\in N$ we have $c_i(X_i)\le\alpha\cdot \mu_i$.
\end{definition}

The algorithms we study are randomized, so the output allocation $X$ is random. We measure the realized MMS ratio of allocation $X$ as $\rho(X):=\max\left\{1,\max_{i\in [n]} \{\frac{c_i(X_i)}{\mu_i}\} \right\}$.

\begin{definition}[Competitiveness]
A randomized online algorithm is \emph{$\alpha$-competitive in expectation} if $\E[\rho(X)]\le\alpha$ for every online instance. It is \emph{$\alpha$-competitive with failure probability $\delta$} if $\Prb[\rho(X)>\alpha]\le\delta$.
We say that the guarantee holds \emph{with high probability} if $\delta=o(1)$ as $n\to\infty$.
\end{definition}

\section{Binary Additive Costs}

In this section, we consider the binary setting, where $c_i(e)\in\{0,1\}$ for every agent $i\in N$ and chore $e\in M$. For agent $i$, we define her \emph{support} as $B_i:=\{e\in M:c_i(e)=1\}$.
Notice that $\mu_i=\left\lceil\frac{|B_i|}{n}\right\rceil$ for every agent $i\in N$.
The binary setting allows a simple allocation rule: for each arriving agent $i$, we may first assign her all zero-cost chores and then a uniformly random item set of roughly $\alpha\cdot \mu_i$ remaining costly chores.
The parameter $\alpha$ controls how cautious the algorithm is and can be optimized to minimize the expected ratio.
The binary setting also reveals the core difficulty of the problem: without information about future agents, it is difficult to decide which chores to assign to the current agent, especially when the chores are indistinguishable.
In fact, as in related work, all our hard instances are binary.

\subsection{Two Simple Hard Instances}

We begin with two lower bounds that illustrate the main difficulty of the online model.

\begin{theorem}
For every $n\ge2$, no online algorithm can achieve an expected competitive ratio strictly better than $1+\left(1-\frac1n\right)^n$.
\end{theorem}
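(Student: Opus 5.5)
The plan is to apply Yao's minimax principle. I will fix a probability distribution $\mathcal D$ over binary instances in which the adversary hides one distinguished chore, and show that every deterministic online algorithm has expected ratio at least $1+(1-\frac1n)^n$ against $\mathcal D$. By Yao's principle, every randomized algorithm then has some fixed oblivious instance with at least this expected ratio. Binary costs suffice, and by Lemma~\ref{lem:mms-lb} together with $\mu_i=\lceil |B_i|/n\rceil$, all MMS values in the construction are explicit (they will all equal $1$). The ratio is then either $1$ or at least $2$: it is at least $2$ exactly when some agent receives two chores she regards as costly. So it suffices to show that the ``bad'' event has probability at least $(1-\frac1n)^n$ under any deterministic algorithm.

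For the construction, I would take a pool of $n$ ``real'' chores plus as many dummy chores as needed, and choose a hidden chore $e^*$ uniformly from the pool. The supports are designed so that two conditions hold.
\begin{enumerate}[label=(\roman*)]
\item Before the hidden structure is revealed, each of the $n$ rounds looks identical regardless of the value of $e^*$. Hence a deterministic algorithm's choices in round $i$ are a fixed function of the past and cannot depend on $e^*$.
\item Keeping every agent at ratio $1$ requires each round to ``cover'' $e^*$ with only a $1/n$ chance, independently across the $n$ rounds. If $e^*$ survives all rounds uncovered, some agent (typically the last, who must absorb all remaining chores) is forced to take two costly chores.
\end{enumerate}
To obtain the independence behind the factor $(1-\frac1n)^n$, I would redraw the hidden chore afresh in each round, so that the adversary effectively presents $n$ rounds in which a uniformly random one of $n$ candidates is ``the dangerous one''. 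The algorithm survives only if it hits the dangerous candidate at least once, which fails with probability $(1-\frac1n)^n$ when each round can safely absorb at most one costly chore.

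The key steps, in order, are as follows. First, specify $B_1,\dots,B_n$ so that $|B_i|\le n$, which gives $\mu_i=1$. Second, argue that an agent receiving two chores from $B_i$ immediately forces ratio $\ge2$, so a ratio-$1$ algorithm may give each agent at most one chore from her support. Third, show that the information revealed up to round $i$ is independent of the hidden choice relevant to round $i$, so that the conditional probability of missing it is at least $1-\frac1n$. Fourth, multiply over the $n$ rounds and conclude that $\E[\rho]\ge 1\cdot\bigl(1-(1-\frac1n)^n\bigr)+2(1-\frac1n)^n$.

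The main obstacle is the first step: designing supports that simultaneously keep every MMS at $1$, reveal nothing useful about future hidden chores, and make a miss in every one of the $n$ rounds actually force some later agent to receive two costly chores. The naive design, in which all agents share the same $n$ costly chores, fails because the algorithm can give one chore to each agent. I would therefore try nested supports, for example letting the last agent's support be the hidden set accumulated over all rounds, and then check that a uniform choice in every round is the best a deterministic algorithm can do. The check would proceed by a symmetry argument over permutations of the pool.
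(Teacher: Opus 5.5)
\textbf{There is a genuine gap.} It lies in the step you flagged as the main obstacle, and it cannot be repaired within your framework: no instance in which every agent has MMS $1$ can be hard.

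If $\mu_1=1$, i.e.\ $1\le|B_1|\le n$, the following deterministic online rule always achieves ratio $1$: give each arriving agent $i$ all of $R_i\setminus B_i$ (free to her), plus one chore of $R_i\cap B_i$ if that set is nonempty. After round $1$, every chore outside $B_1$ is gone, including all of your dummy chores, because agent $1$ absorbed them at zero cost. So at most $|B_1|-1\le n-1$ chores remain. Afterwards, every agent removes at least one chore whenever any remain, so $|R_i|\le n-i+1$ before round $i$, and agent $n$ faces at most one chore. Hence every agent pays at most $1\le\mu_i$.

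Your step (ii), that each round covers the dangerous chore with probability only $1/n$ and can safely absorb at most one costly chore, overlooks two things. An agent with a small support can be handed arbitrarily many chores outside it. And the pool then shrinks by one chore per round. No choice of nested supports, redrawn hidden chores, or symmetry argument changes this, so your lower bound collapses to $1$.

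\textbf{The missing idea is to give the first agent a large MMS.} In the paper, $|M|=ns$, agent $1$ has support $M$ and MMS $s$, and a uniformly random $n$-set $C$ is the common support (MMS $1$) of agents $2,\ldots,n$. Agent $1$ can take up to $s$ chores at ratio $1$, but she cannot tell which of them lie in $C$. If her bundle misses $C$, then $n$ costly chores remain for $n-1$ agents, and the ratio is at least $2$.

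The factor $(1-\frac1n)^n$ is therefore not a product over $n$ arrival rounds. It is the limit, as $s\to\infty$, of $\binom{ns-s}{n}/\binom{ns}{n}$, the probability that each of the $n$ hidden chores avoids a fixed $1/n$-fraction of the pool. These near-independent events are what you were looking for.

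Because $\mu_1=s$, your dichotomy ``ratio $1$ or at least $2$'' no longer holds. The algorithm may give agent $1$ some $q=xs$ chores with $1<x<2$, accepting ratio $x$ to lower the miss probability. You must show that $x+(2-x)(1-\frac xn)^n$ is minimized at $x=1$ on $[1,2]$. This holds because its derivative $1-(1-\frac xn)^{n-1}(3-x-\frac xn)$ is nonnegative there. You also need that $1+(1-\frac xn)^n$ is decreasing on $[0,1]$. Only then does Yao's principle give the stated bound.
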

\begin{proof}
Fix a positive integer $s$ and let $M$ contain $n\cdot s$ chores. Agent~1 regards every chore as costly, so her MMS is $s$. Before the algorithm starts, choose a uniformly random item subset $C\subseteq M$ of size $n$, and let every later agent have support $C$ and MMS one.
Fix a deterministic algorithm and let $X_1\subseteq M$ be the bundle assigned to agent $1$. Let $q=|X_1|$. If $q\le s$, then the MMS approximation ratio of the first agent is $1$. However, if $X_1$ is disjoint from $C$, then $n$ costly chores remain for $n-1$ agents, forcing a competitive ratio of at least $2$. Therefore, the expected ratio of the algorithm is at least $1+{\binom{ns-q}{n}}/{\binom{ns}{n}}$.
Similarly, if $s<q<2s$, then the ratio of the first agent is $x=q/s$. The overall ratio is at least $x$ if $X_1\cap C\neq\varnothing$ and at least $2$ otherwise. Hence the expected ratio of the algorithm is at least
\begin{equation*}
  x+(2-x)\cdot {\binom{ns-q}{n}}/{\binom{ns}{n}}.
\end{equation*}

Note that $q\ge2s$ already gives a ratio at least two. As $s\to\infty$, these expressions converge uniformly to $1+(1-x/n)^n$ for $0\le x\le1$ and to $x+(2-x)(1-x/n)^n$ for $1\le x\le2$. The first expression decreases, and the derivative of the second is
\begin{equation*}
  1-\left(1-\frac xn\right)^{n-1}\left(3-x-\frac xn\right),
\end{equation*}
which is nonnegative on $[1,2]$. The common minimum is attained at $x=1$ and equals $1+(1-1/n)^n$. Since $C$ is chosen uniformly at random, the lower bound follows immediately by Yao's principle.
\end{proof}

For $n=2$, the construction gives $5/4$; for $n=3$, it gives $35/27\approx 1.296$. The hard instance suggests the importance of randomizing the bundle assigned to the first agent: on similar instances, deterministic algorithms can only guarantee an $\Omega(n)$-competitive ratio~\cite{li2026arrivals}.

As $n\to\infty$, the above hard instance gives a lower bound of $1+1/e\approx 1.368$. We next present a second construction that gives a lower bound of $1.5$ as $n\to \infty$.

\begin{theorem}\label{thm:binary-hard-instance}
Let $n\ge4$. For every randomized online algorithm, there exists a fixed binary instance on $2n$ chores for which
\begin{equation*}
  \E[\rho(X)] \geq \frac32-\frac{1+o(1)}{\sqrt{\pi n}}.
\end{equation*}
\end{theorem}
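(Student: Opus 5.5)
We would prove the bound by Yao's principle applied to a random binary instance. We fix a deterministic online algorithm and lower bound its expected ratio over a random instance. Take $n$ even first and set $k=n/2$. Let $M$ consist of $2n$ chores, and choose a uniformly random subset $C\subseteq M$ with $|C|=n$. Agents $1,\dots,k$ regard every chore as costly, so $B_i=M$ and $\mu_i=\lceil 2n/n\rceil=2$. Agents $k+1,\dots,n$ all have support $C$, so their MMS is $\lceil n/n\rceil=1$. The first $k$ agents' cost functions do not depend on $C$. Hence the bundles $X_1,\dots,X_k$ the algorithm assigns to them are fixed sets, independent of $C$.

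We would then do a case analysis on the sizes $|X_1|,\dots,|X_k|$.

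\textbf{Case 1: some $|X_i|\ge 3$ with $i\le k$.} Then $\rho(X)\ge 3/2$ deterministically.

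\textbf{Case 2: $|X_i|\le 2$ for all $i\le k$.} Let $S=X_1\cup\dots\cup X_k$, so $|S|\le 2k=n$, and let $Y=|S\cap C|$. The remaining $n-k$ agents must absorb $n-Y$ chores from their common support $C$. If $n-Y>n-k$, that is, $Y<k$, some later agent receives at least two costly chores and has ratio $2$. Otherwise $\rho(X)\ge1$.

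Since $C$ is uniform and $S$ is fixed, $Y$ is hypergeometric. The probability $\Prb[Y<k]$ only increases when $|S|$ shrinks: by coupling, add arbitrary extra chores to $S$ until $|S|=n$. So it suffices to treat $|S|=n$. In that case $Y$ is hypergeometric with population $2n$, $n$ marked items and $n$ draws, hence mean $k=n/2$. The law of $Y$ is symmetric about $k$ via $Y\mapsto n-Y$, since $|S\cap C|+|S\cap C^c|=n$ and $|C^c|=n$. Therefore
\begin{equation*}
\Prb[Y<k]=\frac{1-\Prb[Y=k]}{2}, \qquad \Prb[Y=k]=\binom{n}{n/2}^2\Big/\binom{2n}{n}.
\end{equation*}
In Case 2 this gives $\E[\rho(X)]\ge 1+\Prb[Y<k]=\frac32-\frac12\Prb[Y=k]$. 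The same bound holds in Case 1 because $3/2$ dominates.

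The remaining work is the asymptotics. Stirling's formula gives $\binom{n}{n/2}\sim 2^n\sqrt{2/(\pi n)}$ and $\binom{2n}{n}\sim 4^n/\sqrt{\pi n}$. Hence $\Prb[Y=k]=(2+o(1))/\sqrt{\pi n}$, and half of it is exactly the claimed $(1+o(1))/\sqrt{\pi n}$.

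For odd $n$ we would take $k=(n-1)/2$ and redo the same computation with $|S|\le n-1$. The threshold event $Y<k$ then has probability at least $\frac12-O(1/\sqrt n)$, and we would check that the leading constant still matches, possibly by shifting the threshold by one chore. Yao's principle then yields, for every randomized algorithm, a fixed $C$ realizing the bound.

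We expect the main obstacle to be the rigorous reduction from $|S|<n$ (and from uneven bundle sizes) to the balanced case. That is, we must show that having fewer chores allocated early can only make $Y<k$ more likely. We would try the monotone coupling described above first. Getting the exact constant for odd $n$ is a secondary issue.
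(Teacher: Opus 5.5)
Your argument is the paper's proof. It uses the same hidden-support instance and the same split on whether an early agent receives three chores. It extends the early union to a fixed set of size $2\lfloor n/2\rfloor$, then applies hypergeometric symmetry and Stirling. The extension step you call the main obstacle is just the pointwise inclusion $|S\cap C|\le|S'\cap C|$ for $S\subseteq S'$, so nothing more is needed there.

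The one unfinished piece is odd $n$, and it closes without any modification. The paper simply uses $t=\lfloor n/2\rfloor$ throughout. Exactly half of the $2n$ chores lie in $C$, and the complement of $C$ is also a uniform $n$-subset. Hence, for $|S'|=2t=n-1$, the count $Y=|S'\cap C|$ has the same law as $2t-Y$, i.e., it is still symmetric about $t$. This gives $\Prb[Y<t]=\frac12-\frac12\binom{n}{t}^2/\binom{2n}{2t}$, and Stirling yields $\binom{n}{t}^2/\binom{2n}{2t}=(2+o(1))/\sqrt{\pi n}$ exactly as in the even case.

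Do not shift the threshold. On the event $\{Y\le t\}$, possibly only $n-t$ chores of $C$ remain for the $n-t$ late agents, so the pigeonhole step forcing ratio $2$ would fail.
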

\begin{proof}
Set $t:=\lfloor n/2\rfloor$. Let $|M|=2n$ and let $C\subseteq M$ be a uniformly random item subset of size $n$. The first $t$ agents have support $M$ and MMS two, whereas agents $t+1,\ldots,n$ have support $C$ and MMS one.
Fix a deterministic online algorithm. If any of the first $t$ agents receives at least three chores, then the realized ratio is already at least $3/2$. We may therefore consider the complementary case, in which their union $U$ has size at most $2t$. Extend $U$ to an arbitrary fixed item set $U'\subseteq M$ of size $2t$. Since $U\subseteq U'$, we have $|C\cap U|\le |C\cap U'|$. Hence, on the event $|C\cap U'|<t$, fewer than $t$ chores of $C$ have been allocated to the first $t$ agents. More than $n-t$ chores of $C$ are then left for the remaining $n-t$ agents. Every one of these chores costs one to every remaining agent, so some remaining agent receives at least two costly chores, and the realized ratio is at least two.
Consequently,
$\E[\rho(X)] \ge 1+\Prb[|C\cap U'|<t]$.
The random variable $|C\cap U'|$ is hypergeometric with population size $2n$, $n$ marked elements, and sample size $2t$. Its distribution is symmetric about $t$. Therefore,
\begin{equation*}
  \Prb[|C\cap U'|<t]
  =\frac12-\frac12\cdot \Prb[|C\cap U'|=t]
  =\frac12-\frac12\cdot \frac{\binom nt^2}{\binom{2n}{2t}},
\end{equation*}
and thus
\begin{equation*}
  \E_C[\rho(X)]
  \ge \frac32-\frac12\frac{\binom nt^2}{\binom{2n}{2t}}
  = \frac32-\frac{1+o(1)}{\sqrt{\pi n}},
\end{equation*}
where the last inequality follows by Stirling's formula.

The same argument applies after conditioning on any random seed of a randomized algorithm. Averaging first over the seed and then over the random choice of $C$, some fixed item set $C$ satisfies the same lower bound. Hence the hard instance is fixed before the algorithm draws its random coins and is valid against an oblivious adversary.
\end{proof}

\paragraph{Observations and Ideas.}
As the example above suggests, the best possible expected ratio is asymptotically at least $1.5$. 
A natural approach is therefore to set $1.5$ as the target ratio, assign each arriving agent $i$ roughly $1.5\mu_i$ random costly chores in addition to all zero-cost chores, and hope that few chores remain when the last agent arrives.
Unfortunately, the expected-ratio bound obtained from this rule is $1.755$, which does not approach the asymptotically optimal ratio.
The main reason is that, in ``unlucky'' runs (which happens with constant probability), agent $n$ may incur a large MMS ratio, whereas in ``lucky'' runs, even if all chores are allocated early, the resulting approximation ratio remains $1.5$.
This can keep the expected ratio strictly larger than $1.5$.
To address this issue, we use an aggressive rule in the early stage of the algorithm, assigning each agent only $\mu_i$ costly chores, and then switch to a cautious rule if many chores remain.
We let $T\in\{0,1,\ldots,n\}$ denote the number of initial aggressive rounds. Thus, round $T+1$ is the first cautious round.
However, we notice that a fixed switching point may be vulnerable: if the core becomes small shortly after the cautious phase begins, agent $T+1$ already incurs a ratio of $3/2$ while still leaving too many chores for the remaining agents. The construction above illustrates why leaving even a few additional chores in a hidden support can raise a later agent's ratio to two.
We therefore draw $T$ from a carefully designed distribution, which may depend on $n$, before any agent arrives.
Randomizing $T$ makes the algorithm more robust against the oblivious adversary and gives an upper bound of $1.5+o(1)$ on the expected competitive ratio.

\subsection{The Online Algorithm}

Let $T\in\{0,1,\ldots,n\}$ be the number of initial aggressive rounds, drawn from a distribution $\pi$, which may depend on $n$ and will be fixed later.
We introduce the \emph{core}, the set of chores that are costly to every agent who has arrived. 
For each $i\in[n]$, let $C_i:=\bigcap_{j=1}^i B_j$, where $C_0:=M$.
A round has a \emph{large core} if $|C_i|>n$ and a \emph{small core} if $|C_i|\le n$.
Note that the cores depend only on the online instance and are independent of the algorithm's random choices.
The algorithm updates the core after each arrival and uses it to make allocation decisions.
See Algorithm~\ref{alg:binary-core-insurance}.
In particular, we use $\kappa_i = \lceil |C_i|/n \rceil \leq \mu_i$ as the benchmark allocation quota.\footnote{The same analysis applies if we replace $\kappa_i$ by $\mu_i$ in the algorithm. We use $\kappa_i$ instead of $\mu_i$ only for analytical convenience.
While in general we could have $\kappa_i < \mu_i$, as in the hard instances above, the instances follow a nested support structure, for which $C_i = B_i$ and thus $\kappa_i = \mu_i$.} 
Whenever the core is large, rounds $1,\ldots,T$ use the aggressive quota $\kappa_i$, and round $T+1$ is the first round that uses the cautious quota about $1.5 \kappa_i$.
Thus $T=0$ means that the algorithm is cautious from the beginning, whereas $T=n$ makes every round aggressive.
Once the core becomes small, it suffices to partition the remaining chores as evenly as possible.

\begin{definition}[Transition index]
For a fixed binary arrival sequence, define
$h:=\max\{i\in\{0,1,\ldots,n\}:|C_i|>n\}$.
\end{definition}

Thus rounds $1,\ldots,h$ have large cores. If $h<n$, round $h+1$ is the first round with a small core. The value of $h$ depends only on the cost functions and is independent of the algorithm's random choices.
We can assume w.l.o.g. that $h\geq 1$, as otherwise $|B_1|\leq n$, and the algorithm can guarantee an MMS allocation by assigning one costly item to every future agent.
The transition index $h$ naturally partitions the algorithm into $h$ large-core rounds and $n-h$ small-core rounds.
If $T<h$, then rounds $1,\ldots,T$ are aggressive (the arriving agent receives only $\kappa_i$ costly chores), while rounds $T+1,\ldots,h$ are cautious (the arriving agent receives about $1.5\kappa_i$ costly chores).
If $T\ge h$, all large-core rounds are aggressive.
The algorithm behaves differently in these phases, and different choices of $T$ yield different competitive-ratio guarantees.

\begin{figure}[H]
  \centering
  \begin{tikzpicture}[x=1cm,y=1cm,>=Stealth,font=\small]
    \draw[->,line width=0.8pt] (0,0) -- (12.7,0) node[right] {$i$};
    \foreach \x in {0,4,8,12} {\draw[line width=0.8pt] (\x,0.12) -- (\x,-0.12);}
    \node[below=3pt] at (0,0) {$1$};
    \node[below=3pt] at (4,0) {$T$};
    \node[below=3pt] at (8,0) {$h$};
    \node[below=3pt] at (12,0) {$n$};
    \node[above=7pt] at (2,0) {$q_i=\kappa_i$};
    \node[above=7pt] at (6,0) {$q_i=\lfloor 3\kappa_i/2\rfloor$};
    \node[above=7pt] at (10,0) {$q_i=\left\lceil |R|/(n-i+1)\right\rceil$};
    \node at (2,-0.85) {aggressive rounds};
    \node at (6,-0.85) {cautious rounds};
    \node at (10,-0.85) {even partitioning};
  \end{tikzpicture}
  \caption{Quota schedule when $T<h<n$.}
  \label{fig:binary-quota-schedule}
\end{figure}

\begin{algorithm}[htp]
\caption{$\binaryalg(\pi)$}
\label{alg:binary-core-insurance}
\KwIn{$n$, $M$, a distribution $\pi$.}
\KwOut{A random complete allocation $X=(X_1,\ldots,X_n)$.}
Draw $T\sim\pi$. Set $R\leftarrow M$ and $C \gets M$\;
\For{$i=1,\ldots,n$}{
  Let $B_i$ be the reported costly-item set\;
  Compute $C\gets C\cap B_i$ and $\kappa_i \leftarrow\lceil|C|/n\rceil$\;
  Set $X_i \gets R\setminus B_i$ and $R\leftarrow R\cap B_i$\;
  Set the assignment quota to $q_i:=
          \begin{cases}
            \kappa_i,
              &|C| > n,\ i\le T,\\
            \lfloor3\kappa_i/2\rfloor,
              &|C| > n,\ i>T,\\
            \left\lceil \frac{|R|}{n-i+1}\right\rceil,
              &|C| \leq n.
          \end{cases}$ \\
  Let $Q_i\subseteq R$ be a uniformly random subset of size $\min\{|R|,q_i\}$\;
  Update $X_i \gets X_i\cup Q_i$ and set $R\leftarrow R\setminus Q_i$\;
  \If{$i=n$}{Give every chore still in $R$ to agent $n$ and set $R\leftarrow\varnothing$; \quad  \tcp{the final cleanup}}
}
\Return{$X$}\;
\end{algorithm}

For example, setting $T=n$ makes the algorithm aggressive in every large-core round.
The risk is that many chores may remain for the last agent, leading to a large approximation ratio.
As we will show, this choice can be preferable for small $n$ (see Theorem~\ref{thm:binary-all-agreesive}), because there are relatively few rounds in which errors can accumulate.
At the other extreme, setting $T=0$ makes the algorithm cautious from the beginning.
Under this rule, the probability that many chores remain for the last agent is very small.
This choice is particularly useful when seeking a competitive-ratio guarantee that holds with high probability (w.h.p.), as shown in the high-probability analysis below.
The disadvantage is that the resulting expected-ratio bound does not approach $1.5$ (see Theorem~\ref{thm:binary-immediate}).

\begin{theorem} \label{thm:binary-all-agreesive}
If $T=n$, then the expected ratios for $n=2,3,4$ satisfy, respectively,
\begin{equation*}
  \E[\rho(X)]\le 1.25,
  \qquad
  \E[\rho(X)]\le 1.2985,
  \qquad
  \E[\rho(X)]\le 1.6722.
\end{equation*}
\end{theorem}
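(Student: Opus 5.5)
With $T=n$, every large-core round uses the quota $q_i=\kappa_i$, and small-core rounds partition the remaining pool evenly. The plan is to bound $\rho(X)$ pathwise by a quantity that depends only on how many core chores are still unallocated when the core first becomes small. After that, the claim reduces to a finite hypergeometric computation for $n=2,3,4$, followed by a worst case over instances.

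\textbf{Step 1: per-agent bounds.} The cost of agent $i$ is $|Q_i|$, since $X_i\setminus Q_i$ consists of zero-cost chores. This holds for every $i<n$. For agent $n$ there is also the cleanup, but the cleanup is empty whenever $q_n=\lceil|R|/1\rceil$.

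For large-core rounds, $|Q_i|\le\kappa_i\le\mu_i$, so these agents have ratio at most $1$.

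For small-core rounds $i>h$, the pool $R$ after removing zero-cost chores satisfies $R\subseteq B_1\cap\dots\cap B_i=C_i$, because every chore in $R$ was costly to all agents so far. Hence $|R|\le |C_i|\le n$. Let $r$ be the size of the remaining pool $R_h$ after round $h$. Write $|R\cap C_{h+1}|\le \min(r,|C_{h+1}|)$. Even partitioning then gives each later agent at most $\lceil r'/(n-h)\rceil$ chores, where $r'=|R_h\cap C_{h+1}|$, and every such agent has $\mu_i\ge1$.

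So the ratio exceeds $1$ only if $r'>n-h$. In that case it is at most $\lceil r'/(n-h)\rceil$. Also $r'\le |C_h|$, and $r'\le n$.

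\textbf{Step 2: the distribution of $r'$.} Each aggressive round $i\le h$ removes a uniformly random subset of size $\kappa_i$ from the current costly pool. I would condition on the supports and track $|R\cap C_{h+1}|$ round by round. Restricted to $C_{h+1}$, each round is a hypergeometric draw: with $|R_{i-1}|$ total chores and $x$ of them in $C_{h+1}$, drawing $\kappa_i$ removes a hypergeometric number of them.

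For $n\le4$ the quantities $|C_i|$, $\kappa_i$ and $h\le n-1$ take few relevant values. The adversary's only freedom is the sequence of core sizes and the size of $C_{h+1}$. The expected ratio is $1+\sum_k \Prb[\text{$r'$ forces ratio }k]\cdot(k-1)$.

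\textbf{Step 3: worst case over $n=2,3,4$.} For $n=2$ the only case is $h=1$, $|C_2|\le2$. Agent 1 takes $\kappa_1=\lceil|B_1|/2\rceil$ random chores, and agent 2's ratio is $2$ iff both core chores survive. That probability is at most $\binom{m'-2}{\kappa}/\binom{m'}{\kappa}\le 1/4$ in the worst case as $m'\to\infty$ with $\kappa=m'/2$, giving $1.25$, which matches the lower bound. For $n=3,4$ I would enumerate $h\in\{1,\dots,n-1\}$ and the core profile $|C_1|\ge\dots\ge|C_h|>n\ge|C_{h+1}|$, which can be taken equal to $n$ by monotonicity. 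I would then argue that the failure probability is monotone in the large-core sizes, so that the supremum is the limit where $|C_i|/n\to$ continuous proportions. Taking that limit turns the hypergeometric draws into independent thinning with retention $1-\kappa_i/|R|$, and the bound becomes an explicit optimization over a few real parameters that can be evaluated numerically, yielding $1.2985$ and $1.6722$.

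\textbf{Main obstacle.} The hard part is this reduction to a finite optimization: showing that rounding effects ($\lceil\cdot\rceil$ on $\kappa_i$, finite pool sizes) only help the algorithm, so that the continuous limit is the worst case. I would first attempt a coupling argument: enlarging core sizes while keeping the proportions fixed makes $r'$ stochastically larger. The numerical maximization over proportions for $n=4$ would then be carried out by direct case analysis.
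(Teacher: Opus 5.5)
\paragraph{The continuous-limit step is false.} The step you flag as the main obstacle is not merely unproven. For $n=3$ the worst case is attained at a \emph{finite} core size, and finite-pool effects hurt the algorithm rather than help it.

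Take $B_1=M$ with $|M|\to\infty$, $|B_2|=k>3$, and $B_3\subseteq B_2$ with $|B_3|=3$, so $h=2$. The number $L$ of chores of $C_2$ surviving round~1 converges in law to $\operatorname{Bin}(k,2/3)$. Agent~2 then removes $\lceil k/3\rceil$ of them. If $\ell=(L-\lceil k/3\rceil)^+$ remain, agent~3's expected excess is $\E[(Z-1)^+\mid \ell]=\ell(\ell-1)(3k-4-\ell)/\bigl(k(k-1)(k-2)\bigr)$.
\begin{itemize}
\item For $k=6$ this averages to $1088/3645\approx0.29849$.
\item For $k=12$ it averages to about $0.2974$.
\item Your continuous limit (independent thinning with retention $2/3$ and then $1/2$) gives $8/27\approx0.29630$, the same value as for $h=1$.
\end{itemize}
Scaling the cores up with fixed proportions removes the fluctuation of the surviving fraction, which the convex payoff penalizes. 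So no coupling making $r'$ stochastically larger can exist. Your limiting optimization would return $35/27$ for $n=3$, a value this instance exceeds. The constant $1.2985$ is exactly $1+1088/3645$, attained at $k=6$.

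You therefore have to keep finite core sizes. For $n=3$ the paper proceeds as follows:
\begin{itemize}
\item it writes the final payoff exactly as above;
\item it notes that this payoff is increasing and convex in the number of chores left by agent~1;
\item it replaces that hypergeometric count by $\operatorname{Bin}(k,2/3)$ via Hoeffding's convex comparison;
\item it evaluates all $k$ by residue class mod~$3$.
\end{itemize}
For $n=4$ the paper does no worst-case optimization at all. Lemma~\ref{lem:binary-convex} with $\phi(w)=(h-w)^+$ and $\E[K]\ge h$ gives $1+\max_{h\le3}\E[(h-\operatorname{Pois}(h))^+]=1+27/(2e^3)$, and that is where $1.6722$ comes from. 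It is a Poisson relaxation, not the value of an exact worst-case computation.

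\paragraph{The case $h=n$ is missing.} Your Step~1 claim that every large-core agent has ratio at most one fails for agent $n$ when $|C_n|>n$. In that case $q_n=\kappa_n$, and the final cleanup hands her every surviving chore of $C_n$.

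For example, take $n=2$, $B_1=M$ large and $|B_2|=4$. Then $h=2$ and $\mu_2=2$. With positive probability three or four chores of $B_2$ survive round~1, giving agent~2 ratio $3/2$ or $2$. So ``for $n=2$ the only case is $h=1$'' is wrong. Because $|C_n|$ is unbounded here, your enumeration over $h\le n-1$ with $|C_{h+1}|\le n$ does not cover this case.

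The paper handles it by partitioning the final core into $\kappa_n$ blocks of size at most $n$. For $n=2$ it partitions $B_1\cap B_2$ into $\mu_2$ blocks of size at most two. It then uses $\rho(X)\le 1+\frac{1}{\kappa_n}\sum_b (Z_b-1)^+$, where $Z_b$ is the number of survivors of block $b$, and bounds each block exactly as a small final core.
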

\begin{proof}
The three direct calculations are given in
Appendix~\ref{app:binary-parameters}.
\end{proof}

\begin{theorem}\label{thm:binary-immediate}
If $T=0$, then for every $n$, there is a randomized online algorithm satisfying
\begin{equation*}
  \E[\rho(X)]<1.755.
\end{equation*}
\end{theorem}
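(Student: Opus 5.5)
We analyze Algorithm~\ref{alg:binary-core-insurance} with $T=0$: every large-core round $i\le h$ assigns $q_i=\lfloor 3\kappa_i/2\rfloor$ random costly chores, and small-core rounds partition the remainder evenly. The phrase ``there is a randomized online algorithm'' lets us cover small $n$ separately. For small $n$ we invoke Theorem~\ref{thm:binary-all-agreesive} ($n\le 4$) or a direct calculation. For large $n$ we bound $\E[\rho(X)]$ for the $T=0$ rule. The bound decomposes by round type.

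\textbf{Large-core rounds.} In a large-core round $i\le h$, agent $i$ receives only zero-cost chores plus $Q_i$. Since $|C_i|>n$, we have $\kappa_i\ge 2$ and $|Q_i|\le \lfloor 3\kappa_i/2\rfloor\le \tfrac32\mu_i$, so her ratio is at most $3/2$ deterministically.

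\textbf{Small-core rounds.} For a small-core agent $j>h$, the only chores costly to $j$ are in $R\cap B_j$. After round $h$ we have $R\subseteq C_h$, and from then on $R$ shrinks to $R\cap B_j$. Every agent $j>h$ receives at most $\lceil |R_h|/(n-h)\rceil$ costly chores, up to the usual rounding of even partitioning, and $\mu_j\ge 1$. The bad event is therefore that $|R_h\cap C_{j}|$, i.e.\ the number of chores of the eventual small core left unallocated, exceeds $n-h$. Agents with small core might still have $\mu_j>1$; I would handle this by charging against $\kappa$ and monotonicity of $R$.

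\textbf{Main obstacle.} The key step is to show that $|R_h\cap C_{h+1}|$ stays below roughly $k(n-h)$ for ratio $k+1$, with a tail decaying fast enough that $\sum_k \Prb[\text{ratio}\ge k]$ gives a constant below $1.755$. Fix the final small core $D=C_{h+1}$ with $|D|\le n$. In each large-core round $i$, $Q_i$ is uniform of size about $\tfrac32\kappa_i$ from $R$, and $|R|\le |C_{i}|+$ (leftover of earlier supports). Each chore of $D$ survives round $i$ with probability at most $1-q_i/|R_{i}|$.

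I would couple with an independent-survival process. Hoeffding's comparison for sampling without replacement, i.e.\ convex order, lets us replace $|D\cap R_h|$ by a sum of independent indicators with survival probability $\prod_i(1-q_i/|R_i|)$. The cautious factor $3/2$ ensures $|R_i|$ shrinks geometrically relative to $|C_i|$. One must show $|R_i|\le \tfrac{n-i+1}{n}\cdot$ const$\cdot |C_{i}|$-type invariants hold with high probability, via Chernoff. This yields $\E|D\cap R_h|\le \gamma (n-h)$ for some constant $\gamma<1$, roughly $e^{-3/2}$-flavoured.

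\textbf{Final optimization.} Combining the pieces,
\[
\E[\rho]\le \tfrac32+\sum_{k\ge 2}\Prb\bigl[|D\cap R_h|>(k-1)(n-h)\bigr]\cdot(\text{increment}),
\]
where the tail terms come from Chernoff bounds on the coupled sum. When $n-h$ is small, concentration fails. The worst case is essentially $h=n-1$ with a Poisson-like count of leftover chores for the last agent, and I expect the numeric constant $1.755$ to come from optimizing this Poisson tail, e.g.\ $\tfrac32+\Prb[\mathrm{Poi}(\lambda)\ge 2]$-type expressions. Making this worst case rigorous uniformly in $h$ and $n$ is the hard part.
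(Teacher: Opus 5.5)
Your proposal stops exactly where the proof has to happen. You leave uniform control of the leftover as ``the hard part,'' and the route you sketch for it does not work as stated.

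\textbf{The comparison step.} Hoeffding's comparison concerns one draw without replacement from a fixed population. It does not let you replace $|D\cap R_h|$ by independent indicators with survival probability $\prod_i(1-q_i/|R_i|)$. The $|R_i|$ are random and history-dependent, so this product does not define a fixed independent sum, and the invariant on $|R_i|$ that you would need in all rounds simultaneously is unproven. Suppose you instead use the deterministic bound $|R_i|\le|C_i|$ (valid because $R\subseteq C_i$ always: zero-cost chores are handed out every round). Then per-chore survival gives only $\E|D\cap R_h|\lesssim |D|e^{-4h/(3n)}$. For $|D|=n$ and $h\approx n$ this is $\Theta(n)$, while $n-h$ may be $1$. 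The estimate misses that drawing without replacement removes chores of $D$ at a \emph{linear} rate of about $4/3$ per round until $D$ is used up.

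The missing idea is Lemma~\ref{lem:binary-convex}:
\begin{itemize}
\item Fix a reference set $\widehat C$ with $C_{h+1}\subseteq\widehat C\subseteq C_h$ and $|\widehat C|=n$. Let $H$ be the number of its chores selected in rounds $1,\dots,h$.
\item Compare this \emph{count} with the number $K$ of with-replacement samples from the full cores $C_i$ that land in $\widehat C$, counted with multiplicity. Thus $K=\sum_{i\le h}\operatorname{Bin}(q_i,n/|C_i|)$.
\item Track the selected chores as uniformly placed marks through the nested cores. Each restriction is hypergeometric, hence dominated in convex order by binomial thinning, and binomial averaging preserves monotone convexity. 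This gives $\E[\phi(H)]\le\E[\phi(K)]\le\E[\phi(\operatorname{Pois}(\E[K]))]$ for every nonnegative, nonincreasing, convex $\phi$ with $\phi(n)=0$.
\item $\E[K]\ge 4h/3$ holds deterministically, because $\lfloor 3\kappa_i/2\rfloor\cdot n/|C_i|\ge 4/3$.
\end{itemize}
No information about $|R_i|$ is needed.

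\textbf{The worst case.} Your diagnosis of where the worst case lies is also off. Lemma~\ref{lem:binary-deterministic} together with $\lceil (h-H)/(n-h)\rceil\le h-H$ gives the single pointwise bound $\rho(X)\le\frac32+(h-\frac12-H)^+$. Hence $\E[\rho(X)]\le\frac32+s_h$ with $s_h=\E[(h-\frac12-\operatorname{Pois}(4h/3))^+]$. This is exponentially small in $h$, however small $n-h$ is. Concentration fails when the number $h$ of large-core rounds is small. The maximum is $s_6=22781/(30e^8)<0.25474$, and that is the source of $1.755$. It is not a $\Prb[\mathrm{Poi}(\lambda)\ge 2]$-type leftover for the last agent at $h=n-1$.

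Your intermediate claim $\E|D\cap R_h|\le\gamma(n-h)$ with a constant $\gamma<1$ also fails uniformly. For nested supports with $|C_{h+1}|=n$ and fixed $h$, only $O(h)$ chores of $D$ are removed in expectation, so the leftover is $n-O(h)$.

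\textbf{Two further holes.}
\begin{itemize}
\item When $h=n$, agent $n$ is in a large-core round but also receives the final cleanup, i.e.\ every unselected chore of $C_n$. Your deterministic $3/2$ bound therefore fails for her. The paper splits $C_n$ into $\kappa_n$ blocks of size at most $n$ and applies the same comparison to each block with $\ell=n-1$.
\item The statement is about the $T=0$ rule, so you cannot patch small $n$ with Theorem~\ref{thm:binary-all-agreesive}, which uses $T=n$. Nor is this necessary: $\frac32+\max_{0\le k\le n-1}s_k$ bounds the $T=0$ rule for every $n$. Your plan, by contrast, leaves every $n$ between $5$ and an unspecified threshold uncovered.
\end{itemize}
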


To balance between these two extremes, we choose a random switching time $T$ and obtain an expected ratio approaching $3/2$ as $n$ grows.

\begin{theorem}
\label{thm:binary-upper}
Suppose that $n\ge16$, draw $T$ uniformly from $[\lfloor n^{2/5}\rfloor]$. Then
\begin{equation*}
  \E[\rho(X)]\le\frac32+O(n^{-1/5}).
\end{equation*}
\end{theorem}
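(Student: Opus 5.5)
We will bound $\E[\rho(X)]$ by conditioning on $T$ and splitting according to the relation between $T$ and the (instance-determined) transition index $h$. The basic observation is that in every large-core round agent $i$ receives at most $\lfloor 3\kappa_i/2\rfloor\le \frac32\mu_i$ costly chores, and in aggressive rounds only $\kappa_i\le\mu_i$. So large-core rounds contribute ratio at most $3/2$, and any excess must come from the small-core phase $h+1,\ldots,n$ together with the final cleanup. In the small-core phase each agent receives $\lceil |R|/(n-i+1)\rceil$ costly chores from $R$; since $\mu_i\ge 1$, the ratio there exceeds $3/2$ only if the number of chores of $C_h$ (or of the later cores) still in $R$ at time $h$ exceeds roughly $\frac32(n-h)$, and exceeds $2$ only if it exceeds about $2(n-h)$. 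Thus the main task is to control $Y:=|R\cap C_{h}|$ after round $h$, and more generally the number of remaining chores costly to each later agent $j$, namely $|R\cap B_j|$.

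\textbf{Case $T\ge h$} (all large-core rounds aggressive). Every agent up to $h$ has ratio at most $1$. We would show that the later agents' ratios are at most $2$ deterministically up to lower-order terms, via a crude count: $|C_h|\le n\kappa_h$, and each aggressive round removes at least $\kappa_i\ge\kappa_h$ uniformly random chores of $R$. In this case we accept a ratio of at most $2+o(1)$, or possibly $n$ in a bad event whose probability we bound by concentration. Since $T$ is uniform on $[\lfloor n^{2/5}\rfloor]$, we have $\Prb[T\ge h]=O(n^{-2/5})\cdot(\text{something})$ only when $h$ is small. So we instead argue directly: when $h\le T\le n^{2/5}$, the random samples of the first $h$ rounds hit $C_h$ in expectation about $h\kappa$ times, leaving roughly $n\kappa-h\kappa$ chores for $n-h$ agents. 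The excess over $n-h$ is $O(h)$ in expectation, giving ratio $1+O(\text{fluctuation})$ which is $\le 3/2$ except with small probability. I would make this precise with the auxiliary with-replacement process and a Hoeffding comparison.

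\textbf{Case $T<h$.} Rounds $T+1,\ldots,h$ are cautious and remove about $\frac32$ times the fair share of the core in expectation, because a uniform sample from $R$ hits $C_i\cap R$ proportionally. Coupling with independent sampling and applying a Chernoff bound, the remaining core after round $h$ is at most $n-h$ except with probability $e^{-\Omega(\cdot)}$, provided the number of cautious rounds $h-T$ is large. The dangerous subcase is when $h-T$ is small, i.e.\ $h$ falls just after the random switch: agent $T+1$ already pays $3/2$ while too little slack has been built. Uniform randomization of $T$ over $n^{2/5}$ values makes $\Prb[h-T\le k]\le k/n^{2/5}$. We choose a threshold $k$ so that for $h-T>k$ the accumulated slack $\Theta(h-T)$ dominates the standard deviation $O(\sqrt{h})$, while the bad window costs $O(k n^{-2/5})$ in expectation, with ratio at most a constant there (at most $2+o(1)$ by a cruder concentration argument). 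Balancing $k\approx n^{1/5}$ against the $\sqrt{\cdot}$ fluctuations gives the $O(n^{-1/5})$ term. For $h>n^{2/5}$ we always have $T<h$, and after $T$ aggressive rounds the number of cautious rounds is at least $h-n^{2/5}$, which handles large $h$ similarly.

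\textbf{Main obstacle.} The key difficulty is the concentration step: the samples are without replacement and the sets $R$ and $C_i$ change adaptively, so $|R\cap C_h|$ is not a sum of independent terms. I would first try to dominate, in convex order, the count of core chores surviving each round by an independent-sampling process: each surviving core chore is removed in round $i$ with probability $q_i/|R_i|$, and $|R_i|$ is bounded via $|R_i|\le |B_i\cap R|$. Then Hoeffding's without-replacement comparison would give Chernoff-type tails. A second subtlety is that agents $j>h$ with $B_j\ne C_j$ have costly sets larger than the core. Since $\mu_j=\lceil |B_j|/n\rceil$, we would handle these by applying the same argument to $B_j$ in place of $C_h$.
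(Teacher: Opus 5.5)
Your treatment of the case $T<h$ broadly matches the paper: couple with independent sampling, use a Chernoff tail with slack $\Theta(h-T)$ against fluctuations $O(\sqrt h)$, and average over the uniform $T$ to get $O(1/\sqrt L)$. The case $T\ge h$, however, has a genuine gap.

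In the small-core phase the later agents may have $\mu_j=1$, and then their ratios are integers. Let $H$ count the chores of a size-$n$ reference set $\widehat C\supseteq C_{h+1}$ allocated in rounds $1,\dots,h$. If all later agents share a support of size $n$, the realized ratio is $1$ when $H\ge h$ and $2$ as soon as $H<h$, i.e., as soon as a single extra chore is left. There is no regime ``$1+O(\text{fluctuation})$'', and your threshold $\frac32(n-h)$ should be $n-h$.

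Nor is $\{T\ge h\}$ negligible: it has probability $(L-h+1)/L$, which equals one when $h=1$. Concretely, let agents $1,\dots,h$ with $h\le L/2$ share a support $S$ with $|S|=2n$, and let agents $h+1,\dots,n$ share $C\subset S$ with $|C|=n$. Conditional on $T\ge h$, rounds $1,\dots,h$ allocate a uniformly random $2h$-subset of $S$. Then $H$ is hypergeometric and symmetric about $h$, and $\Prb[\rho(X)=2\mid T\ge h]=\Prb[H<h]=\frac12-O(h^{-1/2})$. So your claim that the ratio is at most $3/2$ except with small probability is false. No Chernoff or Hoeffding bound can help, since here the auxiliary mean $\E[K]$ equals the threshold $h$ exactly.

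What the proof needs instead is $\E[\min\{\rho(X),2\}\mid T\ge h]\le 1+\Prb[H<h\mid T\ge h]\le\frac32+o(1)$. This is a median statement, not a tail bound. The paper obtains it in Lemma~\ref{lem:binary-base} in three steps. First, couple the actual selections with the with-replacement process so that every distinct sampled chore of $\widehat C$ is allocated. Second, apply the Jogdeo--Samuels theorem (an integer mean of a sum of independent Bernoulli variables is a median) to get $\Prb[K<h]\le\frac12$. Third, union-bound the probability of a repeated chore among the first $h$ hits by $h(h-1)/(2n)$.

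This is precisely why $T$ is capped at $L=\lfloor n^{2/5}\rfloor$. On $\{T\ge h\}$ one has $h\le L$, so the collision term is at most $L^2/(2n)=O(n^{-1/5})$, which balances the $O(1/\sqrt L)$ coming from $T<h$. Without this ingredient, your decomposition yields only $\E[\rho(X)]\le 2+o(1)$ on instances with small $h$.
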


\subsection{Independent-Sampling Comparison}

By the definition of the transition index $h$, for every large-core round $i\le h$, we have
\begin{equation*}
  \kappa_i:=\left\lceil\frac{|C_i|}{n}\right\rceil\geq 2,
  \quad
  \text{ and }
  \quad
  q_i = 
  \begin{cases}
    \kappa_i,&i\le T,\\
    \lfloor3\kappa_i/2\rfloor,&i>T.
  \end{cases}
\end{equation*}

Since $|C_i|\le n\cdot\kappa_i$ and $\kappa_i\ge2$, the base quota and the additional cautious quota satisfy
\begin{equation*}
  \kappa_i\cdot\frac{n}{|C_i|}\ge1,
  \qquad
  \left\lfloor\frac{\kappa_i}{2}\right\rfloor\cdot\frac{n}{|C_i|}\ge\frac{\lfloor\kappa_i/2\rfloor}{\kappa_i}\ge\frac13.
\end{equation*}

By definition, $C_{h+1}$ is the first core that is at most $n$.
Therefore, it is important to measure how many elements in $C_{h+1}$ have been allocated in the first $h$ rounds.
For ease of analysis, we fix a superset of $C_{h+1}$ of size $n$, and bound the number of allocated items in this superset.
In some sense, this allows us to assume w.l.o.g. that $|C_{h+1}|=n$.

\begin{definition}[Reference item set]
    When $1\le h<n$, fix an item set $\widehat C$ satisfying
    \begin{equation*}
      C_{h+1}\subseteq\widehat C\subseteq C_h,
      \qquad |\widehat C|=n.
    \end{equation*}
    We call $\widehat C$ the \emph{reference item set} and let $H$ be the number of chores of $\widehat C$ assigned during rounds $1,\ldots,h$. This definition also applies when $C_{h+1}=\varnothing$.
\end{definition}

\begin{lemma}\label{lem:binary-deterministic}
Before the final cleanup step in line 10 of Algorithm~\ref{alg:binary-core-insurance}, for every $i\leq h$, agent $i$ has an MMS approximation ratio at most one when $i\le T$ and at most $3/2$ when $i>T$. If $h<n$, every agent $i\ge h+1$ has ratio at most $\max\left\{1,\left\lceil\frac{n-H}{n-h}\right\rceil\right\}$.
\end{lemma}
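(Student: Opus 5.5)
The plan is to read off each agent's realized cost directly from the algorithm, since before the final cleanup agent $i$ receives $X_i=(R\setminus B_i)\cup Q_i$. The first part costs her nothing. Because $Q_i\subseteq R\cap B_i$, every chore of $Q_i$ costs exactly one. Hence her cost is $|Q_i|\le q_i$, and it only remains to compare $q_i$ with $\mu_i$.

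\textbf{Large-core rounds.} For $i\le h$ the core is large, so $q_i=\kappa_i$ when $i\le T$ and $q_i=\lfloor 3\kappa_i/2\rfloor$ when $i>T$. Since $C_i\subseteq B_i$, we have
\[
\kappa_i=\lceil |C_i|/n\rceil\le\lceil |B_i|/n\rceil=\mu_i .
\]
Thus $|Q_i|\le\mu_i$ in aggressive rounds and $|Q_i|\le\lfloor 3\mu_i/2\rfloor\le\tfrac32\mu_i$ in cautious rounds, which gives ratios at most $1$ and $3/2$ respectively.

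\textbf{Small-core rounds, setup.} Suppose $h<n$ and $i\ge h+1$. The set $R$ only shrinks, and after the step $R\leftarrow R\cap B_j$ in each round $j$ we have $R\subseteq B_j$. So just after that step in round $i$, the current remainder satisfies $R\subseteq\bigcap_{j\le i}B_j=C_i\subseteq C_{h+1}\subseteq\widehat C$.

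The chores of $\widehat C$ still unallocated after round $h$ number $n-H$. The remainder at round $h+1$ (after intersecting with $B_{h+1}$) is contained in these, so it has size at most $n-H$. Every later remainder is a subset of it, with the earlier rounds' $Q_j$ removed.

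\textbf{Small-core rounds, main step.} We need that each small-core quota $\lceil |R|/(n-i+1)\rceil$ is at most $\max\{1,\lceil (n-H)/(n-h)\rceil\}$. I would prove by induction on $i=h+1,\dots,n$ that $|R_i|\le k(n-i+1)$, where $R_i$ is the remainder at the time the quota is set and $k:=\lceil (n-H)/(n-h)\rceil$.

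\emph{Base case.} For $i=h+1$, $|R_{h+1}|\le n-H\le k(n-h)$.

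\emph{Inductive step.} Suppose $|R_i|\le k(n-i+1)$, so $q_i=\lceil |R_i|/(n-i+1)\rceil\le k$.
\begin{itemize}
\item If $|R_i|\ge q_i$, then $|Q_i|=q_i$. Since $q_i\ge |R_i|/(n-i+1)$, we get $|R_i|-q_i\le |R_i|(n-i)/(n-i+1)\le k(n-i)$.
\item Otherwise $R$ is emptied.
\end{itemize}
Intersecting with $B_{i+1}$ only shrinks $R$, so $|R_{i+1}|\le k(n-i)$.

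\textbf{Conclusion and the case $k=0$.} Each small-core agent pays $|Q_i|\le q_i\le k$ (or $\le 1$ if $k=0$ and some chore remains, though then $R=\varnothing$). Since $\mu_i\ge1$, her ratio is at most $\max\{1,k\}$. The only mildly delicate point is that $\widehat C$ controls the remainder only through $C_{h+1}\subseteq\widehat C$, which is exactly why it is defined as a superset. Before cleanup, agent $n$'s quota is $\lceil |R|\rceil=|R|$, which is covered by the same bound.
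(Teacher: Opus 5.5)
Your proposal is correct and follows the paper's proof. In large-core rounds you bound each agent's cost by her quota via $\kappa_i\le\mu_i$; in small-core rounds you use $C_{h+1}\subseteq\widehat C$ to leave at most $n-H$ costly chores and then induct on the ceiling-of-average quota, and your explicit invariant $|R_i|\le k(n-i+1)$ is a more careful write-up of the paper's induction.
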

\begin{proof}
After the zero-cost chores are assigned in round $i$, every remaining chore is costly to agent $i$. In a large-core round, the prescribed selection assigns her at most $\kappa_i$ such chores during the first $T$ rounds and at most $\lfloor3\kappa_i/2\rfloor$ from round $T+1$ onward. Since $\kappa_i\leq \mu_i$, her ratio is at most one or $3/2$, respectively.
If $h<n$, the algorithm enters the small-core phase in round $h+1$. Every remaining chore lies in $C_{h+1}\subseteq\widehat C$.
By definition, $H$ chores of $\widehat C$ have already been allocated. 
Thus, at most $n-H$ costly chores remain for the $n-h$ future agents. To prove the last statement, observe that agent $h+1$ receives at most $\lceil\frac{n-H}{n-h}\rceil$ costly chores. Moreover, because her quota is the ceiling of the average remaining load, at most $\lceil\frac{n-H}{n-h}\rceil\cdot(n-h-1)$ costly chores remain for the agents who have not arrived. The statement follows by induction.
\end{proof}

The key step in the analysis is to obtain a lower bound on $H$.
In fact, if we know the distribution of $H$, then we can fully characterize the distribution of $\rho(X)$.
Since $\widehat C$ is contained in every core $C_i$ for $i\leq h$, the random item subset allocated to agent $i$ may intersect $\widehat C$ and increase $H$.
However, selections across rounds are dependent: a chore selected in one round cannot be selected in a later round. This dependence makes the distribution of $H$ difficult to characterize.
We therefore introduce an auxiliary process with independent selections. A coupling argument is then used to show the stochastic dominance between the random processes and to provide the desired lower bound.

\paragraph{Independent-sampling Comparison.}
In every large-core round $i$, after the zero-cost chores have been removed, the algorithm selects distinct chores uniformly from the currently unallocated part of $C_i$. These selections are dependent across rounds because selected chores disappear and the cores are nested.
We compare them with the following auxiliary process. 
Consider the first $\ell$ large-core rounds and a nonempty item set $Q\subseteq C_\ell$. Conditional on the choice of $T$, let $q_i$ be the prescribed quota in round $i$, as defined in Algorithm~\ref{alg:binary-core-insurance}. Replace the actual selection in round $i$ with $q_i$ samples drawn independently and uniformly from the full core $C_i$, with replacement and independently across rounds. Each auxiliary sample in round $i$ hits $Q$ with probability $|Q|/|C_i|$. Let $K$ denote the total number of auxiliary sample occurrences that hit $Q$. Then
\begin{equation*}
  K\ \overset{d}{=}\ \sum_{i=1}^{\ell}\operatorname{Bin}\left(q_i,\frac{|Q|}{|C_i|}\right),
  \qquad
  \E[K]=\sum_{i=1}^{\ell}q_i\cdot\frac{|Q|}{|C_i|},
\end{equation*}
where the binomial variables are independent. See the figure below for an illustration. Notice that $K$ counts repeated occurrences: the same chore may contribute more than once. 

\begin{figure}
    \centering
    \begin{tikzpicture}[
      font=\small,
      interval/.style={line width=2.0pt, line cap=round, blue!68!black},
      omitted/.style={line width=1.4pt, line cap=round, blue!45!black, dashed},
      reference/.style={line width=2.4pt, line cap=round, orange!88!black},
      endpoint/.style={line width=0.9pt},
      samplein/.style={circle, fill=orange!90!black, draw=orange!90!black, inner sep=1.8pt},
      sampleout/.style={circle, fill=gray!65, draw=gray!65, inner sep=1.8pt},
      note/.style={draw=black!45, rounded corners=3pt, fill=white, align=left, inner sep=5pt},
      arrow/.style={-{Stealth[length=2.2mm]}, line width=0.8pt}
    ]
    
    \newcommand{\drawinterval}[5]{%
      \draw[#5] (#2,#4) -- (#3,#4);
      \draw[endpoint,#5] (#2,#4-0.10) -- (#2,#4+0.10);
      \draw[endpoint,#5] (#3,#4-0.10) -- (#3,#4+0.10);
      \node[anchor=east, text=blue!70!black] at (#2-0.25,#4) {#1};
    }
    
    \drawinterval{$C_1$}{-5.2}{5.2}{2.7}{interval}
    \drawinterval{$C_2$}{-4.6}{4.6}{2.0}{interval}
    \node[text=blue!50!black] at (-5.00,1.35) {$\vdots$};
    \drawinterval{$C_i$}{-3.7}{3.7}{0.8}{interval}
    \node[text=blue!50!black] at (-4.10,0.10) {$\vdots$};
    \drawinterval{$C_h$}{-2.7}{2.7}{-0.7}{interval}
    \drawinterval{$Q$}{-1.65}{1.65}{-1.55}{reference}
    
    \draw[orange!55, densely dotted] (-1.65,-1.43) -- (-1.65,0.70);
    \draw[orange!55, densely dotted] ( 1.65,-1.43) -- ( 1.65,0.70);
    
    \node[anchor=south west, text=black!75] at (-3.7,1.00) {Representative independent samples from $C_i$};
    \node[sampleout] at (-3.05,0.80) {};
    \node[samplein]  at (-1.05,0.80) {};
    \node[samplein]  at (-1.05,1.02) {}; 
    \node[samplein]  at ( 0.55,0.80) {};
    \node[sampleout] at ( 2.55,0.80) {};
    
    \node[note, right=1.00cm of {$(2.5,-0.5)$}, text width=5.1cm] (prob) {
      Each sample is uniform on $C_i$.\\[2pt]
      It lies in $Q$ with probability $\frac{|Q|}{|C_i|}$.
    };
    \draw[arrow, orange!80!black] (prob.west) -- (1.55,0.80);
    
    \node[samplein] at (-3.20,-2.35) {};
    \node[anchor=west, font=\scriptsize] at (-3.05,-2.35) {sample in $Q$};
    \node[sampleout] at (0.25,-2.35) {};
    \node[anchor=west, font=\scriptsize] at (0.40,-2.35) {sample in $C_i\setminus Q$};
    \end{tikzpicture}
    \caption{Illustration of the auxiliary random process.}
\end{figure}

The technical comparison between the actual without-replacement process and this auxiliary process is the key to the analysis.
Our analysis centers around a few key events.
For example, the event $T \geq h$ implies that the first $h$ rounds are all aggressive; the event $T < h$ implies that there are $T$ aggressive rounds and $h-T$ cautious rounds.
Another important event is $H \geq h$.
If this happens, then every agent $i\geq h+1$ has ratio at most $1$ (by Lemma~\ref{lem:binary-deterministic}), and we have $\rho(X)\leq 3/2$.
Therefore, a large part of our analysis focuses on upper bounding the probability of the event that $H < h$.

\begin{lemma}
\label{lem:binary-base}
Suppose that $h<n$, we have
\begin{equation*}
  \Prb[H<h\mid T\geq h]\le\frac12+\frac{h\cdot(h-1)}{2n}.
\end{equation*}
\end{lemma}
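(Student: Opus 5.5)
Throughout we condition on $T\ge h$, so every large-core round $i\le h$ is aggressive with quota $q_i=\kappa_i$. I would prove the bound in three steps. First, couple the actual without-replacement process with the auxiliary independent-sampling process for $\ell=h$ and $Q=\widehat C$. Second, lower bound $H$ by the number of \emph{distinct} elements of $\widehat C$ hit by the auxiliary samples. Third, split the failure event into ``too few auxiliary hits'' and ``a collision among the first $h$ hits''. These two pieces should give the $\tfrac12$ term and the birthday term $\tfrac{h(h-1)}{2n}$, respectively.

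\textbf{Step 1: coupling.} A chore that survives to round $i$ is costly to all agents $1,\dots,i$, so after the zero-cost chores are removed the remaining set $R$ is contained in $C_i$. The algorithm picks a uniformly random $\kappa_i$-subset of $R$. I would generate this selection one sample at a time. Draw an auxiliary sample $u$ uniformly from the full core $C_i$. If $u$ is still unallocated in the actual process, allocate it. Otherwise, allocate a fresh chore drawn uniformly from the unallocated part of $C_i$. In both branches the allocated chore is uniform on the currently unallocated chores, so the actual process has the correct law. Meanwhile the auxiliary draws are i.i.d.\ uniform on $C_i$, independent across rounds, exactly as in the auxiliary process. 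One point needs care: the algorithm caps the selection at $\min\{|R|,q_i\}$. I would check that the cap never binds while $|C_i|>n$ and $\kappa_i$ is small relative to $|C_i|$. Alternatively, if $R$ is exhausted, every chore of $\widehat C\subseteq C_i$ is already allocated, so $H=n\ge h$ and there is nothing to prove.

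\textbf{Step 2: $H$ dominates distinct auxiliary hits.} Every element $x\in\widehat C$ hit by some auxiliary sample is allocated in the actual process. Either it was unallocated at that moment and is allocated now, or it had already been allocated. Hence $H\ge D$, where $D$ is the number of distinct elements of $\widehat C$ hit by the auxiliary samples. The total hit count $K=\sum_{i\le h}\operatorname{Bin}(\kappa_i,n/|C_i|)$ has mean
\[
\E[K]=\sum_{i\le h}\kappa_i\cdot\frac{n}{|C_i|}\ \ge\ h,
\]
using the inequality $\kappa_i n/|C_i|\ge1$ derived above.

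\textbf{Step 3: splitting the failure event.} Order the auxiliary hits in $\widehat C$ chronologically. Conditional on which samples hit $\widehat C$, each hit is uniform on $\widehat C$, independently of the others, because a uniform sample from $C_i$ conditioned to land in $\widehat C\subseteq C_i$ is uniform on $\widehat C$. If $K\ge h$ and the first $h$ hits are pairwise distinct, then $D\ge h$. Therefore
\[
\Prb[H<h\mid T\ge h]\le \Prb[K<h]+\Prb[\text{the first } h \text{ hits collide}]\le \Prb[K<h]+\binom h2\frac1n,
\]
by a union bound over the $\binom h2$ pairs. For the first term, $K$ is a sum of independent Bernoulli variables with mean at least the integer $h$. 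By the classical median theorem for such sums (Jogdeo--Samuels), the median is at least $\lfloor\E[K]\rfloor\ge h$, so $\Prb[K<h]\le\tfrac12$. This gives the claimed bound.

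The main obstacle I expect is the median step. I need a clean, citable statement that a sum of independent Bernoullis with mean at least an integer $h$ satisfies $\Prb[K\ge h]\ge\tfrac12$. If the mean exceeds $h$, I would first thin the Bernoullis, lowering their success probabilities, so that the mean equals $h$ exactly; this only decreases $K$ stochastically. I would then apply the exact integer-mean median result. The coupling details in Step 1 (the cap $\min\{|R|,q_i\}$ and the resampling) are routine by comparison.
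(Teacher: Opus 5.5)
Your proposal is correct and follows the paper's proof essentially step for step: you couple with the independent-sampling process so that every distinct chore of $\widehat C$ hit by an auxiliary sample is allocated, bound the event $K<h$ by $\frac12$ by thinning to integer mean $h$ and applying the Jogdeo--Samuels median theorem, and union-bound collisions among the first $h$ hits. The only difference is that you build the coupling sample by sample, drawing a fresh unallocated chore whenever the auxiliary draw is already allocated, whereas the paper completes the set $G_i$ of unallocated sampled chores; both are valid. Note that the cap $\min\{|R|,q_i\}$ can bind in general, so you should rely on your fallback, where exhaustion forces $H=n>h$, rather than trying to show the cap never binds.
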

\begin{proof}
Condition on $T\ge h$ and apply the independent-sampling framework with $\ell=h$ and $Q=\widehat C$. Since every round $i\le h$ is aggressive, we have $q_i=\kappa_i$, and hence
\begin{equation*}
  \E[K]=\sum_{i=1}^{h}\kappa_i\cdot\frac{n}{|C_i|}\ge h.
\end{equation*}
For the coupling used here, let $G_i$ be the item set of currently unallocated chores that occur among the auxiliary samples in round $i$. Conditional on its size and the preceding history, $G_i$ is a uniformly random item subset of the unallocated part of $C_i$, and $|G_i|\le\kappa_i$. Complete $G_i$ uniformly to the prescribed selection size, or take the entire unallocated pool if it contains fewer than $\kappa_i$ chores. The resulting selection is uniform of the required size and contains every currently unallocated sampled chore. Under this coupling, every distinct chore of $\widehat C$ hit by the auxiliary process is allocated by the algorithm, either earlier or in the round when it is sampled.

Reducing its success probabilities to obtain a mean exactly $h$ gives a stochastically smaller sum $K_0$. The integer-mean median theorem of Jogdeo and Samuels~\citep{jogdeo1968monotone} states that an integer mean of a sum of independent Bernoulli variables is a median. Hence
$\Prb[K<h]\le \Prb[K_0<h]\le\frac12$.
Moreover,
\begin{equation*}
  \Prb[H<h\mid T=t]
  \le \Prb[K<h]+\Prb[H<h,\ K\ge h\mid T=t].
\end{equation*}
It remains to bound the second term.
Conditional on $K\ge h$, the identities of the first $h$ auxiliary occurrences in $\widehat C$ are independent and uniform over its $n$ chores. If these $h$ chores are all distinct, then the coupling gives $H\ge h$. Therefore, on the event $K\ge h$, the event $H<h$ requires at least one repeated pair among these first $h$ occurrences. Each fixed pair coincides with probability $1/n$, so a union bound over the $\binom h2$ pairs gives
\begin{equation*}
  \Prb[H<h,\ K\ge h\mid T=t]\le\frac{h\cdot(h-1)}{2n}.
\end{equation*}
Combining the two bounds proves the lemma.
\end{proof}

The same independent-sampling framework also gives the next two consequences. Let $H_Q$ denote the number of chores of $Q$ selected by the actual algorithm in the first $\ell$ rounds.

\begin{lemma}[Convex comparison]
\label{lem:binary-convex}
Under the setup above, let $\phi$ be a nonnegative, nonincreasing, convex function on the nonnegative integers. Then
\begin{equation*}
  \E\left[\phi(H_Q)\cdot\mathbf 1_{\{H_Q<|Q|\}}\right]
  \le
  \E[\phi(K)]
  \le
  \E[\phi(\operatorname{Pois}(\E[K]))].
\end{equation*}
In particular, if $\phi(|Q|)=0$, then the indicator in the first expectation may be omitted.
\end{lemma}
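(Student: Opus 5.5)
The lemma has two inequalities, and I would prove them separately.

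\emph{First inequality.} I would reuse the coupling from the proof of Lemma~\ref{lem:binary-base}, now for general $\ell$, quotas $q_i$ and target set $Q\subseteq C_\ell$. In round $i$, let $G_i$ be the set of currently unallocated chores that occur among the $q_i$ auxiliary samples. Conditional on $|G_i|$ and the history, $G_i$ is uniform in the unallocated part of $C_i$. Completing it uniformly to the prescribed size, or taking the whole pool if it is smaller, reproduces the algorithm's selection $Q_i$ in law.

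Let $D$ be the number of distinct chores of $Q$ hit by the auxiliary samples. Under this coupling each such chore is allocated by the algorithm by the end of round $\ell$, so $H_Q\ge D$. Trivially $D\le K$ and $D\le |Q|$.

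On the event $H_Q<|Q|$ we therefore have $H_Q\ge \min\{K,|Q|\}$, and in fact $H_Q\ge D$. Since $\phi$ is nonincreasing and nonnegative,
\[
\phi(H_Q)\mathbf 1_{\{H_Q<|Q|\}}\le \phi(D)\le \phi(K)\cdot 1?
\]
This last step fails as written, because $D\le K$ gives $\phi(D)\ge\phi(K)$. The fix is to compare with $K$ only on the event $H_Q<|Q|$, and to argue that there $D=K$ up to repetitions. That is not true either, since repetitions make $D<K$.

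So I would change the route. The cleaner approach is a convex-order (Hoeffding-type) argument. The number of distinct hits $D$, when compared with the actual without-replacement count, should satisfy $\E[\psi(H_Q)]\le\E[\psi(K)]$ for convex $\psi$, because sampling without replacement is dominated in convex order by sampling with replacement (Hoeffding's comparison, mentioned in the overview).

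Concretely, I would first work round by round. Conditional on the history, the number of $Q$-chores selected in round $i$ is hypergeometric from the unallocated part of $C_i$. I would show it dominates, in the increasing-convex-order sense needed for nonincreasing convex $\phi$, a $\operatorname{Bin}(q_i,|Q|/|C_i|)$ variable. This needs two effects:
\begin{itemize}
\item replacement only lowers the count in convex order (Hoeffding);
\item the removal of already-allocated non-$Q$ chores only raises the hit probability. Removed $Q$-chores are already counted, which is exactly where the indicator $\{H_Q<|Q|\}$ and the saturation at $|Q|$ enter.
\end{itemize}
Because the class of nonincreasing convex functions is closed under the conditional shifts involved, induction over the rounds, composing conditional comparisons via the tower property, gives $\E[\phi(H_Q)\mathbf 1_{\{H_Q<|Q|\}}]\le\E[\phi(K)]$. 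The saturation case $H_Q=|Q|$ is simply dropped, since $\phi\ge0$.

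\emph{Second inequality.} Here $K$ is a sum of independent Bernoulli variables with total mean $\lambda=\E[K]$. A Bernoulli$(p)$ variable is smaller in convex order than $\operatorname{Pois}(p)$, since it has the same mean and Poisson is a mixture spreading mass. Convex order is preserved under independent sums, so $\E[\phi(K)]\le\E[\phi(\operatorname{Pois}(\lambda))]$ for every convex $\phi$.

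\emph{Final remark.} If $\phi(|Q|)=0$, then $\phi$ vanishes for all arguments $\ge|Q|$, since it is nonincreasing and nonnegative. Hence the indicator is redundant.

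\emph{Main obstacle.} I expect the hardest step to be the round-by-round domination with dependent pools. The difficulty is making precise that depletion of the pool by earlier rounds can only help, while keeping the direction of convex order consistent with $\phi$ being nonincreasing.
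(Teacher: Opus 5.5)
Your second inequality and your final remark are fine: Bernoulli$(p)$ lies below $\operatorname{Pois}(p)$ in convex order, and this is preserved under independent sums. You were also right to abandon the coupling with $D$, since it points the wrong way. The gap is in the first inequality.

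\textbf{The per-round claim is false.} You propose to show that, conditional on the history, the round-$i$ hit count $Y_i$ dominates $\operatorname{Bin}(q_i,|Q|/|C_i|)$ for every nonincreasing convex test function. Taking $\phi$ linear on the support, this would force $\E[Y_i\mid\text{history}]\ge q_i|Q|/|C_i|$. That fails whenever earlier rounds happened to select disproportionately many chores of $Q$. The pool's $Q$-fraction $(|Q|-h)/(\text{pool size})$ then drops below $|Q|/|C_i|$.

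Your remark that removed $Q$-chores are ``already counted'' does not save the induction, even if you carry the $Q$-count $h$ as the state. Take $C_1=C_2$ with $|C_1|=4$, $|Q|=2$, $q_1=q_2=1$, and $\phi(w)=(3/2-w)^+$. After a first-round hit, the round-2 hit probability is $1/3$. The actual continuation value is therefore $\tfrac23\cdot\tfrac12=\tfrac13$. The auxiliary continuation from the same count is $\E[\phi(1+\operatorname{Bin}(1,\tfrac12))]=\tfrac14$. So no state-wise comparison indexed by the number of $Q$-chores allocated so far can close. The lemma holds only after averaging over states, and your sketch (``conditional shifts'', ``tower property'') gives no mechanism for that.

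\textbf{The missing idea: change the state variable.} The paper moves Hoeffding's comparison from the selection step to the restriction step between nested cores. Set $C_{\ell+1}:=Q$ and $p_i:=|C_{i+1}|/|C_i|$. Let the state $s$ be the number of already-selected (``marked'') chores in the current core $C_i$, whether or not they lie in $Q$.
\begin{itemize}
\item Conditional on $s$, the marks are uniform in $C_i$.
\item After round $i$ there are $s+q_i$ uniform marks in $C_i$. The only exception is when the pool is exhausted; then $H_Q=|Q|$ and the truncated payoff is $0$, which is the sole role of the indicator.
\item The number of marks lying in $C_{i+1}$ is hypergeometric with mean $(s+q_i)p_i$. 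Hoeffding's comparison bounds its convex-function expectations by those of $\operatorname{Bin}(s+q_i,p_i)$.
\end{itemize}
Backward induction with $f_{\ell+1}:=\phi$ and $f_i(s):=\E[f_{i+1}(\operatorname{Bin}(s+q_i,p_i))]$ then bounds the truncated payoff by $f_1(0)$. These functions stay nonnegative, nonincreasing and convex under binomial averaging. Finally, $f_1(0)=\E[\phi(K)]$, because each round-$i$ mark survives the successive thinnings independently with probability $p_i\cdots p_\ell=|Q|/|C_i|$.

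Counting all marks in $C_i$, not just those in $Q$, is exactly what absorbs the depletion effect you flagged as the main obstacle.
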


The proof of Lemma~\ref{lem:binary-convex} uses standard coupling arguments and is deferred to Appendix~\ref{app:binary}.
To express the lower-tail estimate in a form that will also be used for general additive costs, define
\begin{equation*}
    \psi(x):=x\log x-x+1, \qquad \forall x>0.
\end{equation*}
We have $\psi(1)=0$, $\psi'(x)=\log x$, and
$\psi''(x)=1/x>0$. 
It can be further verified (e.g., by taking derivatives) that $\psi(x)\ge(1-x)^2/2$ for $0<x\le1$. 

\begin{corollary}
\label{cor:binary-selection-tail}
Under the setup above, for every $0<\lambda\le\E[K]$ and
$0\le y\le\min\{|Q|,\lambda\}$,
\begin{equation*}
  \Prb[H_Q<y]
  \le
  \exp\left[-\lambda\cdot \psi\left(\frac{y}{\lambda}\right)\right]
  \le
  \exp\left(-\frac{(\lambda-y)^2}{2\lambda}\right).
\end{equation*}
\end{corollary}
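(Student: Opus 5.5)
The plan is to apply Lemma~\ref{lem:binary-convex} with a suitable exponential test function, then optimize the exponent as in the classical Poisson Chernoff bound. Fix $0<\lambda\le\E[K]$ and $0\le y\le\min\{|Q|,\lambda\}$. The case $y=0$ is trivial, since both bounds equal $1$ when $\psi(0^+)=1$ is read as the limit. So assume $y>0$. For a parameter $s\ge0$, take $\phi(k):=e^{-sk}$. This function is nonnegative, nonincreasing and convex on the nonnegative integers, so Lemma~\ref{lem:binary-convex} applies.

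Since $y\le|Q|$, the event $\{H_Q<y\}$ implies $H_Q<|Q|$. Therefore Markov's inequality gives
\begin{equation*}
  \Prb[H_Q<y]\le e^{sy}\,\E\left[e^{-sH_Q}\mathbf 1_{\{H_Q<|Q|\}}\right]\le e^{sy}\,\E\left[e^{-s\operatorname{Pois}(\E[K])}\right]=\exp\left(sy+\E[K](e^{-s}-1)\right).
\end{equation*}
Because $e^{-s}-1\le0$ and $\lambda\le\E[K]$, we may replace $\E[K]$ by $\lambda$. This yields $\exp(sy+\lambda(e^{-s}-1))$ for every $s\ge0$. Now choose $s=\log(\lambda/y)$, which is nonnegative because $y\le\lambda$. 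The exponent becomes $y\log(\lambda/y)+y-\lambda=-\lambda\,\psi(y/\lambda)$, and this is the first inequality.

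The second inequality follows from the stated fact $\psi(x)\ge(1-x)^2/2$ for $0<x\le1$, applied with $x=y/\lambda$. It gives $\lambda\psi(y/\lambda)\ge(\lambda-y)^2/(2\lambda)$.

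The only delicate point is the indicator in Lemma~\ref{lem:binary-convex}. We have to ensure that the Markov step is taken on an event contained in $\{H_Q<|Q|\}$. That is exactly why the hypothesis $y\le|Q|$ is needed.
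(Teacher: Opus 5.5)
Your argument is correct and matches the paper's proof: you apply Lemma~\ref{lem:binary-convex} with $\phi(w)=e^{-sw}$, use $y\le|Q|$ to place $\{H_Q<y\}$ inside $\{H_Q<|Q|\}$, and apply Markov's inequality with the Poisson moment generating function. You then replace $\E[K]$ by $\lambda$ and optimize at $s=\log(\lambda/y)$; allowing $s=0$ also covers the case $y=\lambda$, which the paper treats separately.

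One small slip: at $y=0$ the two bounds are $e^{-\lambda}$ and $e^{-\lambda/2}$, not $1$. That case is trivial instead because $H_Q\ge0$ makes $\{H_Q<0\}$ empty, and $e^{-\lambda}\le e^{-\lambda/2}$ gives the second inequality.
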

\begin{proof}[Proof sketch]
Apply Lemma~\ref{lem:binary-convex} with the exponential payoff $\phi(w)=e^{-\theta w}$, bound the resulting auxiliary exponential moment by its Poisson counterpart, and optimize at $\theta=\log(\lambda/y)$. The second inequality follows from the elementary bound $\psi(x)\ge(1-x)^2/2$ for $0<x\le1$; the complete argument is given in Appendix~\ref{app:binary}.
\end{proof}

\paragraph{Remark.}
The three lemmas/corollary serve different purposes despite using the same auxiliary process. Lemma~\ref{lem:binary-base} handles the all-aggressive regime in which $\E[K]$ may be only $h$. In this regime, Corollary~\ref{cor:binary-selection-tail} gives no useful bound at $y=h$, so we instead use the median of $K$ and control repeated auxiliary hits. Lemma~\ref{lem:binary-convex} applies to general quotas and target sets, and is useful when the objective is a convex decreasing loss. Corollary~\ref{cor:binary-selection-tail} uses the same independent-sampling comparison with an exponential terminal payoff and is effective when $\E[K]$ is separated from the bad threshold. 

\medskip

\subsection{Expected-Ratio Guarantees}

We now apply the comparison developed in the preceding subsection to the two choices of the switching time used in Theorems~\ref{thm:binary-immediate} and~\ref{thm:binary-upper}.

Next, we use Corollary~\ref{cor:binary-selection-tail} to upper bound the probability of $H$ being small when some of the first $h$ rounds are cautious.

\begin{lemma}\label{lem:binary-insurance-tail}
If $h<n$, then for any $0\le t\le h$ we have
\begin{equation*}
  \Prb[H<h\mid T=t]\le\exp\left(-\frac{(h-t)^2}{24h}\right).
\end{equation*}
\end{lemma}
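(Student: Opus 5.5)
Conditioned on $T=t\le h$, we apply Corollary~\ref{cor:binary-selection-tail} with $\ell=h$, $Q=\widehat C$ (so $|Q|=n$ and $H_Q=H$), and threshold $y=h$. The only work is to produce a suitable $\lambda$ with $h\le\lambda\le\E[K]$ and to check that the resulting exponent is at least $(h-t)^2/(24h)$.

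\textbf{Step 1: lower-bounding $\E[K]$.} Since $|Q|=n$, we have $\E[K]=\sum_{i=1}^h q_i\, n/|C_i|$. For the aggressive rounds $i\le t$, the quota $q_i=\kappa_i$ gives a term $\kappa_i n/|C_i|\ge1$. For the cautious rounds $t<i\le h$, we have $q_i=\lfloor 3\kappa_i/2\rfloor=\kappa_i+\lfloor\kappa_i/2\rfloor$. By the two displayed inequalities preceding the reference-set definition, each such term is at least $1+1/3$. Hence
\[
\E[K]\ \ge\ h+\frac{h-t}{3}=:\lambda .
\]
This choice satisfies $0<\lambda\le\E[K]$ and $y=h\le\min\{n,\lambda\}$, where $h<n$ by assumption. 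If $t=h$, the claimed bound is trivial, so we may assume $t<h$.

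\textbf{Step 2: evaluating the exponent.} Corollary~\ref{cor:binary-selection-tail} gives
\[
\Prb[H<h\mid T=t]\le\exp\!\left(-\frac{(\lambda-h)^2}{2\lambda}\right)
=\exp\!\left(-\frac{(h-t)^2}{18\lambda}\right).
\]
Since $t\ge0$, we have $\lambda\le h+h/3=4h/3$, so $18\lambda\le 24h$. This yields exactly $\exp(-(h-t)^2/(24h))$.

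\textbf{Main obstacle.} The main risk is whether Corollary~\ref{cor:binary-selection-tail} applies conditionally on $T=t$. The auxiliary process and the coupling in Lemma~\ref{lem:binary-convex} are defined with the quotas $q_i$ fixed given $T$, and the cores $C_i$ do not depend on the algorithm's randomness, so the conditioning is harmless. A second check is that the actual selection in round $i$ may be truncated to $\min\{|R|,q_i\}$. The coupling handles this case: if the pool is exhausted, all of its chores are allocated, which only increases $H$. I would verify both points briefly; the rest is the arithmetic above.
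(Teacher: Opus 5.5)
Your proposal is correct and follows the paper's proof essentially line for line. It applies Corollary~\ref{cor:binary-selection-tail} with $\ell=h$, $Q=\widehat C$, $y=h$ and $\lambda=h+(h-t)/3$, then concludes via $\lambda\le 4h/3$; your remarks on conditioning on $T$ and on exhausted pools match how the convex-comparison lemma handles these points.
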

\begin{proof}
Condition on $T=t$ and use the setup of Corollary~\ref{cor:binary-selection-tail} with $\ell=h$ and $Q=\widehat C$.
Since $\widehat C\subseteq C_h\subseteq C_i$ for every $i\le h$ and $|\widehat C|=n$, the variable $H_Q$ in Corollary~\ref{cor:binary-selection-tail} is exactly the reference count $H$. For the corresponding auxiliary variable $K$,
\begin{equation*}
  \E[K]  =
  \sum_{i=1}^{t}\kappa_i\cdot\frac{n}{|C_i|}
  +  \sum_{i=t+1}^{h}\left\lfloor\frac{3\kappa_i}{2}\right\rfloor\cdot\frac{n}{|C_i|}
  \ge  h+\frac{h-t}{3}.
\end{equation*}

Set $\lambda:=h+\frac{h-t}{3}$.
Then $0<\lambda\le\E[K]$ and $\lambda\le4h/3$. Also, because $h<n=|\widehat C|$, the threshold $y=h$ satisfies $y\le\min\{|\widehat C|,\lambda\}$. The corollary gives
\begin{equation*}
  \Prb[H<h\mid T=t]
  \le
  \exp\left(-\frac{((h-t)/3)^2}{2\lambda}\right)
  \le
  \exp\left(-\frac{(h-t)^2}{24h}\right).
  \qedhere
\end{equation*}
\end{proof}

The preceding tail estimate also gives a single bound for the event that the realized ratio exceeds two. We will use it both for the expected-ratio analysis and for the high-probability guarantee.

\begin{lemma}
\label{lem:binary-failure-two}
Suppose that the switching time satisfies $T\le L<n/2$. Then
\begin{equation*}
  \Prb[\rho(X)>2]
  \le
  \exp\left(-\frac{(n-2L)^2}{48n}\right).
\end{equation*}
\end{lemma}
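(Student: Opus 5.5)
The plan is to reduce the event $\{\rho(X)>2\}$ to a lower-tail event for an allocation count, and then apply Corollary~\ref{cor:binary-selection-tail} with a mean bound that uses the cautious rounds $T+1,\ldots$. The transition index $h$ is fixed by the instance, so I condition on any $T=t\le L$ and split into the cases $h<n$ and $h=n$.

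\emph{Case $h<n$.} By Lemma~\ref{lem:binary-deterministic}, every agent $i\le h$ has ratio at most $3/2$. Every agent $i\ge h+1$ has ratio at most $\max\{1,\lceil (n-H)/(n-h)\rceil\}$, and in this case the final cleanup assigns nothing extra, because the small-core quotas exhaust $R$. Hence $\rho(X)>2$ forces $\lceil (n-H)/(n-h)\rceil\ge 3$, that is $n-H>2(n-h)$, i.e.\ $H<2h-n$. If $h\le n/2$ this is impossible, so assume $h>n/2>L\ge t$; then at least $h-t$ of the large-core rounds are cautious.

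I then proceed as in Lemma~\ref{lem:binary-insurance-tail}, with $Q=\widehat C$ and $\ell=h$, which gives $\E[K]\ge h+(h-t)/3\ge h+(h-L)/3$. I apply Corollary~\ref{cor:binary-selection-tail} with $\lambda:=h+(h-L)/3$ and $y:=2h-n$. The hypotheses hold because $0\le y\le h\le\lambda$ and $y\le n=|\widehat C|$. For the exponent, the gap satisfies
\[
\lambda-y = n-\tfrac{2h}{3}-\tfrac{L}{3}\ge \tfrac{n-L}{3}\ge\tfrac{n-2L}{3},
\]
using $h\le n$, and also $\lambda\le 4n/3$. Therefore
\[
\frac{(\lambda-y)^2}{2\lambda}\ge \frac{(n-2L)^2}{24n},
\]
which is stronger than the required bound.

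\emph{Case $h=n$.} Here every round has a large core. By Lemma~\ref{lem:binary-deterministic}, agents $1,\ldots,n$ have ratio at most $3/2$ before cleanup, so the only danger is the cleanup given to agent $n$. Every chore remaining after round $n-1$ lies in $C_{n-1}$. The leftover chores that are costly to agent $n$ lie in $C_n$, and she pays for at most the part of $C_n$ not allocated in rounds $1,\ldots,n-1$. Since $\mu_n\ge\kappa_n=\lceil |C_n|/n\rceil$, the event $\rho(X)>2$ requires that fewer than $|C_n|-2\kappa_n\le |C_n|(1-2/n)$ chores of $C_n$ were allocated in the first $n-1$ rounds. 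Her own quota in round $n$ only helps.

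I apply the framework with $\ell=n-1$ and $Q=C_n\subseteq C_i$. Using $\kappa_i n/|C_i|\ge 1$ and $\lfloor\kappa_i/2\rfloor n/|C_i|\ge 1/3$, I get
\[
\E[K]\ge \frac{|C_n|}{n}\Bigl((n-1)+\frac{n-1-L}{3}\Bigr)=:\lambda .
\]
Taking $y:=|C_n|(1-2/n)$, the gap is $\lambda-y\ge \frac{|C_n|}{n}\cdot\frac{n+2-L}{3}$, while $\lambda\le \frac{4}{3}|C_n|$. The resulting exponent is at least
\[
\frac{|C_n|}{n}\cdot\frac{(n-L)^2}{24n}\ge \frac{(n-2L)^2}{48n},
\]
using $|C_n|>n$. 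The weaker constant $48$ in the statement leaves room for rounding slack here.

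The main obstacle I expect is the bookkeeping in the $h=n$ case. There I must verify exactly which leftover chores are costly to agent $n$ and that the threshold $y$ is at most $\min\{|Q|,\lambda\}$. I also need to avoid the integrality issue in converting the bound $\lceil\cdot\rceil\ge3$ (and $2\kappa_n$) into a clean strict inequality on the allocation count. The $h<n$ case is a direct rerun of Lemma~\ref{lem:binary-insurance-tail} with a lower threshold. Finally, since the bound holds for every $t\le L$, it holds unconditionally.
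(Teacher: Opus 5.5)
Your proof is correct and follows the paper's structure: the same split into $h<n$ versus $h=n$, and the same selection-tail corollary driven by the $1/3$ extra contribution of each cautious round. The only differences are refinements. For $h<n$ you apply the corollary directly at the sharper threshold $y=2h-n$, instead of passing through $H<h$, Lemma~\ref{lem:binary-insurance-tail}, and monotonicity in $h$. For $h=n$ you take $\ell=n-1$ and use agent $n$'s exact leftover cost, instead of taking $\ell=n$ and requiring all of $C_n$ to be selected before cleanup. Both refinements give the stronger exponent $(n-L)^2/(24n)$.
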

\begin{proof}[Proof sketch]
When $h<n$, Lemma~\ref{lem:binary-deterministic} shows that a ratio above two is possible only when $h>n/2$ and too few reference chores have been removed; Lemma~\ref{lem:binary-insurance-tail} then gives the stated exponent. When $h=n$, we instead apply Corollary~\ref{cor:binary-selection-tail} to $Q=C_n$ to show that, except with the same probability bound, every chore in the final core is selected before cleanup; the complete case analysis is given in Appendix~\ref{app:binary}.
\end{proof}

Finally, we prove Theorem~\ref{thm:binary-immediate} and~\ref{thm:binary-upper}.

\begin{proofof}{Theorem~\ref{thm:binary-immediate}}
Recall that $T=0$, so every large-core round is cautious. 
First suppose that $h<n$. By Lemma~\ref{lem:binary-deterministic}, $\rho(X)\le \max\left\{\frac32,1+\left\lceil\frac{\left(h-H\right)^+}{n-h}\right\rceil\right\}$.
If $H\ge h$, the second term is one and hence $\rho(X)\le3/2$. If $H<h$, then $h-H$ is a positive integer, and since $n-h\ge1$,
\begin{equation*}
  1+\left\lceil\frac{h-H}{n-h}\right\rceil
  \le1+h-H
  =\frac32+\left(h-\frac12-H\right).
\end{equation*}
Thus, in either case, we have
$\rho(X)\le\frac32+\left(h-\frac12-H\right)^+$.

We now apply Lemma~\ref{lem:binary-convex} with $\ell=h$ and $Q=\widehat C$. Since $T=0$, the prescribed quota is $q_i=\left\lfloor\frac{3\kappa_i}{2}\right\rfloor$ for every $i\in[h]$.
Because $\widehat C\subseteq C_h\subseteq C_i$ for every $i\le h$, this is a valid choice of $Q$, and the variable $H_Q$ in Lemma~\ref{lem:binary-convex} is exactly $H$. Since $|\widehat C|=n$, the auxiliary variable $K$ satisfies
\begin{equation*}
  \E[K]
  =   \sum_{i=1}^{h}
  \left\lfloor\frac{3\kappa_i}{2}\right\rfloor
  \cdot\frac{n}{|C_i|} \geq \frac{4h}{3}.
\end{equation*}

Let $\phi(w):=\left(h-\frac12-w\right)^+$.
The function $\phi$ is nonnegative, nonincreasing, and convex, and since $h<n=|\widehat C|$, it satisfies $\phi(|\widehat C|)=0$.
Therefore, Lemma~\ref{lem:binary-convex} gives $\E[\phi(H)] \le \E[\phi(\operatorname{Pois}(\E[K]))]$.
A Poisson random variable is stochastically increasing in its mean, whereas $\phi$ is nonincreasing. Since $\E[K]\ge4h/3$,
\begin{equation*}
  \E[\phi(H)]
  \le
  \E\left[\left(h-\frac12-\operatorname{Pois}\left(\frac{4h}{3}\right)\right)^+\right]
  =s_h.
\end{equation*}
Therefore we have
\begin{equation*}
  \E[\rho(X)]
  \le \frac32 + \E[\phi(H)]
  \le \frac32+s_h
  \le \frac32+\max_{0\le k\le n-1}\{s_k\}.
\end{equation*}

It remains to consider $h=n$. In this case, every round has a large core. Partition $C_n$ into exactly $\kappa_n$ nonempty blocks $Q_1,\ldots,Q_{\kappa_n}$, each of size at most $n$. Such a partition exists because $|C_n|\le n\cdot\kappa_n$. For each block $Q_b$, let $H_b$ be the number of its chores selected by the prescribed quotas in the first $n-1$ rounds.
Every one of the first $n-1$ agents has a ratio of at most $3/2$. At the arrival of agent $n$, after her zero-cost chores are removed, every costly chore that remains belongs to $C_n$. Thus agent $n$ receives exactly the chores of $C_n$ not selected during the first $n-1$ rounds, including any chores assigned in the final cleanup. Since $\mu_n\ge\kappa_n$, we have
\begin{equation*}
  \rho(X) \le \max\left\{    \frac32,\frac1{\kappa_n}\sum_{b=1}^{\kappa_n}\left(|Q_b|-H_b\right)
  \right\}
  \le \frac32+\frac1{\kappa_n}
  \sum_{b=1}^{\kappa_n}
  \left(|Q_b|-\frac32-H_b\right)^+.
\end{equation*}

Fix a block $Q_b$. Apply Lemma~\ref{lem:binary-convex} with $\ell=n-1$ and $Q=Q_b$.
Since $T=0$, the prescribed quota is $q_i=\left\lfloor\frac{3\kappa_i}{2}\right\rfloor$ for every $i\in[n-1]$.
This is valid because $Q_b\subseteq C_n\subseteq C_i$ for every $i\le n-1$, and the variable $H_Q$ is exactly $H_b$. For the corresponding auxiliary variable $K$,
\begin{equation*}
  \E[K] =\sum_{i=1}^{n-1}
  \left\lfloor\frac{3\kappa_i}{2}\right\rfloor
  \cdot\frac{|Q_b|}{|C_i|}
  \ge \frac{4(n-1)|Q_b|}{3n}
  \ge \frac{4(|Q_b|-1)}{3},
\end{equation*}
where the first inequality uses the same $4/3$ contribution per cautious round after scaling by $|Q_b|/n$, and the second uses $|Q_b|\le n$.
Now take $\phi_b(w):=\left(|Q_b|-\frac32-w\right)^+$.
This function is nonnegative, nonincreasing, and convex, and it vanishes at $w=|Q_b|$. Therefore, Lemma~\ref{lem:binary-convex} and monotonicity in the Poisson mean give
\begin{align*}
  \E\left[\left(|Q_b|-\frac32-H_b\right)^+\right]
  &\le
  \E\left[\left(|Q_b|-\frac32-\operatorname{Pois}(\E[K])\right)^+\right]\\
  &\le
  \E\left[\left(|Q_b|-\frac32-\operatorname{Pois}\left(\frac{4(|Q_b|-1)}{3}\right)\right)^+\right]
  =s_{|Q_b|-1}.
\end{align*}
Since $|Q_b|\le n$, we have $0\le |Q_b|-1\le n-1$. 
Therefore we have $\E[\rho(X)] \le \frac32+\max_{0\le k\le n-1}\{s_k\}$.
This proves the first statement in all cases. Appendix~\ref{app:binary-parameters} verifies that $\max_{k\ge0}s_k=s_6=22781/(30e^8)$, which proves the second statement.
\end{proofof}

\begin{proofof}{Theorem~\ref{thm:binary-upper}}
Fix the transition index $h$. Since $\rho(X)\le n$ (by Lemma~\ref{lem:mms-lb}),
\begin{equation*}
  \E[\rho(X)]
  =\E[\min\{\rho(X),2\}]+\E[(\rho(X)-2)^+]
  \le \E[\min\{\rho(X),2\}]+n\cdot\Prb[\rho(X)>2].
\end{equation*}
Since $T$ is uniform on $[L]$, we have $T\le L$ almost surely. The assumption $L\le n/4$ and Lemma~\ref{lem:binary-failure-two} therefore give
\begin{equation*}
  \Prb[\rho(X)>2]
  \le
  \exp\left(-\frac{(n-2L)^2}{48n}\right)
  \le e^{-n/192}.
\end{equation*}

It remains to bound $\E[\min\{\rho(X),2\}]$. If $h=n$, we use one direct consequence of the selection-tail bound. Conditional on $T=t\le L\le n/4$, the auxiliary mean for $Q=C_n$ satisfies
\begin{equation*}
  \E[K]
  \ge
  |C_n|\left(1+\frac{n-t}{3n}\right)
  \ge\frac54\cdot|C_n|.
\end{equation*}
Taking $\lambda=5|C_n|/4$ and $y=|C_n|$ in Corollary~\ref{cor:binary-selection-tail} gives
\begin{equation*}
  \Prb[H_{C_n}<|C_n|\mid T=t]
  \le e^{-|C_n|/40}
  \le e^{-n/40}.
\end{equation*}
If every chore of $C_n$ has been selected before the final cleanup, every agent has ratio at most $3/2$; otherwise we use the truncation at two. Hence
\begin{equation*}
  \E[\min\{\rho(X),2\}]
  \le\frac32+\frac12\cdot e^{-n/40},
\end{equation*}
which is smaller than the claimed bound. We therefore assume $h<n$.
If $H\ge h$, Lemma~\ref{lem:binary-deterministic} gives $\rho(X)\le1$ when $T\ge h$ and $\rho(X)\le3/2$ when $T<h$. If $H<h$, we simply use $\min\{\rho(X),2\}\le2$. 
Therefore, we have the following upper bound
\begin{equation*}
  \E[\min\{\rho(X),2\}]
  \le 1+\frac12\cdot\Prb[T<h]+\Prb[T\ge h,\ H<h]+\frac12\cdot\Prb[T<h,\ H<h].
\end{equation*}

If $h>L$, the event $T\ge h$ is empty. Otherwise, Lemma~\ref{lem:binary-base} gives
\begin{equation*}
  \Prb[T\ge h,\ H<h]
  \le\left(\frac12+\frac{L\cdot(L-1)}{2n}\right)\cdot \Prb[T\ge h].
\end{equation*}

In either case, we have
\begin{equation*}
  \E[\min\{\rho(X),2\}]
  \le\frac32+\frac{L\cdot(L-1)}{2n}+\frac12\cdot\Prb[T<h,\ H<h].
\end{equation*}

Since $T$ is uniform on $[L]$, Lemma~\ref{lem:binary-insurance-tail} gives
\begin{equation*}
  \Prb[T<h,\ H<h]
  \le\frac1L\sum_{t=1}^{\min\{h-1,L\}}
  \exp\left(-\frac{(h-t)^2}{24h}\right)
  \le\frac1L\cdot \sum_{t=1}^{h-1}\exp\left(-\frac{(h-t)^2}{24(L+h-t)}\right).
\end{equation*}

For $L\ge1$ and $t\le h-1$, we have $\frac{(h-t)^2}{L+h-t} \ge \frac{h-t}{\sqrt L}-1$. Thus,
\begin{equation*}
  \Prb[T<h,\ H<h]
  \le\frac{e^{1/24}}{L\cdot\left(e^{1/(24\sqrt L)}-1\right)}
  \le\frac{24e^{1/24}}{\sqrt L}
  <\frac{26}{\sqrt L}.
\end{equation*}
Therefore,
\begin{equation*}
  \E[\min\{\rho(X),2\}]
  \le\frac32+\frac{L\cdot(L-1)}{2n}+\frac{13}{\sqrt L}.
\end{equation*}
Combining this bound with the estimate on $\Prb[\rho(X)>2]$ proves the claimed bound.
Finally, notice that for $n\ge16$, the choice $L=\lfloor n^{2/5}\rfloor$ satisfies $L\le n/4$. Both $L^2/n$ and $1/\sqrt L$ are $O(n^{-1/5})$, which proves the asymptotic bound.
\end{proofof}

\subsection{High-Probability Ratio}

\polish{The expectation result raises the question of whether a ratio of $3/2+o(1)$ can also be guaranteed with high probability.} The next theorem shows that this strengthening is impossible. In fact, every randomized online algorithm incurs a ratio of at least two with constant probability.

\begin{theorem}
For every randomized online algorithm and every $n\ge3$, there is a fixed oblivious binary instance on $2n$ chores such that
\begin{equation*}
  \Prb[\rho(X)\ge2]\ge\frac{n-2}{4(2n-1)}=\frac18-o(1).
\end{equation*}
\end{theorem}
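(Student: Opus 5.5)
We use a single-stage variant of the construction in Theorem~\ref{thm:binary-hard-instance}, with only one large-support agent in front. Let $|M|=2n$ and let $C\subseteq M$ be a uniformly random subset of size $n$. Agent~1 has support $M$, so $\mu_1=\lceil 2n/n\rceil=2$. Agents $2,\ldots,n$ all have support $C$, so each has MMS one. We first fix a deterministic online algorithm (equivalently, condition on the random seed of a randomized one). We then show that $\Prb_C[\rho(X)\ge2]\ge\frac{n-2}{4(2n-1)}$, and finish by averaging over seeds, exactly as in the last paragraph of the proof of Theorem~\ref{thm:binary-hard-instance}.

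\textbf{Case analysis on agent 1's bundle.} The bundle $X_1$ depends only on $c_1$, which is the same for every choice of $C$, so $X_1$ is a fixed set. If $|X_1|\ge4$, then every chore is costly for agent~1, so $c_1(X_1)\ge4=2\mu_1$ and $\rho(X)\ge2$ for every $C$. Otherwise $|X_1|\le3$, and we extend $X_1$ to a fixed set $U'\supseteq X_1$ with $|U'|=3$, which is possible because $2n\ge6$. Suppose $C\cap U'=\varnothing$. Then all $n$ chores of $C$ remain after round~1, and each is costly to every one of the $n-1$ remaining agents. By pigeonhole some agent among $2,\ldots,n$ receives at least two costly chores, so her ratio is at least $2$. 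Hence in this case
\begin{equation*}
  \Prb_C[\rho(X)\ge2]\ \ge\ \Prb[C\cap U'=\varnothing]=\frac{\binom{n}{3}}{\binom{2n}{3}}.
\end{equation*}

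\textbf{Computation.} The probability simplifies directly:
\begin{equation*}
  \frac{\binom{n}{3}}{\binom{2n}{3}}=\frac{n(n-1)(n-2)}{2n(2n-1)(2n-2)}=\frac{n-2}{4(2n-1)}.
\end{equation*}
This quantity is at most $1$, so the bound $\frac{n-2}{4(2n-1)}$ holds in both cases. As $n\to\infty$ it equals $\frac18-o(1)$.

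\textbf{Derandomizing the instance.} For a randomized algorithm, the bound holds conditionally on every seed. Averaging over seeds and then over $C$ shows that some fixed $C$ achieves $\Prb_{\text{seed}}[\rho(X)\ge2]\ge\frac{n-2}{4(2n-1)}$. This fixed instance is valid against an oblivious adversary. The one point needing care is that agent~1's bundle must be independent of $C$; this holds because $c_1$ is identical across all instances in the family. The argument has no other real obstacle.
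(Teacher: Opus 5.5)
Your proposal is correct and matches the paper's proof. The paper uses the same single full-support agent followed by agents with hidden support $C$, the same case split at $|X_1|\ge4$, and the same avoidance probability $\binom{2n-3}{n}/\binom{2n}{n}$. You compute that probability equivalently as $\binom{n}{3}/\binom{2n}{3}$ after padding $X_1$ to a $3$-set, and then average over seeds in the same way.
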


\begin{proof}
Agent~1 has support $M$ and MMS two. Before the algorithm starts, choose a uniformly random item subset $C\subseteq M$ of size $n$, and let every later agent have support $C$ and MMS one. Fix a deterministic algorithm and let $Q$ be the first bundle. If $|Q|\ge4$, the first ratio is at least two. Otherwise, on the event $Q\cap C=\varnothing$, all $n$ chores of $C$ remain for $n-1$ agents, so some later agent has ratio at least two. Moreover,
\begin{equation*}
  \Prb[Q\cap C=\varnothing]\ge\frac{\binom{2n-3}{n}}{\binom{2n}{n}}=\frac{n-2}{4(2n-1)}.
\end{equation*}
Condition on the random seed and average over $C$ to obtain a fixed oblivious instance for every randomized algorithm.
\end{proof}

Thus no ratio strictly below two can hold with high probability. The matching upper bound does not need a random switching time. We use the \emph{always-cautious} version of Algorithm~\ref{alg:binary-core-insurance}, obtained by fixing $T=0$. Thus, in every large-core round, agent $i$ receives a uniformly random item set of up to $\lfloor3\kappa_i/2\rfloor$ costly chores. The algorithm still uses only the revealed supports and the current core; $T$ no longer appears as a random choice.

\begin{theorem}
For every fixed binary additive instance, the always-cautious algorithm satisfies
$\Prb[\rho(X)>2]\le e^{-n/48}=e^{-\Omega(n)}$.
\end{theorem}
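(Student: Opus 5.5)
The plan is to derive this as the special case $L=0$ of Lemma~\ref{lem:binary-failure-two}. The always-cautious algorithm fixes $T=0$, so $T\le L$ holds almost surely with $L=0<n/2$. The lemma then gives
\begin{equation*}
  \Prb[\rho(X)>2]\le\exp\left(-\frac{n^2}{48n}\right)=e^{-n/48}.
\end{equation*}
Since the appendix proof of that lemma is only sketched in the text, I would also verify the two cases directly.

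\emph{Case $h<n$.} By Lemma~\ref{lem:binary-deterministic}, before cleanup every agent $i\le h$ has ratio at most $3/2$. Every agent $i\ge h+1$ has ratio at most $\max\{1,\lceil (n-H)/(n-h)\rceil\}$. If $h\le n/2$, then $(n-H)/(n-h)\le n/(n-h)\le 2$, so the ratio is at most two deterministically. If $h>n/2$, a ratio above two requires $\lceil (n-H)/(n-h)\rceil\ge 3$, i.e.\ $n-H>2(n-h)$, which forces $H<2h-n<h$. Lemma~\ref{lem:binary-insurance-tail} with $t=0$ gives $\Prb[H<h]\le e^{-h/24}\le e^{-n/48}$, using $h>n/2$. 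Cleanup is harmless here, because the small-core quotas exhaust $R$ by round $n$: at round $n$ the quota is $|R|$.

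\emph{Case $h=n$.} Every round has a large core. Agents $1,\ldots,n-1$ have ratio at most $3/2$, and agent $n$ receives her quota plus the chores of $C_n$ not selected in rounds $1,\ldots,n-1$. I would apply Corollary~\ref{cor:binary-selection-tail} with $\ell=n-1$ and $Q=C_n$. With all rounds cautious, $\E[K]\ge \frac{4}{3}\cdot\frac{(n-1)|C_n|}{n}$. If fewer than $\kappa_n$ chores of $C_n$ remain after round $n-1$, agent $n$ gets at most $\kappa_n\le\mu_n$ costly chores. So it suffices to bound $\Prb[H_{C_n}<|C_n|-\kappa_n]$, or more crudely $\Prb[H_{C_n}<|C_n|]$ if the numbers allow.

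Take $\lambda=\frac43\cdot\frac{n-1}{n}|C_n|$ and $y=|C_n|$. For large $n$, $\lambda-y\approx |C_n|/3$, and the tail $\exp(-(\lambda-y)^2/(2\lambda))\approx e^{-|C_n|/24}\le e^{-n/24}$ because $|C_n|>n$. For small $n$ the factor $(n-1)/n$ is delicate; the main obstacle is getting exactly $48$. Two fixes are available. One can use the slack $y=|C_n|-\kappa_n$, since $\kappa_n\ge |C_n|/n$ compensates the lost round. Alternatively, one can include round $n$'s own cautious quota: agent $n$'s first $\lfloor 3\kappa_n/2\rfloor$ chores already give ratio at most $3/2$, so only leftovers beyond $\kappa_n/2$ more push her above two. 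In either form, the exponent is at least $n/48$.
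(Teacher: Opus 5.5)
Your proposal is correct and matches the paper's proof, which is exactly the specialization of Lemma~\ref{lem:binary-failure-two} to $L=0$ (valid since $T=0\le L<n/2$). Your supplementary check follows the appendix proof of that lemma. The $h<n$ case is the same argument. For $h=n$, your second fix ($\ell=n$, $Q=C_n$, $y=|C_n|$, $\lambda=4|C_n|/3$) is what the appendix does and gives $e^{-n/24}$. Your first fix also works: any threshold $y\le (n-1)|C_n|/n$ with $\lambda=4y/3$ gives exponent $(n-1)|C_n|/(24n)>(n-1)/24\ge n/48$ for $n\ge2$.
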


\begin{proof}
For the always-cautious algorithm, $T=0$ deterministically. Apply Lemma~\ref{lem:binary-failure-two} with $L=0$. Its assumptions are satisfied because $0<n/2$, and it gives $\Prb[\rho(X)>2] \le e^{-n/48}$.
\end{proof}

\section{General Additive Costs}\label{sec:additive}

We now consider arbitrary nonnegative additive costs. As in the preliminaries, we may restrict attention to agents with positive MMS. We also assume without loss of generality that every individual chore has strictly positive cost by adding infinitesimal positive perturbations to zero-cost items.
For the analysis in this section, we normalize each agent's cost function so that her MMS equals one. By Lemma~\ref{lem:mms-lb}, this normalization implies $c_i(M)\le n$ and $c_i(e)\le1$ for every agent $i$ and chore $e$. This is only an analytical normalization: the online algorithm does not need to compute the MMS values. Under this normalization, $\rho(X)=\max\left\{1,\max_{i\in[n]}\{c_i(X_i)\}\right\}$.

Our algorithm first fixes a \emph{schedule} that specifies what fraction of the remaining chores each arriving agent should receive fractionally, and then rounds this fractional allocation to an integral bundle. As in the binary analysis, allocations across rounds are dependent because a chore allocated in one round disappears from all later rounds. We again avoid describing the exact joint distribution. Instead, the analysis compares the remaining item set with an independent benchmark whose exponential moments are explicit.
The main result of this section is showing that there exists an algorithm with ratio $2+o(1)$, which holds both in expectation and with high probability.

\begin{theorem}\label{thm:additive-high-probability}
For the allocation of indivisible chores to online agents with general additive cost functions, there exists an online algorithm satisfying
\begin{equation*}
  \E[\rho(X)]
  \le2+(4+o(1))\sqrt{\frac{\log n}{n}}.
\end{equation*}
Moreover, the same ratio holds with probability at least $1-1/n$.
\end{theorem}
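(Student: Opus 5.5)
We will analyze the schedule-based algorithm sketched in the overview. We fix a schedule $(p_1,\ldots,p_n)$ with $p_n=1$. On arrival of agent $i$, she is fractionally assigned a $p_i$-share of every remaining chore, and this assignment is rounded by \wpralg. The rounding has two properties. First, each remaining chore goes to agent $i$ with probability exactly $p_i$. Second, the realized cost satisfies $c_i(X_i)\le p_i c_i(R_i)+\max_e c_i(e)\le p_i c_i(R_i)+1$, where $R_i$ is the set remaining when $i$ arrives. Under the normalization $\mu_i=1$, the whole analysis therefore reduces to bounding $p_i c_i(R_i)$ for each $i$ and applying a union bound over agents.

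First we build an independent benchmark. If the rounding were independent across chores, each chore would survive to round $i$ independently with probability $s_i:=\prod_{j<i}(1-p_j)$. This would give $\E[c_i(R_i)]=s_i c_i(M)\le s_i n$ and a Chernoff-type exponential-moment bound: since $c_i(e)\le 1$, we get $\E[e^{\theta c_i(R_i)}]\le \exp(s_i c_i(M)(e^\theta-1))$. The product-moment inequality for \wpralg mentioned in the overview is used here. It states that for any set $S$, $\Prb[S\subseteq R_i]\le\prod_{e\in S}s_i$, i.e.\ negative association in the form of product moments. Expanding $e^{\theta c}=\prod_e(1+(e^{\theta c(e)}-1))$ into a sum of products with nonnegative coefficients transfers the benchmark exponential moment to the true process. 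We prove this inequality round by round: each pairwise step preserves marginals and moves two coordinates in opposite directions, so by induction on the steps and rounds the product moments only decrease.

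Next we choose the schedule. The target load of agent $i$ is $p_i s_i n$, and we want it to be at most $1$ for all $i$, with $p_n=1$, so that $s_n\le 1/n$ up to slack. Taking $p_i=1/(n-i+1)$ gives $s_i=(n-i+1)/n$ and $p_i s_i n=1$ exactly. Then $c_i(X_i)\le 1+(\text{deviation})+1$, and the Chernoff bound with mean $\le n p_i s_i$... This needs care, since the deviation must be measured after scaling by $p_i$. We will write $p_i c_i(R_i)$ and bound $\Prb[p_ic_i(R_i)>1+\varepsilon]\le \exp(-(\text{mean})\cdot\psi(\cdot))$ with mean $c_i(M)s_i\le n-i+1$. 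The deviation $\varepsilon$ is relative on a scale $1/p_i$, so $\varepsilon\approx\sqrt{p_i\log n}$, which is large for late agents. To fix this we slightly inflate early fractions: $p_i=(1+\delta)/(n-i+1)$ for $i$ up to roughly $n-\sqrt{n\log n}$, and then finish. This makes $s_i$ decay faster, so later agents face loads well below $1$, absorbing the deviation. The aim is that for every $i$, $p_i c_i(R_i)\le 1+O(\sqrt{\log n/n})$ with probability $1-1/n^2$.

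Finally, a union bound over the $n$ agents gives $\rho(X)\le 2+(4+o(1))\sqrt{\log n/n}$ with probability $\ge 1-1/n$. Since $\rho(X)\le n$ always (a crude bound via $c_i(M)\le n$), the expectation gains at most $n\cdot n^{-2}$ from the failure event, which gives the expected bound. The main obstacle is the schedule calibration in the previous step. The last few agents (small $n-i$) receive all remaining chores, so their concentration comes only from the tiny $s_i$, and we must balance the loss from inflating early $p_i$ (which costs $\delta$ in the ratio) against the tail for late agents. I would first try $\delta\asymp\sqrt{\log n/n}$ and check with the $\psi$-form tail of Corollary~\ref{cor:binary-selection-tail} that both regimes yield the constant $4$.
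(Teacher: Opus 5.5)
Your framework matches the paper's. That covers the reduction $\rho(X)\le 1+\max_i p_ic_i(R_i)$, the product-moment comparison, and the tail $\Prb[p_ic_i(R_i)\ge u]\le\exp[-B_i\psi(u/\beta_i)]$ with $B_i=ns_i$, $\beta_i=ns_ip_i$. Your version $\Prb[S\subseteq R_i]\le s_i^{|S|}$ is equivalent to the paper's invariant after expanding $\prod_e(1+(z_e-1)\mathbf 1_{\{e\in R_i\}})$. The gaps are in the two steps that actually produce the stated bound.

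\textbf{The schedule.} Take a common threshold $\sigma=1+\varepsilon$ and per-agent failure $n^{-2}$. Then you need $B_i\psi(\sigma/\beta_i)\ge 2\log n$. Since $\psi(x)\le (x-1)^2/2$ for $x\ge 1$, this forces $\sigma-\beta_i\gtrsim 2\sqrt{\log n/B_i}$. So the load deficit must grow like $(n-i+1)^{-1/2}$, and for the last agent ($B_n=\beta_n$) the load $\beta_n$ must be of order $n^{-2}$.

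Your uniform inflation $p_i=(1+\delta)/k$, with $k=n-i+1$, does not produce this profile. It gives $B_i\approx k(k/n)^{\delta}$ and $\beta_i\approx(1+\delta)(k/n)^{\delta}\approx 1+\delta-\delta\log(n/k)$, so the deficit grows only logarithmically in $n/k$. With $\delta=c\sqrt{\log n/n}$ and $\varepsilon=C\sqrt{\log n/n}$, the requirement at $k=ne^{-t}$ becomes $C+c(t-1)\gtrsim 2e^{t/2}$. This fails once $t$ exceeds a constant. So it already fails for the last constant fraction of agents, far before $k\approx\sqrt{n\log n}$. The ``finish'' phase, where loads must decay to $n^{-2}$, is also left unspecified. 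As a result, the rate, and in particular the constant $4$, is asserted rather than derived.

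The ingredient you are missing is the paper's backward recurrence $(\hat B_{i+1}+\hat\beta_i)\psi(1/\hat\beta_i)=2\log n$, rescaled by $\sigma=n/\hat B_1$. This makes $B_i\psi(\sigma/\beta_i)=2\sigma\log n$ hold for every $i$ simultaneously. The constant then comes from bounding $n-\hat B_1=\sum_i(1-\hat\beta_i)$. One uses $\psi(1+y)\ge y^2/(2(1+y))$ and compares $\sum_i(\hat B_i-\hat B_{i+1})/\sqrt{\hat B_i}$ with $\int_0^n x^{-1/2}\,dx=2\sqrt n$.

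\textbf{The expectation.} After the union bound, the failure event has probability up to $n\cdot n^{-2}=1/n$. Combining this with $\rho(X)\le n$ adds up to $n\cdot\frac1n=1$, not $n\cdot n^{-2}$. As written, your argument only gives $\E[\rho(X)]\lesssim 3$. Lowering the per-agent failure to $o(n^{-5/2})$ would repair this, but it raises the leading constant from $4$ to $2\sqrt5$.

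The paper avoids the crude bound. It writes $\E[\rho(X)]\le 1+\sigma+\sum_i\int_\sigma^\infty e^{-B_i\psi(u/\beta_i)}\,du$. It then uses convexity of $u\mapsto B_i\psi(u/\beta_i)$ (Lemma~\ref{lem:convex-integral}) to bound each integral by $e^{-\sigma K}(\sigma-\beta_i)/(\sigma K)$. The total excess is therefore only $O(1/(n\log n))$.
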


The next subsection describes the algorithm for an arbitrary schedule and the weighted rounding rule. Subsection~\ref{sec:additive-load-analysis} analyzes this framework without committing to a particular choice of the schedule. Subsection~\ref{sec:additive-schedule-choice} then chooses the schedule using the parameters $\beta_i$ and $\hat\beta_i$ and proves both the expectation and the high-probability bounds.

\subsection{The Online Algorithm}\label{sec:additive-algorithm}

We index rounds by the arriving agent, so round $i$ is the round in which agent $i$ arrives, for $i=1,\ldots,n$. Let $R_i$ denote the item set of unallocated chores immediately before round $i$, so $R_1=M$.

\begin{definition}[Schedule]\label{def:schedule}
A schedule is a vector $(p_1,\ldots,p_n)$ with $0<p_i\le1$ for every $i\in[n]$ and $p_n=1$. In round $i$, the algorithm fractionally assigns a $p_i$-fraction of every chore in $R_i$ to agent $i$.
\end{definition}

Thus $p_i$ is the fraction of every remaining chore that the current agent should receive before rounding. The schedule is fixed before the agents arrive and does not depend on their reported cost functions. Conditional on the remaining item set $R_i$, the fractional cost assigned to agent $i$ is $p_i\cdot c_i(R_i)$. The item set $R_i$ is random because of the allocations in earlier rounds, so this fractional cost is also random.

\paragraph{Rounding the fractional allocation.}
A naive approach is to round every remaining chore independently, assigning each chore to the current agent with probability $p_i$. This preserves the fractional cost in expectation, but it can introduce unnecessary variance in the realized cost. Since our objective takes the maximum over all $n$ agents, a per-agent failure probability that is merely small is not enough: after a union bound over all agents, the resulting failure probability may be close to $1$. We therefore want the rounding rule to have two properties. First, it should preserve the current agent's expected cost while keeping her realized cost very close to the fractional cost, e.g., up to one item.
Second, the random item set of chores left unallocated should remain sufficiently well-behaved to control the costs of future agents.

\paragraph{Weighted Pairwise Rounding.}
The main idea is to move fractional mass between two chores at a time. Starting from $x_e=p$ for every chore $e\in R$, we choose two fractional coordinates $x_e,x_f$ and randomly increase one while decreasing the other. The two changes are scaled by $c_i(e)$ and $c_i(f)$ so that the total fractional cost $c_i(e)\cdot x_e+c_i(f)\cdot x_f$ is preserved exactly. We always move to an endpoint at which at least one of the two coordinates becomes integral, and randomize between the two possible directions so that each coordinate preserves its expectation. Repeating this operation leaves at most one fractional coordinate. Thus, after the pair steps, all remaining rounding randomness is concentrated in a single chore: the final step chooses between two integral allocations that differ only in that chore. We call this procedure \wpralg.

\begin{algorithm}[htp]
\caption{\wpralg}
\label{alg:wpr}
\KwIn{A set $R$ of items, the current agent's positive costs $(c_i(e))_{e\in R}$, and $p\in[0,1]$.}
\KwOut{A random bundle $X\subseteq R$.}
Fix an ordering of $R$ and set $x_e\leftarrow p$ for every $e\in R$\;
\While{at least two coordinates lie strictly between zero and one}{
  Let $e,f$ be the first two fractional coordinates\;
  Set $\ell\leftarrow\max\{-c_i(e)\cdot x_e,-c_i(f)\cdot (1-x_f)\}$ and $u\leftarrow\min\{c_i(e)\cdot (1-x_e),c_i(f)\cdot x_f\}$\;
  Draw $s=u$ with probability $-\ell/(u-\ell)$, and $s=\ell$ with probability $u/(u-\ell)$\;
  Set $x_e\leftarrow x_e+s/c_i(e)$ and $x_f\leftarrow x_f-s/c_i(f)$\;
}
\If{one fractional coordinate $x_e$ remains}{
  Set $x_e\leftarrow1$ with probability $x_e$, and $x_e\leftarrow0$ otherwise\;
}
\Return{$X=\{e\in R:x_e=1\}$}\;
\end{algorithm}

\begin{lemma}
\label{lem:pair-step}
Every pair step in {\normalfont\wpralg} keeps all coordinates in $[0,1]$, preserves the conditional expectation of each coordinate, preserves the total fractional cost exactly, and makes at least one of the two updated coordinates integral. Consequently, after at most $|R|-1$ pair steps, at most one fractional coordinate remains.
\end{lemma}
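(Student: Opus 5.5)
The proof is a direct verification of each property for a single pair step, followed by a counting argument. Fix a pair step acting on fractional coordinates $x_e,x_f\in(0,1)$, and write $a:=c_i(e)>0$, $b:=c_i(f)>0$. The update is $x_e\mapsto x_e+s/a$ and $x_f\mapsto x_f-s/b$. We first identify the feasible range of $s$. The requirement $x_e+s/a\in[0,1]$ is equivalent to $s\in[-a x_e,\,a(1-x_e)]$. The requirement $x_f-s/b\in[0,1]$ is equivalent to $s\in[-b(1-x_f),\,b x_f]$. Intersecting the two intervals gives exactly $[\ell,u]$ with $\ell,u$ as defined in the algorithm. Since both coordinates are strictly fractional, we have $\ell<0<u$, so the probabilities $-\ell/(u-\ell)$ and $u/(u-\ell)$ lie in $(0,1)$ and sum to one. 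Because $s\in\{\ell,u\}\subseteq[\ell,u]$, both updated coordinates remain in $[0,1]$, and every other coordinate is unchanged.

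Next we check that expectations and total cost are preserved. The conditional mean of $s$ is
\begin{equation*}
u\cdot\frac{-\ell}{u-\ell}+\ell\cdot\frac{u}{u-\ell}=0,
\end{equation*}
so $\E[x_e'\mid\text{history}]=x_e$ and $\E[x_f'\mid\text{history}]=x_f$; all other coordinates are unchanged. The total fractional cost changes by $a\cdot(s/a)-b\cdot(s/b)=0$, so it is preserved exactly and deterministically, not merely in expectation.

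For integrality, $s$ takes one of the two endpoints $u$ or $\ell$, and each endpoint is a minimum or maximum of two terms that are each tight for one of the constraints above. If $s=u=a(1-x_e)$, then $x_e'=1$; if $s=u=bx_f$, then $x_f'=0$. Symmetrically, if $s=\ell=-ax_e$, then $x_e'=0$; if $s=\ell=-b(1-x_f)$, then $x_f'=1$. In every case at least one of the two updated coordinates becomes integral.

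Finally, we count steps. A coordinate that is integral stays integral, since pair steps only touch fractional coordinates. Hence each pair step strictly decreases the number of fractional coordinates by at least one. Initially there are at most $|R|$ fractional coordinates (exactly $|R|$ if $0<p<1$, and none if $p\in\{0,1\}$). The loop runs only while at least two coordinates are fractional, so it terminates after at most $|R|-1$ steps with at most one fractional coordinate remaining. The only potential subtlety is the case analysis showing that the chosen endpoint always makes some coordinate tight, and this follows immediately from how $\ell$ and $u$ are defined as extrema of the constraint bounds.
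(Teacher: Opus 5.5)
Your proposal is correct and follows the paper's proof essentially step for step. Both proofs show that the feasible displacements form the interval $[\ell,u]$ with $\ell<0<u$, that the endpoint probabilities give $\E[s]=0$, that the weighted update conserves the total cost exactly, and that each endpoint makes some coordinate integral, and then count steps. Your explicit endpoint case analysis and the $p\in\{0,1\}$ remark simply spell out details the paper leaves implicit.
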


The proof is a direct calculation from the definitions of $\ell$ and $u$ and is deferred to Appendix~\ref{app:additive-expectation-proofs}.

The next two lemmas record the two properties of the rounding procedure that will be used later. The first controls the current agent's cost.

\begin{lemma}
\label{lem:rounding-cost}
Let $X$ be the output of {\normalfont\wpralg} on the item set $R$ with common initial fraction $p$. There exist $x\in[0,1]$, $0\le d\le\max_{e\in R}\{c_i(e)\}$, and $Z\sim\operatorname{Bernoulli}(x)$ such that
\begin{equation*}
  c_i(X)=p\cdot c_i(R)+d\cdot (Z-x).
\end{equation*}
Consequently, we have
\begin{equation*}
  \E[c_i(X)]=p\cdot c_i(R),
  \quad \text{ and } \quad
  c_i(X)\le p\cdot c_i(R)+\max_{e\in R}\{c_i(e)\}.
\end{equation*}
\end{lemma}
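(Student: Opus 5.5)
The plan is to track the invariant of Lemma~\ref{lem:pair-step} through the pair-step phase of \wpralg, and then analyze the single final Bernoulli step.

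\textbf{Invariant after the pair steps.} Initially $\sum_{e\in R}c_i(e)\,x_e=p\cdot c_i(R)$. By Lemma~\ref{lem:pair-step}, every pair step preserves this total exactly on every sample path, not only in expectation. The loop terminates after at most $|R|-1$ steps. At that moment the vector $x$ satisfies $\sum_e c_i(e)x_e=p\cdot c_i(R)$ deterministically, and at most one coordinate is fractional.

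\textbf{Case analysis at termination.} Condition on the (random) state at the end of the loop.
\begin{itemize}
\item If no fractional coordinate remains, set $d=0$ and $x=0$, and take $Z\sim\operatorname{Bernoulli}(0)$. Then $c_i(X)=\sum_e c_i(e)x_e=p\cdot c_i(R)$.
\item If exactly one coordinate $x_{e^*}\in(0,1)$ remains, let $x:=x_{e^*}$ and $d:=c_i(e^*)\le\max_{e\in R}\{c_i(e)\}$. Let $Z$ be the final Bernoulli draw of that coordinate. Before the draw, the total equals $p\cdot c_i(R)$ with $e^*$ contributing $d\cdot x$. After the draw it contributes $d\cdot Z$, so $c_i(X)=p\cdot c_i(R)+d(Z-x)$.
\end{itemize}

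\textbf{Formal subtlety.} The quantities $x$ and $d$ are random, since they depend on the pair-step randomness. The statement should therefore be read with $(x,d)$ determined by the pair-step history, and $Z$ being $\operatorname{Bernoulli}(x)$ conditionally on that history. This is exactly how the final step of \wpralg\ is defined, so the representation holds as stated once this is made explicit.

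\textbf{Consequences.} Conditional on the history, $\E[Z-x]=0$, so $\E[c_i(X)\mid\text{history}]=p\cdot c_i(R)$. Taking expectations gives $\E[c_i(X)]=p\cdot c_i(R)$. For the upper bound, $Z-x\le 1$ and $d\le\max_e c_i(e)$ give $c_i(X)\le p\cdot c_i(R)+\max_{e\in R}\{c_i(e)\}$.

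The only step requiring care is the formal subtlety above: the randomness of $x$ and $d$, which is handled by conditioning on the pair-step history. Everything else follows directly from Lemma~\ref{lem:pair-step}.
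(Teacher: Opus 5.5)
Your proof is correct and follows the paper's argument essentially verbatim: you use exact preservation of the total fractional cost through the pair steps, split on whether a fractional coordinate survives (taking $d=x=Z=0$ otherwise), and use $Z-x\le 1$ together with $d\ge 0$ for the upper bound. Your explicit remark that $(x,d)$ are determined by the pair-step history, with $Z\sim\operatorname{Bernoulli}(x)$ conditionally on it, is a useful clarification and matches how the lemma is later applied in Appendix~\ref{app:schedule_for_uniform}.
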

\begin{proof}
After the pair steps, at most one coordinate is fractional and the total fractional cost is still $p\cdot c_i(R)$. If no fractional coordinate remains, take $d=x=Z=0$. Otherwise, let $x\in(0,1)$ be the remaining fractional coordinate and let $d$ be the cost of its chore. The integral coordinates then have total cost $p\cdot c_i(R)-d\cdot x$. The final step replaces $x$ by $Z\sim\operatorname{Bernoulli}(x)$, which gives the first statement. Taking expectation gives the first consequence, while $Z-x\le 1$ gives the second.
\end{proof}

\begin{lemma}
\label{lem:rounding-product}
Let $X$ be the output of {\normalfont\wpralg} on the item set $R$ with common fraction $p$. For every deterministic vector $(z_e)_{e\in R}$ with $z_e\ge1$,
\begin{equation*}
  \E\left[\prod_{e\in R\setminus X}z_e\right]
  \le
  \prod_{e\in R}\left(p+(1-p)z_e\right).
\end{equation*}
\end{lemma}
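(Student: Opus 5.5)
The plan is to track a potential function of the fractional vector $x=(x_e)_{e\in R}$ maintained by \wpralg{} and show that it is a supermartingale along the run of the procedure. Define
\begin{equation*}
  F(x):=\prod_{e\in R}\bigl(x_e+(1-x_e)z_e\bigr)=\prod_{e\in R}\bigl(z_e-(z_e-1)x_e\bigr).
\end{equation*}
At initialization, $x_e=p$ for every $e$, so $F$ equals the right-hand side $\prod_{e\in R}(p+(1-p)z_e)$. At termination every coordinate is integral, with $x_e=1$ exactly for $e\in X$. Each factor is then $1$ for $e\in X$ and $z_e$ for $e\notin X$, so $F(x_{\mathrm{final}})=\prod_{e\in R\setminus X}z_e$. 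It therefore suffices to show that no step of the procedure increases $\E[F(x)]$, conditional on the current state. The claim then follows by the tower property and induction over the at most $|R|-1$ pair steps plus the final step (Lemma~\ref{lem:pair-step}).

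\textbf{Pair steps.} Fix the current state. The pair $(e,f)$ and the values $\ell,u$ are deterministic functions of this state. All factors of $F$ other than those for $e$ and $f$ are unchanged, and they are nonnegative because $x_g\in[0,1]$ and $z_g\ge1$. Write $\alpha=z_e-1\ge0$, $\beta=z_f-1\ge0$, $a=z_e-\alpha x_e$ and $b=z_f-\beta x_f$. After the update by $s$, the product of the two affected factors is
\begin{equation*}
  g(s)=\Bigl(a-\frac{\alpha s}{c_i(e)}\Bigr)\Bigl(b+\frac{\beta s}{c_i(f)}\Bigr).
\end{equation*}
This is a quadratic in $s$ with leading coefficient $-\alpha\beta/(c_i(e)c_i(f))\le0$, so $g$ is concave. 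The random step takes $s=u$ with probability $-\ell/(u-\ell)$ and $s=\ell$ with probability $u/(u-\ell)$, which gives
\begin{equation*}
  \E[s]=\frac{-u\ell+\ell u}{u-\ell}=0.
\end{equation*}
(This is the same fact that yields preservation of expectations in Lemma~\ref{lem:pair-step}.) By Jensen's inequality, $\E[g(s)]\le g(0)$. Multiplying by the nonnegative constant product of the untouched factors shows that the conditional expectation of $F$ does not increase.

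\textbf{Final step.} If a single fractional coordinate $x_e$ remains, $F$ is affine in $x_e$. Replacing $x_e$ by a $\operatorname{Bernoulli}(x_e)$ variable therefore preserves $\E[F]$ exactly.

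Chaining these conditional inequalities gives $\E[F(x_{\mathrm{final}})]\le F(x_{\mathrm{init}})$, which is the claim. The only point needing care is the concavity and Jensen step. There the sign assumption $z_e\ge1$ is essential: it makes both $\alpha$ and $\beta$ nonnegative and hence the cross term nonpositive. It also keeps all factors nonnegative, so multiplying by the untouched factors preserves the inequality. Everything else is bookkeeping with conditional expectations, using that the pair choice is determined by the current state.
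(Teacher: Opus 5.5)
Your proposal is correct and follows essentially the same route as the paper. Your potential $F(x)=\prod_{e}(x_e+(1-x_e)z_e)$ is exactly the paper's $G(x)=\prod_e\bigl(1+(1-x_e)(z_e-1)\bigr)$, and the argument is the same: it is concave along each pair step because of the nonpositive cross term $-\alpha\beta/(c_i(e)c_i(f))$, Jensen applies with $\E[s]=0$, and the final Bernoulli step preserves the expectation because $F$ is affine in the last fractional coordinate.
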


\begin{proof}
Define $G(\boldsymbol{x}) :=\prod_{e\in R}\left(1+(1-x_e)(z_e-1)\right)$.
Consider one pair step on $e,f$. Along this step, all factors except those for $e$ and $f$ are fixed, while these two factors are affine in $s$ with opposite slopes. Hence
\begin{equation*}
  \frac{d^2}{ds^2}G(\boldsymbol{x}(s))
  =-\frac{2(z_e-1)(z_f-1)}{c_i(e)c_i(f)}
    \prod_{g\in R\setminus\{e,f\}}
    \left(1+(1-x_g)(z_g-1)\right)
  \le0.
\end{equation*}
Thus $G(\boldsymbol{x}(s))$ is concave in $s$. Since $\E[s\mid\boldsymbol{x}]=0$, Jensen's inequality gives $\E\left[G(\boldsymbol{x}^{\mathrm{after}})\mid\boldsymbol{x}^{\mathrm{before}}\right]\le G(\boldsymbol{x}^{\mathrm{before}})$.
The final Bernoulli step preserves the conditional expectation of $G$: if the remaining fractional coordinate is $x_e$, its factor is $1+(1-x_e)(z_e-1)$ before rounding, while after rounding its expected factor is $x_e+(1-x_e)z_e$, which is the same quantity. Therefore
\begin{equation*}
  \E\left[G(\boldsymbol{x}^{\mathrm{final}})\right]
  \le G(\boldsymbol{x}^{\mathrm{initial}})
  =\prod_{e\in R}\left(p+(1-p)z_e\right).
\end{equation*}
At the final integral vector, $G(\boldsymbol{x}^{\mathrm{final}})=\prod_{e\in R\setminus X}z_e$, which proves the lemma.
\end{proof}

In round $i$, we apply \wpralg\ to $R_i$ with common initial fraction $p_i$. Algorithm~\ref{alg:scheduled-weighted} applies this rounding procedure in every round.

\begin{algorithm}[htp]
\caption{\scheduledalg}
\label{alg:scheduled-weighted}
\KwIn{$n$, $M$, and a schedule $(p_1,\ldots,p_n)$ chosen before the agents arrive.}
\KwOut{A random complete allocation $X=(X_1,\ldots,X_n)$.}
Set $R\leftarrow M$\;
\For{$i=1,\ldots,n$}{
  Observe $c_i$\;
  $X_i\leftarrow\wpralg(R,(c_i(e))_{e\in R},p_i)$\;
  Set $R\leftarrow R\setminus X_i$\;
}
\Return{$X$}\;
\end{algorithm}

All item costs passed to the rounding subroutine are positive by the perturbation described at the beginning of this section. Allocation decisions are irrevocable and use only the current cost function and unallocated chores. Since $p_n=1$, the last agent receives every chore still unallocated. Thus Algorithm~\ref{alg:scheduled-weighted} is a complete online allocation rule. The normalization above is used only in the analysis and does not require the algorithm to compute any MMS value.

\subsection{Load Analysis for a General Schedule}
\label{sec:additive-load-analysis}
\label{sec:additive-expectation}

We now analyze Algorithm~\ref{alg:scheduled-weighted} for an arbitrary schedule $(p_1,\ldots,p_n)$. The goal of this subsection is to obtain a probability bound for the fractional cost of each arriving agent. We will choose the schedule only in the next subsection.

Define $s_1:=1$ and, for every $i\in[n]$, $s_{i+1}:=s_i\cdot(1-p_i)$.
Thus $s_i=\prod_{j=1}^{i-1}(1-p_j)$. The quantity $s_i$ has a simple benchmark interpretation: if every chore were independently allocated in round $j$ with probability $p_j$, then $s_i$ would be the probability that the chore survives the first $i-1$ rounds and remains available before round $i$. We do not claim that the actual remaining chores are independent. Instead, $s_i$ describes the independent-retention process with which we compare the actual dependent process. Since $p_n=1$, we have $s_{n+1}=0$.

Applying Lemma~\ref{lem:rounding-cost} with the item set $R=R_i$ and $p=p_i$, and using $c_i(e)\le1$, gives
\begin{equation*}
  c_i(X_i)\le p_i\cdot c_i(R_i)+1,
  \qquad
  \E[c_i(X_i)\mid R_i]=p_i\cdot c_i(R_i).
\end{equation*}
For the last round, $p_n=1$, so $c_n(X_n)=p_n\cdot c_n(R_n)$. Consequently, every realization satisfies
$\rho(X)\le1+\max_{i\in[n]}\left\{p_i\cdot c_i(R_i)\right\}$.
Thus both the expectation and high-probability analyses reduce to controlling the random fractional costs $p_i\cdot c_i(R_i)$.

This is the same probabilistic difficulty that appeared in the binary section: selections in different rounds are dependent because selected chores disappear. There we compared the actual process with independent sampling from the full cores. Here the chores have unequal weights, so cardinality is no longer the right state variable. Instead, the next invariant compares the random remaining item set with independent retention at the level of products. Substituting exponential test values into this invariant will then give the same Chernoff rate function $\psi$ as in the binary analysis.

\begin{lemma}\label{lem:remaining-invariant}
For every $i\in[n]$ and every deterministic vector $(z_e)_{e\in M}$ with $z_e\ge1$,
\begin{equation}
  \E\left[\prod_{e\in R_i}z_e\right]
  \le\prod_{e\in M}\left(1-s_i+s_i z_e\right).
  \label{eq:remaining-set}
\end{equation}
\end{lemma}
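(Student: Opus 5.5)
We argue by induction on $i$, applying Lemma~\ref{lem:rounding-product} once per round and using the tower property.

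\textbf{Base case.} For $i=1$ we have $R_1=M$ and $s_1=1$. The right-hand side of \eqref{eq:remaining-set} is then $\prod_{e\in M}z_e$, so the claim holds with equality.

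\textbf{Inductive step.} Suppose \eqref{eq:remaining-set} holds for $i$ and every vector $(z_e)$ with $z_e\ge1$. We prove it for $i+1$.

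First condition on the history up to the start of round $i$. This fixes $R_i$ and $c_i$; the cost function $c_i$ is deterministic anyway because the adversary is oblivious. Since $R_{i+1}=R_i\setminus X_i$ and $X_i$ is the output of \wpralg\ on $R_i$ with fraction $p_i$, Lemma~\ref{lem:rounding-product} applied to the restriction of $(z_e)$ to $R_i$ gives
\begin{equation*}
  \E\left[\prod_{e\in R_{i+1}}z_e \,\middle|\, R_i\right]\le\prod_{e\in R_i}\bigl(p_i+(1-p_i)z_e\bigr)=\prod_{e\in R_i}z'_e,
\end{equation*}
where $z'_e:=p_i+(1-p_i)z_e$.

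Because $0<p_i\le1$ and $z_e\ge1$, we have $z'_e\ge1$. So $(z'_e)_{e\in M}$ is again a deterministic vector to which the inductive hypothesis applies. Taking expectations over $R_i$ gives
\begin{equation*}
  \E\left[\prod_{e\in R_{i+1}}z_e\right]\le\prod_{e\in M}\bigl(1-s_i+s_i z'_e\bigr).
\end{equation*}

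It remains to simplify each factor. Substituting $z'_e$,
\begin{equation*}
  1-s_i+s_i\bigl(p_i+(1-p_i)z_e\bigr)=1-s_i(1-p_i)+s_i(1-p_i)z_e=1-s_{i+1}+s_{i+1}z_e,
\end{equation*}
using $s_{i+1}=s_i(1-p_i)$. This is exactly \eqref{eq:remaining-set} for $i+1$.

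\textbf{Main points to check.} The induction must be stated for all vectors $z\ge1$ simultaneously, because the step invokes the hypothesis at the transformed vector $z'$ rather than at $z$. One must also confirm that $z'\ge1$, which is what keeps the hypotheses of Lemma~\ref{lem:rounding-product} in force in later rounds. A third detail is that Lemma~\ref{lem:rounding-product} is applied conditionally on the random set $R_i$ with $z$ restricted to $R_i$. This is legitimate because the rounding in round $i$ uses fresh randomness, and $c_i$, which determines the pair steps, is fixed in advance.
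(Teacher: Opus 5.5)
Your proposal is correct and follows the paper's own proof essentially step for step. Both argue by induction on $i$ from the base case $R_1=M$, $s_1=1$, apply Lemma~\ref{lem:rounding-product} conditionally on $R_i$, invoke the induction hypothesis at the transformed vector $p_i+(1-p_i)z_e\ge 1$, and finish with the identity $s_{i+1}=s_i(1-p_i)$.
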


The right-hand side is exactly the product moment obtained if every chore were retained independently with probability $s_i$. The proof is given in Appendix~\ref{app:additive-expectation-proofs}.

The lemma also justifies the benchmark interpretation of $s_i$ at the level of individual chores. Setting one coordinate $z_e$ equal to two and all others equal to one in \eqref{eq:remaining-set} gives
\begin{equation*}
  1+\Prb[e\in R_i]\le1+s_i,
  \qquad\text{hence}\qquad
  \Prb[e\in R_i]\le s_i.
\end{equation*}
For agent $i$, linearity of expectation and $c_i(M)\le n$ therefore imply
\begin{equation*}
  \E\left[p_i\cdot c_i(R_i)\right]
  =p_i\cdot \sum_{e\in M}c_i(e)\cdot \Prb[e\in R_i]
  \le p_i\cdot s_i\cdot c_i(M)
  \le n\cdot s_i\cdot p_i.
\end{equation*}
These marginal bounds alone do not control the maximum over all agents. We therefore use the full product invariant to obtain an exponential-moment bound and a corresponding probability bound. For agent $i$ and $\lambda>0$, choosing $z_e=\exp(\lambda\cdot p_i\cdot c_i(e))$ for each chore $e$ makes the left-hand side of the invariant equal to $\E\left[e^{\lambda p_i c_i(R_i)}\right]$.
Recall the Chernoff rate function $\psi$ from the binary analysis. It satisfies $\psi(1)=0$, $\psi'(x)=\log x$, and $\psi''(x)=1/x>0$.

\begin{lemma}\label{lem:scheduled-load}
For every round $i$ and every $\lambda>0$,
\begin{equation}
  \E\left[e^{\lambda p_i c_i(R_i)}\right]
  \le\left(1-s_i+s_i e^{\lambda p_i}\right)^n.
  \label{eq:fractional-load-mgf}
\end{equation}
Consequently, for every $u\ge n\cdot s_i\cdot p_i$,
\begin{equation}
  \Prb\left[p_i\cdot c_i(R_i)\ge u\right]
  \le\exp\left[-n\cdot s_i\cdot\psi\left(\frac{u}{n\cdot s_i\cdot p_i}\right)\right].
  \label{eq:scheduled-chernoff}
\end{equation}
\end{lemma}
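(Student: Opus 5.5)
The plan is to substitute exponential test values into Lemma~\ref{lem:remaining-invariant} and then use a convexity reduction to reach the binomial-type bound. Fix round $i$ and $\lambda>0$, and set $z_e:=\exp(\lambda p_i c_i(e))\ge1$. Lemma~\ref{lem:remaining-invariant} then gives
\begin{equation*}
  \E\left[e^{\lambda p_i c_i(R_i)}\right]\le\prod_{e\in M}\left(1-s_i+s_i e^{\lambda p_i c_i(e)}\right).
\end{equation*}
It remains to bound this product by $(1-s_i+s_ie^{\lambda p_i})^n$, using the normalization $c_i(e)\in(0,1]$ and $\sum_e c_i(e)=c_i(M)\le n$.

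Write $w_e:=c_i(e)\in(0,1]$ and $g(w):=\log(1-s_i+s_ie^{\lambda p_i w})$. Then $g(0)=0$, and $g$ is convex in $w$, because it is the log of a sum of log-convex (exponential) functions of $w$. Convexity on $[0,1]$ gives $g(w_e)\le w_e\cdot g(1)$. Summing over $e$ and using $\sum_e w_e\le n$ with $g(1)\ge0$ yields $\sum_e g(w_e)\le n\,g(1)$, which is exactly \eqref{eq:fractional-load-mgf}.

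For the tail bound \eqref{eq:scheduled-chernoff}, apply Markov's inequality:
\begin{equation*}
  \Prb[p_ic_i(R_i)\ge u]\le e^{-\lambda u}\left(1-s_i+s_ie^{\lambda p_i}\right)^n\le\exp\left(-\lambda u+n s_i\left(e^{\lambda p_i}-1\right)\right),
\end{equation*}
where the second step uses $1+x\le e^x$. This is a Poisson-type moment generating function in units of $p_i$. Optimizing, set $\theta:=\lambda p_i$ and $\mu:=ns_i$, so the exponent becomes $-\theta u/p_i+\mu(e^\theta-1)$. It is minimized at $e^\theta=u/(p_i\mu)$, which is $\ge1$ precisely by the hypothesis $u\ge ns_ip_i$; this guarantees $\lambda\ge0$, and we may take $\lambda>0$ or pass to the limit when equality holds. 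Substituting $x=u/(ns_ip_i)$, the exponent equals $-\mu(x\log x-x+1)=-ns_i\psi(x)$, which is the claimed bound.

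The main obstacle is the reduction from the weighted product to the $n$-fold power: the costs are unequal, and one has to see that convexity of $g$ together with the total-cost constraint makes the extremal configuration $n$ unit-cost chores. The remaining Chernoff optimization is routine and matches the binary rate function $\psi$.
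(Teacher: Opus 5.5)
Your proposal is correct and follows essentially the same route as the paper: substitute $z_e=e^{\lambda p_i c_i(e)}$ into the remaining-set invariant, use convexity of $g(w)=\log(1-s_i+s_ie^{\lambda p_i w})$ with $g(0)=0$, $g(1)\ge 0$ and $c_i(M)\le n$, then apply Markov's inequality, $\log(1+y)\le y$, and the choice $\lambda=\frac{1}{p_i}\log\frac{u}{ns_ip_i}$. The differences are cosmetic. You justify convexity by log-convexity where the paper computes $g''$ directly. You also leave implicit a point the paper states: the test vector is deterministic because the adversary is oblivious.
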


The proof is given in Appendix~\ref{app:additive-expectation-proofs}. Lemma~\ref{lem:scheduled-load} is the probability estimate for a general schedule that we will use below when choosing the schedule.

\subsection{Choosing the Schedule}
\label{sec:additive-schedule-choice}

The preceding analysis applies to every schedule $(p_1,\ldots,p_n)$. We now choose these fractions so that the probability bounds are simultaneously small for all agents. It is convenient to parameterize the schedule by $\beta_i:=n\cdot s_i\cdot p_i$ for all $i\in[n]$.
The quantity $\beta_i$ is the upper bound on the expected fractional cost obtained above. Define $B_i:=n\cdot s_i$. Since $s_{i+1}=s_i\cdot(1-p_i)$,
\begin{equation*}
  B_{i+1}=B_i-\beta_i,
  \qquad
  B_i=\sum_{j=i}^n\beta_j,
  \qquad
  p_i=\frac{\beta_i}{B_i}.
\end{equation*}
Moreover, $B_1=n$ and $B_{n+1}=0$, so $\sum_{i=1}^n\beta_i=n$. Conversely, any positive vector $(\beta_1,\ldots,\beta_n)$ summing to $n$ determines a schedule through the preceding formulas. This parameterization is useful only for choosing the fractions $p_i$; the allocation rule itself continues to use only the schedule $(p_i)$.
We fix $K:=2\log n$. Recall that $\psi(x)=x\log x-x+1$. Before any agent arrives, compute auxiliary values $\hat\beta_i$ backward. Set $\hat B_{n+1}:=0$. For $i=n,n-1,\ldots,1$, choose the unique $\hat\beta_i\in(0,1)$ satisfying
\begin{equation*}
  (\hat B_{i+1}+\hat\beta_i)\cdot\psi(1/\hat\beta_i)=K,
  \qquad
  \hat B_i:=\hat B_{i+1}+\hat\beta_i.
\end{equation*}
Such a value exists because, for every fixed $C\ge0$, the function $(C+x)\cdot\psi(1/x)$ is continuous and strictly decreasing from $+\infty$ to zero on $(0,1)$.
Set
\begin{equation*}
  \sigma:=\frac{n}{\hat B_1},
  \qquad
  \beta_i:=\sigma\cdot\hat\beta_i,
  \qquad
  B_i:=\sigma\cdot\hat B_i,
  \qquad
  p_i:=\frac{\beta_i}{B_i}.
\end{equation*}

We will use two properties of this construction. The first bounds the normalization factor $\sigma$. Its proof is the only substantial calculation concerning the backward recurrence and is deferred to the appendix.

\begin{lemma}\label{lem:schedule-deficit}
For the fixed choice $K=2\log n$,
\begin{equation*}
  n-\frac n\sigma
  =\sum_{i=1}^n\left(1-\frac{\beta_i}{\sigma}\right)
  \le(4+o(1))\sqrt{n\log n}.
\end{equation*}
Consequently,
$\sigma-1\le(4+o(1))\sqrt{\frac{\log n}{n}}$.
\end{lemma}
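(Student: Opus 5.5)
The plan is to rewrite the deficit purely in terms of the backward recurrence, bound each term by a square-root law, and sum. Since $\beta_i/\sigma=\hat\beta_i$ and $n/\sigma=\hat B_1=\sum_i\hat\beta_i$, the quantity to bound is
\[
D:=\sum_{i=1}^n(1-\hat\beta_i).
\]
It is convenient to index by $k:=n-i+1$, the number of rounds from the end. Write $D_k$ for the sum of the last $k$ deficits, so that $\hat B_i=k-D_k$. The final consequence is then immediate:
\[
\sigma-1=\frac{n}{n-D}-1=\frac{D}{n-D}\le(4+o(1))\sqrt{\frac{\log n}{n}},
\]
because $D=O(\sqrt{n\log n})=o(n)$.

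\textbf{Per-step bound.} I would first establish the elementary inequality $\psi(y)\ge\frac{(y-1)^2}{2y}$ for $y\ge1$. Both sides vanish at $y=1$ together with their first derivatives, and a direct derivative comparison should finish it. Substituting $y=1/x$ with $x=\hat\beta_i$ gives $\psi(1/\hat\beta_i)\ge(1-\hat\beta_i)^2/(2\hat\beta_i)\ge(1-\hat\beta_i)^2/2$. Plugging this into the defining equation $\hat B_i\,\psi(1/\hat\beta_i)=K$ yields
\[
1-\hat\beta_i\le\sqrt{\frac{2K}{\hat B_i}}.
\]
Heuristically, if $\hat B_i\approx k$, then summing gives $\sum_{k\le n}\sqrt{2K/k}\approx 2\sqrt{2Kn}=4\sqrt{n\log n}$ for $K=2\log n$. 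This is exactly the claimed constant, so the remaining work is showing that $\hat B_i=(1-o(1))k$ for all but a negligible initial stretch.

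\textbf{Initial stretch (expected main obstacle).} Near the end of the schedule (small $k$), $\hat B_i$ is tiny. At $k=1$ the equation reads $x\,\psi(1/x)=K$, which forces $\hat\beta_n\approx n^{-2}$, so the square-root bound is useless there. I would split into phases and bound the deficit in each.
\begin{itemize}
\item While $\hat B_{i+1}<K$: use $\psi(1/x)\le\frac1x\log\frac1x$. Given the current value $C=\hat B_{i+1}$, this forces $x\gtrsim C\log(K/C)/K$, so $\hat B$ grows by a factor of at least $1+\Omega(1/K)$ per step.
\item Starting from about $n^{-2}$, reaching $K$ therefore takes $O(K\log n)=O(\log^2 n)$ steps.
\item Once $\hat B_{i+1}\ge K$, the equation forces $\psi(1/\hat\beta_i)\le1$, so $\hat\beta_i\ge1/e$ by monotonicity of $\psi$ and $\psi(e)=1$. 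Hence reaching $T:=K\log n$ takes $O(\log^2 n)$ further steps.
\item In both phases each deficit is at most $1$, so their total contribution is $O(\log^2 n)=o(\sqrt{n\log n})$.
\end{itemize}

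\textbf{Bootstrap and summation.} For the remaining steps, where $\hat B_i\ge T$, I would bootstrap the per-step bound. By induction on $k$, show $D_k\le O(\log^2n)+3\sqrt{2Kk}$, using the step bound together with $\hat B_i\ge T$ and $\hat B_i=k-D_k$. This gives $\hat B_i\ge k(1-o(1))$ uniformly for $k\ge\log^4 n$. For $k$ between the end of the initial stretch and $\log^4 n$, the crude bound $\hat B\ge T$ contributes only $O(\log^4 n\cdot\sqrt{K/T})=o(\sqrt{n\log n})$. Summing the per-step bound over $k\ge\log^4 n$ then gives
\[
\sum_k\sqrt{\frac{2K}{(1-o(1))k}}\le(1+o(1))\,2\sqrt{2Kn}=(4+o(1))\sqrt{n\log n}.
\]

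The delicate part is the phase bookkeeping near the end of the schedule. There I expect to adjust the thresholds ($T$ and $\log^4 n$) so that all error terms are $o(\sqrt{n\log n})$.
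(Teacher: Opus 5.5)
Your route is sound in outline but differs from the paper's. You discard the factor $1/\hat\beta_i$ in $\psi(1/\hat\beta_i)\ge(1-\hat\beta_i)^2/(2\hat\beta_i)$ and get $1-\hat\beta_i\le\sqrt{2K/\hat B_i}$. You then have to prove separately that $\hat B_i=(1-o(1))k$, and that is what forces both the end-of-schedule phase analysis and the bootstrap.

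The bootstrap does close. Once $\hat B_{i+1}\ge T=K\log n$, each deficit is at most $\sqrt{2/\log n}$, which keeps $D_k=O(\log^2 n)$ up to $k\approx\log^{5/2}n$. Beyond that, $\hat B_{i+1}=k-1-D_{k-1}\ge k/2.25$, and the increment satisfies $\sqrt{2K/\hat B_{i+1}}\le 3\sqrt{2K}(\sqrt k-\sqrt{k-1})$. This preserves $D_k\le A+3\sqrt{2Kk}$ with $A=O(\log^2 n)$.

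The paper avoids all of this by keeping the $1/\hat\beta_i$. It bounds $(1-\hat\beta_i)/\hat\beta_i\le 2\sqrt{\log n/\hat B_i}+4\log n/\hat B_i$ and multiplies by $\hat\beta_i=\hat B_i-\hat B_{i+1}$. Each deficit then becomes an increment of $\hat B$ times a decreasing function of $\hat B_i$. The sum is compared with the Riemann integrals $\int_{\hat B_n}^{\hat B_1}x^{-1/2}\,dx\le 2\sqrt n$ and $\int_{\hat B_n}^{\hat B_1}x^{-1}\,dx=\log(\hat B_1/\hat B_n)$. The only fact needed about the end of the schedule is $\hat B_n=\hat\beta_n\ge e^{-1}n^{-2}$, and no comparison of $\hat B_i$ with $k$ is required. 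Both arguments give the constant $4$. The paper's is much shorter and yields explicit lower-order terms $1+4\log n\,(3\log n+1)$.

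One step of yours is wrong as written. In your first bullet, the upper bound $\psi(1/x)\le\frac1x\log\frac1x$ cannot give a lower bound on $x$. Since $x\mapsto(C+x)\psi(1/x)$ is decreasing, showing $x\ge x_0$ requires $(C+x_0)\psi(1/x_0)\ge K$, which needs a \emph{lower} bound on $\psi$. Your inequality only yields $Kx/(C+x)\le\log(1/x)$, which constrains $x$ from above.

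The fix is the inequality you already proved, without dropping the $1/x$. From $K=(C+x)\psi(1/x)\ge(C+x)(1-x)^2/(2x)$ you get $x\ge\min\{1/2,\,C/(8K)\}$. So while $\hat B_{i+1}<K$, the value of $\hat B$ grows by a factor of at least $1+1/(8K)$ per step, and your $O(K\log n)=O(\log^2 n)$ step count follows. This is exactly the effect captured by the paper's $4\log n\cdot\log(\hat B_1/\hat B_n)$ term.
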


The proof is given in Appendix~\ref{app:additive-expectation-proofs}.
The second property explains the choice of the recurrence: it makes the Chernoff exponent at the common threshold $\sigma$ identical in every round.

\begin{lemma}\label{lem:schedule-construction}
For the above backward construction, $(p_1,\ldots,p_n)$ is a schedule. Moreover, we have $\sigma>1$, $0<\beta_i<\sigma$ for all $i\in[n]$, and
\begin{equation*}
  B_i\cdot \psi\left(\frac{\sigma}{\beta_i}\right)=\sigma\cdot K
  =2\sigma\cdot \log n
  \qquad \forall i\in[n].
\end{equation*}
\end{lemma}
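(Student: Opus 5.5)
The proof is a direct unwinding of the backward construction, and I will carry it out in three steps.

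\textbf{Step 1: the hatted quantities are well defined and positive.} By the existence remark preceding the lemma, each $\hat\beta_i\in(0,1)$ exists and is unique. Hence $\hat B_i=\sum_{j=i}^n\hat\beta_j$ is positive and strictly decreasing in $i$. Since every $\hat\beta_i<1$, we get $\hat B_1<n$, so $\sigma=n/\hat B_1>1$. Then $\beta_i=\sigma\hat\beta_i\in(0,\sigma)$, which gives $0<\beta_i<\sigma$.

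\textbf{Step 2: the numbers $p_i$ form a schedule.} The scaling by $\sigma$ cancels in the ratio, so
\[
p_i=\frac{\beta_i}{B_i}=\frac{\hat\beta_i}{\hat B_i}=\frac{\hat\beta_i}{\hat B_{i+1}+\hat\beta_i}\in(0,1].
\]
For $i=n$ we have $\hat B_{n+1}=0$, so $p_n=1$. Thus $(p_1,\ldots,p_n)$ is a schedule. The remaining point is that the scaled $B_i$ agrees with the quantity $n s_i$ of the previous subsection. We have $B_1=\sigma\hat B_1=n$ and $B_{i+1}=B_i-\beta_i=B_i(1-p_i)$. This is exactly the recursion for $n s_{i+1}=n s_i(1-p_i)$, so induction gives $B_i=n s_i$, and likewise $\beta_i=n s_i p_i$. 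This identification matters, because the later application of Lemma~\ref{lem:scheduled-load} needs the parameterized quantities and the schedule-derived quantities to coincide.

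\textbf{Step 3: the exponent identity.} Substituting $B_i=\sigma\hat B_i$ and $\beta_i=\sigma\hat\beta_i$ gives $\sigma/\beta_i=1/\hat\beta_i$. Therefore
\[
B_i\,\psi(\sigma/\beta_i)=\sigma\,\hat B_i\,\psi(1/\hat\beta_i)=\sigma(\hat B_{i+1}+\hat\beta_i)\,\psi(1/\hat\beta_i)=\sigma K,
\]
where the last equality is the defining equation of $\hat\beta_i$. With $K=2\log n$ this equals $2\sigma\log n$.

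Most of the argument is bookkeeping. The only genuine point is that the normalization is scale-invariant: $\psi$ is evaluated at $\sigma/\beta_i=1/\hat\beta_i$, so rescaling only multiplies the exponent by $\sigma$. If I had to justify the existence claim rather than assume it, I would first check that $(C+x)\psi(1/x)$ is strictly decreasing on $(0,1)$. Its derivative is $\psi(1/x)-(C+x)\log(1/x)/x^2$. At $x=1$ this is $0-0=0$. For $x<1$ it is negative, because $\psi(y)<y^2\log y$ for $y>1$ and $(C+x)/x^2\ge 1/x$. The limits at the endpoints are $+\infty$ as $x\to 0$ and $0$ as $x\to 1$, which gives existence and uniqueness.
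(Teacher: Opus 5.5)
Your proof is correct and follows the paper's argument: $\hat B_1<n$ gives $\sigma>1$ and $0<\beta_i<\sigma$; the ratio $p_i=\hat\beta_i/\hat B_i$ is unaffected by the scaling and has $p_n=1$; and multiplying the defining recurrence by $\sigma$, using $\sigma/\beta_i=1/\hat\beta_i$, yields the exponent identity. Your optional aside on existence has a small slip: from $(C+x)/x^2\ge 1/x=y$ the subtracted term is only bounded below by $y\log y$, so the inequality you need is $\psi(y)<y\log y$ (true, since $\psi(y)-y\log y=1-y<0$), not $\psi(y)<y^2\log y$.
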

\begin{proof}
Since every $\hat\beta_i<1$, we have $0<\hat B_1=\sum_{i=1}^n\hat\beta_i<n$, and hence $\sigma=n/\hat B_1>1$. The normalized values satisfy $\sum_{i=1}^n\beta_i=n$, so they determine a schedule. Also, $0<\beta_i<\sigma$. Finally, $\hat B_i=\hat B_{i+1}+\hat\beta_i$, so multiplying the defining recurrence by $\sigma$ and using $B_i=\sigma\cdot \hat B_i$ and $\beta_i=\sigma\cdot \hat\beta_i$ gives the last statement.
\end{proof}

We can now prove the high-probability part of Theorem~\ref{thm:additive-high-probability}. By the deterministic bound from Section~\ref{sec:additive-load-analysis}, if $\rho(X)\ge1+\sigma$, then $p_i\cdot c_i(R_i)\ge\sigma$ for some agent $i$. Because $n\cdot s_i\cdot p_i=\beta_i<\sigma$, a union bound and Lemma~\ref{lem:scheduled-load} give
\begin{align*}
  \Prb[\rho(X)\ge1+\sigma]
  \le\sum_{i=1}^n
  \Prb\left[p_i\cdot c_i(R_i)\ge\sigma\right]
  \le\sum_{i=1}^n
  \exp\left[-B_i\cdot\psi\left(\frac{\sigma}{\beta_i}\right)\right]
  =n e^{-\sigma K}
  \le\frac1n,
\end{align*}
where the last inequality uses $\sigma>1$ and $K=2\log n$. Therefore, with probability at least $1-1/n$,
\begin{equation*}
  \rho(X)<1+\sigma
  =2+(\sigma-1)
  \le2+(4+o(1))\sqrt{\frac{\log n}{n}},
\end{equation*}
where the last inequality follows from Lemma~\ref{lem:schedule-deficit}.

\smallskip

For the expectation bound, we use the following elementary integral estimate.

\begin{lemma}
\label{lem:convex-integral}
Let $f$ be differentiable and convex on $[a,\infty)$, and suppose that $f'(a)>0$. Then
\begin{equation*}
  \int_a^\infty e^{-f(u)}\,du
  \le
  \frac{e^{-f(a)}}{f'(a)}.
\end{equation*}
\end{lemma}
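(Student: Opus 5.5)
The plan is to use the supporting-line property of a convex function. Since $f$ is differentiable and convex on $[a,\infty)$, its graph lies above its tangent line at $a$:
\begin{equation*}
  f(u)\ge f(a)+f'(a)\cdot(u-a)\qquad\text{for every }u\ge a.
\end{equation*}
This gives a pointwise bound on the integrand,
\begin{equation*}
  e^{-f(u)}\le e^{-f(a)}\cdot e^{-f'(a)(u-a)} .
\end{equation*}

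Next, integrate both sides over $[a,\infty)$. Because $f'(a)>0$, the right-hand side is an integrable exponential. Its integral is
\begin{equation*}
  e^{-f(a)}\int_0^\infty e^{-f'(a)t}\,dt=\frac{e^{-f(a)}}{f'(a)}.
\end{equation*}
The integrand $e^{-f(u)}$ is nonnegative and continuous, since a differentiable function is continuous. Monotonicity of the (possibly improper) integral therefore yields the claimed bound, and it also shows the left-hand integral is finite.

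The only point that needs care is the tangent-line inequality at the left endpoint of a closed half-line. It holds because differentiability at $a$ is understood one-sidedly there. The standard argument applies: for convex $f$, the difference quotient $(f(u)-f(a))/(u-a)$ is nondecreasing in $u>a$, and it converges to $f'(a)$ as $u\downarrow a$. Hence it is at least $f'(a)$ for every $u>a$, which is exactly the tangent-line inequality. I do not expect any real obstacle here; the rest is the routine evaluation of the exponential integral.
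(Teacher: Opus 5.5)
Your proposal is correct and follows essentially the same route as the paper: the tangent-line inequality $f(u)\ge f(a)+f'(a)(u-a)$ from convexity, followed by integrating the resulting exponential bound using $f'(a)>0$. Your additional justification of the tangent-line inequality at the endpoint, via the monotone difference quotient, is a valid detail that the paper leaves implicit.
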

\begin{proof}
Convexity gives $f(u)\ge f(a)+f'(a)(u-a)$ for every $u\ge a$. Since $f'(a)>0$,
\begin{equation*}
  \int_a^\infty e^{-f(u)}du
  \le e^{-f(a)}\cdot \int_0^\infty e^{-f'(a)v}dv
  =\frac{e^{-f(a)}}{f'(a)}.
  \qedhere
\end{equation*}
\end{proof}

We now prove the expectation bound. Again using
$\rho(X)\le1+\max_{i\in[n]}\{p_i\cdot c_i(R_i)\}$, the integral representation of the nonnegative excess above $\sigma$, a union bound, and Lemma~\ref{lem:scheduled-load} give
\begin{equation*}
  \E[\rho(X)]
  \le1+\sigma+
  \sum_{i=1}^n
  \int_\sigma^\infty
  e^{-B_i\psi(u/\beta_i)}\,du.
\end{equation*}
Fix a round $i$ and let $f(u):=B_i\cdot \psi(u/\beta_i)$. We have
\begin{equation*}
  f'(u)=\frac{B_i}{\beta_i}\cdot \log\left(\frac{u}{\beta_i}\right),
  \qquad
  f''(u)=\frac{B_i}{\beta_i}\cdot\frac1u>0.
\end{equation*}
Thus $f$ is convex. Moreover, $f(\beta_i)=0$ and Lemma~\ref{lem:schedule-construction} gives $f(\sigma)=\sigma K$. Since $\beta_i<\sigma$, convexity implies $f'(\sigma)\ge\frac{f(\sigma)-f(\beta_i)}{\sigma-\beta_i} = \frac{\sigma K}{\sigma-\beta_i}>0$.
Lemma~\ref{lem:convex-integral} therefore gives
\begin{equation*}
  \int_\sigma^\infty e^{-B_i\psi(u/\beta_i)}\,du
  \le e^{-\sigma K}\cdot\frac{\sigma-\beta_i}{\sigma K}.
\end{equation*}
Summing over the rounds and using $\sum_{i=1}^n\beta_i=n$ gives
\begin{equation*}
  \E[\rho(X)]
  \le1+\sigma+
  \frac{n(\sigma-1)}{\sigma K}e^{-\sigma K}.
\end{equation*}
Since $\sigma>1$ and $K=2\log n$, the last term satisfies
\begin{equation*}
  \frac{n(\sigma-1)}{\sigma K}e^{-\sigma K}
  \le \frac{\sigma-1}{2\sigma n\log n}
  =o\left(\sqrt{\frac{\log n}{n}}\right).
\end{equation*}
Combining this with Lemma~\ref{lem:schedule-deficit} yields $\E[\rho(X)] \le2+(4+o(1))\sqrt{\frac{\log n}{n}}$.
This proves the expectation statement in Theorem~\ref{thm:additive-high-probability}.

The preceding schedule is designed for the asymptotic guarantee and may be loose for small $n$. By setting the schedule differently, we obtain a uniform expected-ratio guarantee below $2.594$.

\begin{theorem}\label{thm:additive-uniform-bound}
For every $n\ge2$, there exists a schedule for Algorithm~\ref{alg:scheduled-weighted} such that the returned complete allocation $X$ satisfies
$\E[\rho(X)]<2.594$.
\end{theorem}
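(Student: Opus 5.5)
The plan is to reuse the load analysis of Section~\ref{sec:additive-load-analysis} verbatim and change only the schedule and the final integration. For any schedule, every realization satisfies $\rho(X)\le 1+\max_i\{p_i c_i(R_i)\}$, and Lemma~\ref{lem:scheduled-load} bounds the upper tail of each $p_i c_i(R_i)$ by $\exp[-B_i\psi(u/\beta_i)]$ for $u\ge\beta_i$, where $\beta_i=n s_i p_i$, $B_i=\sum_{j\ge i}\beta_j$ and $\sum_i\beta_i=n$. Hence, for any threshold $a\ge\max_i\beta_i$,
\begin{equation*}
  \E[\rho(X)]\le 1+a+\sum_{i=1}^n\int_a^\infty e^{-B_i\psi(u/\beta_i)}\,du
  \le 1+a+\sum_{i=1}^n \frac{e^{-B_i\psi(a/\beta_i)}}{(B_i/\beta_i)\log(a/\beta_i)},
\end{equation*}
where the second step is Lemma~\ref{lem:convex-integral} with $f(u)=B_i\psi(u/\beta_i)$, whose derivative at $a$ is $(B_i/\beta_i)\log(a/\beta_i)$. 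This is the quantity I would bound uniformly in $n$.

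\textbf{Large $n$.} Take the backward schedule of Section~\ref{sec:additive-schedule-choice}, but with $K$ a fixed constant rather than $2\log n$ (or simply the same construction), and $a=\sigma$. The bound derived there becomes $\E[\rho(X)]\le 1+\sigma+\frac{n(\sigma-1)}{\sigma K}e^{-\sigma K}$. Using $K=2\log n$ and Lemma~\ref{lem:schedule-deficit}, the expectation is $2+(4+o(1))\sqrt{\log n/n}+O(1/(n\log n))$. This falls below $2.594$ once $n$ exceeds some explicit $n_0$. To make that explicit, I would replace the $o(1)$ in Lemma~\ref{lem:schedule-deficit} by a concrete inequality, bounding $1-\hat\beta_i$ through $\psi(1/x)\ge$ an explicit elementary lower bound, to get $\sigma-1\le c\sqrt{\log n/n}$ for all $n\ge n_0$ with moderate $n_0$ (say a few hundred). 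It may also help to tune $K=\gamma\log n$ with $\gamma$ slightly different from $2$ to shrink $n_0$.

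\textbf{Small $n$} ($2\le n<n_0$). Here I would use a simpler schedule for which the integral bound above can be evaluated explicitly. The natural choice is uniform $\beta_i=1$, i.e.\ $p_i=1/(n-i+1)$ and $B_i=n-i+1$, or a two-parameter family in which the early $\beta_i$ are slightly above one and the last few below. With $a$ chosen in $[1,1.6]$, each summand is $e^{-(n-i+1)\psi(a)}/((n-i+1)\log a)$. Summing gives a geometric-type series in $e^{-\psi(a)}$, bounded independently of $n$. Optimizing $a$ yields a constant, which I would then check numerically is below $2.594$. For the few smallest $n$, one can also exploit that the last round has no rounding error and that $\E[\rho]\le 1+\E\max_i p_ic_i(R_i)$ is tighter than the union bound. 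If needed, the bound can be computed exactly for each $n<n_0$ with an optimized schedule, as in Appendix~\ref{app:binary-parameters}.

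\textbf{Main obstacle.} The hard part is the uniform constant. The union bound over $n$ rounds loses a factor growing like $\log n$ if the threshold is fixed, so for moderate $n$ neither the asymptotic schedule (whose $\sqrt{\log n/n}$ term is still large) nor a fixed-threshold schedule is obviously below $2.594$. I would first find the optimal $K$ as a function of $n$ numerically, then prove a monotonicity or explicit-error statement for the backward recurrence so that finitely many numerical checks plus one analytic tail bound cover all $n$. I expect the value $2.594$ to be the worst case attained at some intermediate $n$, and the explicit version of Lemma~\ref{lem:schedule-deficit} is where most of the work lies.
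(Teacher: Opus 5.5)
Your threshold certificate $1+a+\sum_i\int_a^\infty e^{-B_i\psi(u/\beta_i)}\,du$ is the one the paper uses for $n\ge200$. The gap is in the small and moderate range of $n$, which is where the constant $2.594$ is decided. There this certificate fails for every schedule.

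\textbf{Why the certificate cannot work for small $n$.} It spends a full unit on rounding, through $c_i(X_i)\le p_ic_i(R_i)+1$. It also needs $a\ge\max_i\beta_i\ge1$, because $\sum_i\beta_i=n$. So all tail terms must fit into less than $0.594$. Now take any agent with $\beta_i\ge1$. Since $B_i\le n$ and $\psi(u/\beta_i)\le\psi(u)$ for $u\ge a\ge\beta_i$, her term alone is at least $\int_a^\infty e^{-n\psi(u)}\,du$. Minimizing $a+\int_a^\infty e^{-n\psi(u)}\,du$ over $a\ge1$, the certificate is at least $2+\int_1^\infty e^{-n\psi(u)}\,du$. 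This is about $2.83$ for $n=3$ and still above $2.594$ for $n=4,5$. For moderate $n$, the union bound over many agents with small $B_i$ keeps it well above target, which is why the paper uses this certificate only from $n\ge200$.

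\textbf{Your concrete choice.} The uniform schedule $\beta_i\equiv1$ is much worse. The last agent alone gives $1+a+\int_a^\infty e^{-\psi(u)}\,du>3.5$. The $n$-independent series $1+a-\log(1-e^{-\psi(a)})/\log a$ that you describe never drops below about $4.15$.

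\textbf{The fallback does not close the gap.} ``Computing the bound exactly for each $n<n_0$'' is not possible. The expectation is over an adversarial instance, so it can only be bounded through some certificate, and yours is the one that fails.

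\textbf{The large-$n$ route enlarges the bad range.} With $K=2\log n$, $\sigma-1\approx4\sqrt{\log n/n}$ is still close to the whole budget for $n$ in the hundreds, so $n_0$ is large. Also, a fixed threshold does not lose a $\log n$ factor once $\beta_i$ stays bounded below $a$, since the exponents grow linearly in $B_i$.

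\textbf{What the paper does instead.} For $2\le n\le200$ it changes two things.
\begin{itemize}
\item It keeps the rounding error centered. Lemma~\ref{lem:rounding-cost} gives $c_i(X_i)=p_ic_i(R_i)+d(Z-x)$ with $d\le1$, and one shows $\E_Z[e^{\lambda d(Z-x)}]\le e^{\chi(\lambda)}$, where $\chi(\lambda)=\log\bigl((e^\lambda-1)/(e\lambda)\bigr)+\lambda/(e^\lambda-1)$. Rounding then effectively costs about $\chi(\lambda)/\lambda$ instead of $1$.
\item It replaces threshold-plus-union-bound by the log-sum-exp bound $\E[\rho(X)]\le\lambda^{-1}\log\bigl(e^\lambda+\sum_i\E[e^{\lambda c_i(X_i)}]\bigr)$, with $\lambda=21/10+\sqrt n/4$. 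This bound is fed with the moment bound \eqref{eq:fractional-load-mgf} itself, not its Poisson relaxation.
\end{itemize}
The schedule matters too. The paper takes $B_i=5k^2/(q+4k)$ with $k=n-i+1$ and $q=\min\{n,200\}$. This drives $\beta_i$ almost to zero for the last agents, whose $B_i$ is tiny and concentration weak, while keeping $\beta_i\le6/5$.

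A finite numerical evaluation over $2\le n\le200$ then stays below about $2.5935$, with the worst case at $n=22$. For $n\ge200$ the paper applies your certificate with this fixed-$q$ schedule and $a=3/2$, and gets an $n$-independent value below $2.579$. So neither the asymptotic backward schedule nor an explicit version of Lemma~\ref{lem:schedule-deficit} is needed.
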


The explicit schedule and its proof are given in Appendix~\ref{app:schedule_for_uniform}.




\section*{Declaration for the Use of AI}
We used OpenAI's GPT-6 Astra to assist with mathematical analysis, language editing, notation consistency checks, and LaTeX formatting. All mathematical statements, proofs, calculations, and citations were reviewed and verified by the authors, who take full responsibility for the content of the paper.

\newpage
\bibliographystyle{alpha}
\bibliography{Online_MMS}

\newpage

\appendix

\section{Missing Proofs from the Binary-Cost Analysis}
\label{app:binary}

\begin{proofof}{Lemma~\ref{lem:binary-convex}}
Append $C_{\ell+1}:=Q$ to the nested sequence of cores and put $p_i:=\frac{|C_{i+1}|}{|C_i|}$ for all $i\in[\ell]$.
We analyze the selections by marking every chore that has already been selected. Although a marked chore is no longer available to the algorithm, we keep its location when we restrict a core to the next nested core. Conditional on the number of marks in a core, their locations are uniform in that core. This property holds initially and is preserved both when uniformly chosen unmarked chores are marked and when the core is restricted to a fixed subset.

The only complication is that the algorithm may exhaust the currently unallocated part of a core. This is exactly why the indicator appears in the statement of the lemma. If a selection exhausts $C_i$, every still-unallocated chore of $Q\subseteq C_i$ is selected in that round. Hence $H_Q=|Q|$ at the end, and the random variable
$\phi(H_Q)\cdot\mathbf 1_{\{H_Q<|Q|\}}$
vanishes on this branch. We may therefore upper bound exhausted branches by zero and analyze only nonexhausted selections.

We first record the property that allows the same comparison to be repeated through all nested cores. Let $f$ be nonnegative, nonincreasing, and convex on the nonnegative integers, let $0\le p\le1$, and define $g(s):=\E[f(\operatorname{Bin}(s,p))]$ for $s=0,1,\ldots$.
Couple $\operatorname{Bin}(s+1,p)$ with $\operatorname{Bin}(s,p)$ by adding one independent Bernoulli variable. If $B_s\sim\operatorname{Bin}(s,p)$, then
\begin{align*}
  g(s+1)-g(s)
  &=p\cdot\E[f(B_s+1)-f(B_s)]\le0,\\
  g(s+2)-2g(s+1)+g(s)
  &=p^2\cdot\E[f(B_s+2)-2f(B_s+1)+f(B_s)]\ge0.
\end{align*}
Thus binomial averaging preserves nonnegativity, monotonicity, and convexity. Replacing $s$ by $s+q$ for a fixed nonnegative integer $q$ preserves the same three properties.

We now compare the actual process with a relaxed process by backward induction. At the terminal core $Q=C_{\ell+1}$, use the relaxed continuation value
$f_{\ell+1}(s):=\phi(s)$.
The actual terminal payoff is $\phi(s)\cdot\mathbf 1_{\{s<|Q|\}} \le f_{\ell+1}(s)$, so the desired domination holds at the terminal stage.

Suppose the relaxed continuation function $f_{i+1}$ has already been defined and is nonnegative, nonincreasing, and convex. Define
\begin{equation*}
  f_i(s):=
  \E\left[f_{i+1}\left(\operatorname{Bin}(s+q_i,p_i)\right)\right].
\end{equation*}
By the preceding preservation property, $f_i$ again has all three properties.

Consider an actual state with $s$ marked locations in $C_i$. If the round-$i$ selection exhausts the unallocated part of $C_i$, its eventual truncated payoff is zero and is therefore at most the nonnegative quantity $f_i(s)$. Otherwise, the round marks exactly $q_i$ new locations. Before the restriction to $C_{i+1}$ there are therefore $s+q_i$ marked locations, uniformly distributed in $C_i$. The number $S'$ of these marks lying in $C_{i+1}$ has a hypergeometric distribution. Hoeffding's convex comparison for sampling without replacement~\citep{hoeffding1963probability} gives
\begin{equation*}
  \E[f_{i+1}(S')]
  \le
  \E\left[f_{i+1}\left(\operatorname{Bin}(s+q_i,p_i)\right)\right]
  =f_i(s).
\end{equation*}
Combining this inequality with the induction hypothesis for the continuation after the restriction proves that the actual expected truncated payoff from state $s$ is at most $f_i(s)$. Induction from $i=\ell$ down to $i=1$ therefore yields $\E\left[\phi(H_Q)\cdot\mathbf 1_{\{H_Q<|Q|\}}\right]
  \le f_1(0)$.

It remains to identify the relaxed terminal count represented by $f_1(0)$. In the relaxed process, each of the $q_i$ marks introduced in round $i$ survives the successive independent binomial restrictions with probability $p_i\cdot p_{i+1}\cdots p_\ell =\frac{|Q|}{|C_i|}$.
All these survival events are independent in the relaxed process. Hence the terminal count is distributed as $\sum_{i=1}^{\ell} \operatorname{Bin}\left(q_i,\frac{|Q|}{|C_i|}\right)$, with independent summands, which is exactly the auxiliary variable $K$. Thus $\E\left[\phi(H_Q)\cdot\mathbf 1_{\{H_Q<|Q|\}}\right] \le \E[\phi(K)]$.
 
For the second inequality, $K$ is a sum of independent Bernoulli random variables with mean $\E[K]$. A standard Poisson convex-order comparison states that such a sum is dominated in convex order by a Poisson random variable with the same mean; see, e.g., \citep[Theorem~2.2]{daly2016negative}. Since $\phi$ is convex, $\E[\phi(K)] \le \E[\phi(\operatorname{Pois}(\E[K]))]$.
Finally, if $\phi(|Q|)=0$, then $\phi(H_Q)\cdot\mathbf 1_{\{H_Q<|Q|\}}=\phi(H_Q)$, which proves the last statement of the lemma.
\end{proofof}

\begin{proofof}{Corollary~\ref{cor:binary-selection-tail}}
If $y=0$, the event $H_Q<y$ is empty. If $y=\lambda$, both stated
upper bounds are one. We therefore assume $0<y<\lambda$.
Fix $\theta>0$ and apply Lemma~\ref{lem:binary-convex} with
$\phi(w)=e^{-\theta w}$. Since this function is nonnegative,
nonincreasing, and convex,
\begin{equation*}
  \E\left[e^{-\theta H_Q}\cdot\mathbf 1_{\{H_Q<|Q|\}}\right]
  \le
  \E[e^{-\theta K}]
  \le
  \exp\left(\E[K]\cdot\left(e^{-\theta}-1\right)\right).
\end{equation*}
The second inequality follows from the Poisson comparison in
Lemma~\ref{lem:binary-convex}; equivalently, it is the standard
exponential-moment bound for a sum of independent Bernoulli variables.

On the event $H_Q<y$, we have $H_Q<|Q|$ and
$e^{-\theta H_Q}>e^{-\theta y}$. Markov's inequality therefore gives
\begin{equation*}
  \Prb[H_Q<y]
  \le
  \exp\left(\theta y+\E[K]\cdot\left(e^{-\theta}-1\right)\right)
  \le
  \exp\left(\theta y+\lambda\cdot\left(e^{-\theta}-1\right)\right),
\end{equation*}
where the last step uses $\E[K]\ge\lambda$ and
$e^{-\theta}-1<0$. Choosing $\theta=\log(\lambda/y)$ makes
$e^{-\theta}=y/\lambda$ and gives
\begin{equation*}
  \Prb[H_Q<y]
  \le
  \exp\left[-\lambda\psi\left(\frac{y}{\lambda}\right)\right].
\end{equation*}
Finally, $0<y/\lambda<1$, so the quadratic lower bound on $\psi$ gives
$\lambda\psi(y/\lambda)\ge(\lambda-y)^2/(2\lambda)$, proving the
second inequality.
\end{proofof}

\begin{proofof}{Lemma~\ref{lem:binary-failure-two}}
Fix the transition index $h$. If $h=0$, the core is small from the first round and the algorithm assigns at most one costly chore to each agent, so $\rho(X)\le1$. We may therefore assume that $h\ge1$.
First suppose that $h<n$. The first $h$ agents have ratio at most $3/2$. By Lemma~\ref{lem:binary-deterministic}, every later agent has ratio at most $\max\left\{1,\left\lceil\frac{n-H}{n-h}\right\rceil\right\}$.
If $h\le n/2$, then $(n-H)/(n-h)\le n/(n-h)\le2$, so a ratio above two is impossible. We may therefore assume that $h>n/2$.
A ratio above two requires
$H<2h-n<h$.
Condition on $T=t$. Since $t\le L<n/2<h$, Lemma~\ref{lem:binary-insurance-tail} gives
\begin{equation*}
  \Prb[\rho(X)>2\mid T=t]
  \le
  \Prb[H<h\mid T=t]
  \le
  \exp\left(-\frac{(h-t)^2}{24h}\right)
  \le
  \exp\left(-\frac{(h-L)^2}{24h}\right).
\end{equation*}
For $h>L$, the function $(h-L)^2/h=h-2L+L^2/h$ is increasing in $h$. Since $h>n/2$, we obtain
\begin{equation*}
  \frac{(h-L)^2}{24h}
  \ge
  \frac{(n/2-L)^2}{24(n/2)}
  =\frac{(n-2L)^2}{48n}.
\end{equation*}
Averaging over $T$ proves the claimed bound whenever $h<n$.

It remains to consider $h=n$. Let $H_{C_n}$ be the number of chores of $C_n$ selected by the prescribed quotas in the $n$ rounds before the final cleanup. Conditional on $T=t\le L$, apply Corollary~\ref{cor:binary-selection-tail} with $\ell=n$ and $Q=C_n$. Every aggressive round contributes at least $|C_n|/n$ to the auxiliary mean, while every cautious round contributes at least $4|C_n|/(3n)$. Hence
\begin{equation*}
  \E[K]
  \ge
  |C_n|\left(1+\frac{n-t}{3n}\right)
  \ge
  |C_n|\left(1+\frac{n-L}{3n}\right).
\end{equation*}
Set
\begin{equation*}
  a:=\frac{n-L}{3n},
  \qquad
  \lambda:=(1+a)|C_n|,
  \qquad
  y:=|C_n|.
\end{equation*}
Since $0<a\le1/3$, Corollary~\ref{cor:binary-selection-tail} gives
\begin{equation*}
  \Prb[H_{C_n}<|C_n|\mid T=t]
  \le \exp\left(-\frac{|C_n|a^2}{2(1+a)}\right)
  \le \exp\left(-\frac{(n-L)^2}{24n}\right).
\end{equation*}
Here we used $|C_n|>n$ and $1+a\le4/3$. If $H_{C_n}=|C_n|$, every chore of $C_n$ has already been selected before the final cleanup. The cleanup is then empty, and every agent has ratio at most $3/2$. Therefore the preceding display also bounds $\Prb[\rho(X)>2\mid T=t]$.

Finally,
$2(n-L)^2-(n-2L)^2=n^2-2L^2>0$
under $L<n/2$. Hence
$\frac{(n-L)^2}{24n} \ge \frac{(n-2L)^2}{48n}$.
This proves the claimed bound also when $h=n$.
\end{proofof}

\section{Binary Parameter Choices}
\label{app:binary-parameters}

This appendix gives the direct calculations used for the three small-agent guarantees in Theorem~\ref{thm:binary-all-agreesive}, together with the short numerical verification needed for the uniform bound in Theorem~\ref{thm:binary-immediate}. No finite-$n$ refinement beyond $n=4$ is needed for the results stated in the main text.

\paragraph{Two agents.}
Set $T=2$. The first agent receives every zero-cost chore and a uniformly random $\lceil|B_1|/2\rceil$-subset of $B_1$; the second agent receives every chore that remains. The first agent's ratio is at most one.

Let $C:=B_1\cap B_2$ be the common costly-item set. Since $|C|\le |B_2|\le 2\mu_2$, partition $C$ into exactly $\mu_2$ blocks, each containing at most two chores; empty blocks are allowed. For a block $b$, let $Z_b$ be the number of its chores left for agent~2. Then
\begin{equation*}
  \rho(X)
  \le 1+\frac1{\mu_2}\sum_{b=1}^{\mu_2}(Z_b-1)^+.
\end{equation*}
A block of size zero or one contributes zero. For a two-chore block, $(Z_b-1)^+$ is the indicator that both chores remain after the first allocation. If $m:=|B_1|$ and $r:=\lfloor m/2\rfloor$, this probability is
$\frac{r(r-1)}{m(m-1)}\le\frac14$. Therefore $\E[\rho(X)]\le 1+1/4=5/4$, matching the lower bound.

\paragraph{Three agents.}
Set $T=3$ and put $k:=|C_2|$. For the upper-bound calculation, retain only the base quota from each of the first two selected bundles. Partition $C_3$ into blocks of size at most three. If $Z$ is the number of chores left in one block, it suffices to bound $\E[(Z-1)^+]$. Conditional on a remaining subset of $C_2$ of size $\ell$,
\begin{equation*}
  \E[(Z-1)^+]
  =\frac{3\ell(\ell-1)}{k(k-1)}
   -\frac{\ell(\ell-1)(\ell-2)}{k(k-1)(k-2)}.
\end{equation*}
After substituting the number left by agent~2, the right-hand side is increasing and convex in the number left by agent~1. Hoeffding's comparison therefore reduces the calculation to $Y\sim\operatorname{Bin}(k,2/3)$. Evaluating the resulting three residue classes of $k$ modulo three shows that the maximum excess is $1088/3645$, attained at $k=6$. Hence
$\E[\rho(X)]\le 1+1088/3645=4733/3645<1.2985$.

\paragraph{Four agents.}
Set $T=4$. If the core becomes small after $h\le3$ rounds, Lemma~\ref{lem:binary-convex} gives
\begin{equation*}
  \E[\rho(X)]
  \le1+\max_{1\le h\le3}\E[(h-\operatorname{Pois}(h))^+]
  =1+\frac{27}{2e^3}<1.6722.
\end{equation*}
If all four rounds have large cores, partition $C_4$ into blocks of size at most four and apply the same estimate to each block.

\paragraph{Uniform always-cautious bound.}
In the proof of Theorem~\ref{thm:binary-immediate}, the only numerical quantity that remains is
$s_k:=\E[(k-\frac12-\operatorname{Pois}(4k/3))^+]$. Direct evaluation of the finite Poisson sums gives
\begin{equation*}
  \max_{k\ge0}s_k=s_6=\frac{22781}{30e^8}<0.25474.
\end{equation*}
For completeness, the infinite tail can be bounded without numerical search. The elementary inequality $y^+\le3e^{y/3-1}$ gives $s_k\le3e^{-7/6-k/25}$, which is already below $s_6$ for all sufficiently large $k$; the remaining finitely many values are obtained by direct substitution in the Poisson formula. Consequently, Theorem~\ref{thm:binary-immediate} yields $\E[\rho(X)]<1.755$ for every $n$.

\section{Missing Proofs from the General-Additive Analysis}
\label{app:additive-expectation-proofs}

This appendix contains the routine proofs deferred from the general-additive analysis.

\begin{proofof}{Lemma~\ref{lem:pair-step}}
Consider a pair step on two fractional coordinates $x_e,x_f\in(0,1)$. If the displacement is $s$, then
\begin{equation*}
  x_e(s)=x_e+\frac{s}{c_i(e)},
  \qquad
  x_f(s)=x_f-\frac{s}{c_i(f)}.
\end{equation*}
The constraints $0\le x_e(s),x_f(s)\le1$ are equivalent to
\begin{equation*}
  -c_i(e)x_e\le s\le c_i(e)(1-x_e),
  \qquad
  -c_i(f)(1-x_f)\le s\le c_i(f)x_f.
\end{equation*}
Their intersection is exactly $[\ell,u]$. Since $x_e,x_f\in(0,1)$ and the item costs are positive, $\ell<0<u$, and at either endpoint at least one coordinate becomes integral. Thus every update is feasible and reduces the number of fractional coordinates by at least one.

By the choice of the endpoint probabilities,
\begin{equation*}
  \E[s\mid\boldsymbol{x}]
  =\frac{-\ell}{u-\ell}\cdot u
   +\frac{u}{u-\ell}\cdot\ell
  =0.
\end{equation*}
Hence both updated coordinates preserve their conditional expectations. Moreover,
\begin{equation*}
  c_i(e)x_e(s)+c_i(f)x_f(s)
  =c_i(e)x_e+c_i(f)x_f,
\end{equation*}
so the total fractional cost is preserved in every outcome. Since each pair step makes at least one fractional coordinate integral and integral coordinates are never selected again, after at most $|R|-1$ pair steps at most one fractional coordinate remains.
\end{proofof}

\begin{proofof}{Lemma~\ref{lem:remaining-invariant}}
We induct upward on $i$. Before round $1$, we have $R_1=M$ and $s_1=1$, so both sides of \eqref{eq:remaining-set} equal $\prod_{e\in M}z_e$.

Suppose that \eqref{eq:remaining-set} holds before round $i<n$, and fix a deterministic test vector $(z_e)_{e\in M}$. Conditional on $R_i$, agent $i$ applies weighted pairwise rounding to the item set $R_i$ and receives the bundle $X_i\subseteq R_i$. Thus $R_{i+1}=R_i\setminus X_i$. By Lemma~\ref{lem:rounding-product},
\begin{equation*}
  \E\left[\left.\prod_{e\in R_{i+1}}z_e\,\right|R_i\right]
  \le\prod_{e\in R_i}\left(p_i+(1-p_i)z_e\right).
\end{equation*}

The vector with coordinates $p_i+(1-p_i)z_e$ is deterministic and has every coordinate at least one. Taking expectations and applying the induction hypothesis to this vector yields
\begin{align*}
  \E\left[\prod_{e\in R_{i+1}}z_e\right]
  \le\prod_{e\in M}
     \left(1-s_i+s_i\left(p_i+(1-p_i)z_e\right)\right)
  =\prod_{e\in M}
     \left(1-s_{i+1}+s_{i+1}z_e\right),
\end{align*}
where the last equality uses $s_{i+1}=s_i\cdot(1-p_i)$. This is the invariant before round $i+1$.
\end{proofof}

\begin{proofof}{Lemma~\ref{lem:scheduled-load}}
Fix $i$. Under the normalization in Section~4, $0\le c_i(e)\le1$ for every chore and $c_i(M)\le n$. Because the adversary is oblivious, these costs are fixed independently of the algorithm's random choices, even before agent $i$ arrives. Thus $z_e=e^{\lambda p_i c_i(e)}$ is an admissible deterministic test vector in \eqref{eq:remaining-set}. Substitution gives
\begin{equation*}
  \E\left[e^{\lambda p_i c_i(R_i)}\right]
  =\E\left[\prod_{e\in R_i}e^{\lambda p_i c_i(e)}\right]
  \le\prod_{e\in M}
     \left(1-s_i+s_i e^{\lambda p_i c_i(e)}\right).
\end{equation*}

To bound this product, let $g(x):=\log\left(1-s_i+s_i e^{\lambda p_i x}\right)$. Then $g(0)=0$, $g(1)\ge0$, and
\begin{equation*}
  g''(x)=
  \frac{s_i(1-s_i)(\lambda p_i)^2e^{\lambda p_i x}}
       {(1-s_i+s_i e^{\lambda p_i x})^2}\ge0.
\end{equation*}
For $c_i(e)\in[0,1]$, convexity gives $g(c_i(e))\le c_i(e)g(1)$. Therefore,
\begin{equation*}
  \sum_{e\in M}g(c_i(e))
  \le g(1)\cdot c_i(M)
  \le n\cdot g(1).
\end{equation*}
Exponentiating proves \eqref{eq:fractional-load-mgf}.
Next, $\log(1+y)\le y$ for $y\ge0$ implies
\begin{equation*}
  \log\E\left[e^{\lambda p_i c_i(R_i)}\right]
  \le n\log\left(1+s_i(e^{\lambda p_i}-1)\right)
  \le n\cdot s_i\cdot(e^{\lambda p_i}-1).
\end{equation*}
Markov's inequality consequently gives, for every $\lambda>0$,
\begin{equation*}
  \Prb\left[p_i\cdot c_i(R_i)\ge u\right]
  \le\exp\left(-\lambda u+n\cdot s_i\cdot(e^{\lambda p_i}-1)\right).
\end{equation*}
When $u>n\cdot s_i\cdot p_i$, choose
$\lambda=\frac{1}{p_i}\log\frac{u}{n\cdot s_i\cdot p_i}>0$.
The resulting exponent is
\begin{equation*}
  -n\cdot s_i\cdot
  \left(
    \frac{u}{n\cdot s_i\cdot p_i}
    \log\frac{u}{n\cdot s_i\cdot p_i}
    -\frac{u}{n\cdot s_i\cdot p_i}+1
  \right)
  =-n\cdot s_i\cdot\psi\left(\frac{u}{n\cdot s_i\cdot p_i}\right).
\end{equation*}
This proves \eqref{eq:scheduled-chernoff} for $u>n\cdot s_i\cdot p_i$. At equality, its right-hand side is one, so the inequality holds as well.
\end{proofof}

\begin{proofof}{Lemma~\ref{lem:schedule-deficit}}
Throughout this proof, $K=2\log n$. It is convenient to work with the auxiliary values from the backward construction. Recall that
\begin{equation*}
  \hat B_i=\sum_{j=i}^n\hat\beta_j,
  \qquad
  \hat B_i\cdot \psi\left(\frac1{\hat\beta_i}\right)=2\log n,
  \qquad
  \sigma=\frac{n}{\hat B_1}.
\end{equation*}
Hence $n-\frac n\sigma = n-\hat B_1 = \sum_{i=1}^n\left(1-\hat\beta_i\right)$.
We bound each deficit $1-\hat\beta_i$ in terms of the increment $\hat B_i-\hat B_{i+1}$ and then sum these bounds by comparison with elementary integrals.

\paragraph{A pointwise deficit estimate.}
For $y\ge0$, we have $\psi(1+y)=\int_0^y\log(1+t)dt$. Since $\log(1+t)\ge t/(1+t)\ge t/(1+y)$ for $0\le t\le y$, we have $\psi(1+y)\ge\frac{y^2}{2(1+y)}$.

Fix a round $i$ and put $x:=\hat\beta_i$ and $y:=1/x-1=(1-x)/x$. The defining identity gives
\begin{equation*}
  \frac{2\log n}{\hat B_i}
  =\psi(1+y)
  \ge\frac{y^2}{2(1+y)}.
\end{equation*}
Set $a:=2\log n/\hat B_i$. Then $y^2\le2a(1+y)$, so $y\le a+\sqrt{a^2+2a}\le2a+\sqrt{2a}$. Therefore
\begin{equation*}
  \frac{1-\hat\beta_i}{\hat\beta_i}
  \le
  2\sqrt{\frac{\log n}{\hat B_i}}
  +\frac{4\log n}{\hat B_i}.
\end{equation*}
Using $\hat\beta_i=\hat B_i-\hat B_{i+1}$, we obtain
\begin{equation*}
  1-\hat\beta_i
  \le
  2\sqrt{\log n}\cdot
  \frac{\hat B_i-\hat B_{i+1}}{\sqrt{\hat B_i}}
  +4\log n\cdot
  \frac{\hat B_i-\hat B_{i+1}}{\hat B_i}.
\end{equation*}

\paragraph{Summing the deficits.}
We separate the last value and use $1-\hat\beta_n\le1$. For $i\le n-1$, the functions $x^{-1/2}$ and $x^{-1}$ are decreasing, so
\begin{equation*}
  \sum_{i=1}^{n-1}
  \frac{\hat B_i-\hat B_{i+1}}{\sqrt{\hat B_i}}
  \le
  \int_{\hat B_n}^{\hat B_1}\frac{dx}{\sqrt x}
  \le2\sqrt{\hat B_1},
\end{equation*}
and
\begin{equation*}
  \sum_{i=1}^{n-1}
  \frac{\hat B_i-\hat B_{i+1}}{\hat B_i}
  \le
  \int_{\hat B_n}^{\hat B_1}\frac{dx}{x}
  =\log\left(\frac{\hat B_1}{\hat B_n}\right).
\end{equation*}
Consequently,
\begin{equation*}
  n-\hat B_1
  \le
  1+4\sqrt{\hat B_1\log n}
  +4\log n\cdot\log\left(\frac{\hat B_1}{\hat B_n}\right).
\end{equation*}

Every auxiliary value is smaller than one, so $\hat B_1\le n$. At the last backward step, $\hat B_n=\hat\beta_n$, and
\begin{equation*}
  2\log n
  =\hat\beta_n\psi\left(\frac1{\hat\beta_n}\right)
  =\log\left(\frac1{\hat\beta_n}\right)-1+\hat\beta_n.
\end{equation*}
Thus $\hat B_n=\hat\beta_n\ge e^{-1}n^{-2}$. Hence
\begin{equation*}
  n-\frac n\sigma
  =n-\hat B_1
  \le
  1+4\sqrt{n\log n}
  +4\log n\left(3\log n+1\right).
\end{equation*}

The terms other than $4\sqrt{n\log n}$ are $o(\sqrt{n\log n})$. Therefore,
\begin{equation*}
  n-\frac n\sigma
  \le(4+o(1))\sqrt{n\log n}.
\end{equation*}
This proves the first inequality. Dividing it by $n$ gives
\begin{equation*}
  0<1-\frac1\sigma
  \le (4+o(1))\sqrt{\frac{\log n}{n}}.
\end{equation*}
The right-hand side tends to zero, so $\sigma=1+o(1)$. Multiplying the preceding inequality by $\sigma$ gives
\begin{equation*}
  \sigma-1
  \le(4+o(1))\sqrt{\frac{\log n}{n}},
\end{equation*}
which proves the second statement.
\end{proofof}

\section{The Schedule for the Uniform Bound}
\label{app:schedule_for_uniform}

We prove Theorem~\ref{thm:additive-uniform-bound} by giving a different explicit schedule. Let $q=\min\{n,200\}$ and set $B_{n+1}:=0$. For $i\in[n]$, define
\begin{equation*}
  B_i=
  \begin{cases}
    \displaystyle
    \frac{5(n-i+1)^2}{q+4(n-i+1)},
      &n-i+1\le q,\\[6pt]
    n-i+1,
      &n-i+1>q.
  \end{cases}
\end{equation*}
Set $\beta_i:=B_i-B_{i+1}$ and $p_i:=\beta_i/B_i$. Since $B_1=n$, this choice has $s_i=B_i/n$ and therefore defines the schedule used below.

\begin{lemma}\label{lem:uniform-schedule}
The fractions $(p_1,\ldots,p_n)$ form a schedule, and $0<\beta_i\le\frac65$ for every $i\in[n]$.
\end{lemma}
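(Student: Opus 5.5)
The plan is to verify the schedule conditions directly from the explicit formula for $B_i$, reindexing by $k:=n-i+1\in[n]$ and writing $b(k):=B_{n-k+1}$. Thus $b(0)=0$ and $\beta_i=b(k)-b(k-1)$. A schedule needs $0<p_i\le1$ for every $i$ and $p_n=1$. Since $p_i=\beta_i/B_i$, it suffices to show three things: $b$ is positive on $[n]$, $b$ is strictly increasing in $k$, and $B_n=\beta_n$. The last holds because $B_{n+1}=0$, so $p_n=B_n/B_n=1$. Once $0<\beta_i\le B_i$ (the second inequality uses $B_{i+1}\ge0$), we get $0<p_i\le1$. The identity $B_1=n$ requires $b(n)=n$. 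This is immediate when $n>q$. When $n\le q$ we have $q=n$, and then $b(n)=5n^2/(n+4n)=n$.

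For the increments, treat the two regimes separately. In the linear regime $k>q+1$, the increment is $b(k)-b(k-1)=1\le6/5$. In the quadratic regime $1\le k\le q$, write $f(x):=5x^2/(q+4x)$, which is convex and increasing for $x\ge0$. Its derivative is
\begin{equation*}
  f'(x)=\frac{5x(2q+4x)}{(q+4x)^2},
\end{equation*}
which is positive and increasing in $x$. Since $f$ is convex, the increment $f(k)-f(k-1)$ is at most $f'(k)\le f'(q)=5q\cdot6q/(25q^2)=6/5$. It is also positive because $f$ is strictly increasing and $f(0)=0$, which covers $k=1$.

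The only delicate point is the junction $k=q+1$ when $n>q$. There the increment is $b(q+1)-b(q)=(q+1)-f(q)=(q+1)-q=1$, because $f(q)=5q^2/(5q)=q$. This uses that the two pieces agree at $k=q$, so no jump occurs. Hence every increment lies in $(0,6/5]$, and $b$ is positive and strictly increasing, which finishes the argument. I expect the main care to be needed in checking the junction and the case $n\le q$ (where $q=n$), not in any difficult estimate.
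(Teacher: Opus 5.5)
Your proposal is correct and follows essentially the same route as the paper: convexity of $f(x)=5x^2/(q+4x)$ with $f'(q)=6/5$ bounds the increments in the quadratic regime, the linear regime gives increments equal to one, and $B_1=n$, $B_{n+1}=0$ yield $0<p_i\le1$ and $p_n=1$. The differences are cosmetic. You bound the increment by the tangent estimate $f(k)-f(k-1)\le f'(k)$, whereas the paper writes it as $\int_{k-1}^{k}f'(x)\,dx$. You also make the junction identity $f(q)=q$ explicit, which the paper uses silently when it asserts $B_{i+1}=n-i$. Finally, you assert convexity of $f$ without proof; you should justify it by computing $f''(x)=10q^2/(q+4x)^3>0$, as the paper does.
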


\begin{proof}
For $0\le x\le q$, let $f(x)=5x^2/(q+4x)$. The quotient rule gives
\begin{equation*}
  f'(x)=\frac{10x(q+2x)}{(q+4x)^2},
  \qquad
  f''(x)=\frac{10q^2}{(q+4x)^3}>0.
\end{equation*}
Thus $f'$ is increasing on $[0,q]$, with $f'(0)=0$ and $f'(q)=6/5$.

If $n-i+1\le q$, then
\begin{equation*}
  \beta_i
  =f(n-i+1)-f(n-i)
  =\int_{n-i}^{n-i+1}f'(x)\,dx,
\end{equation*}
so $0<\beta_i\le6/5$. If $n-i+1>q$, then $B_i=n-i+1$ and $B_{i+1}=n-i$, so $\beta_i=1$. Finally, $B_1=n$ and $B_{n+1}=0$, hence $\sum_{i=1}^n\beta_i=n$. Therefore $p_i=\beta_i/B_i$ gives $0<p_i\le1$, and $p_n=1$.
\end{proof}

The following estimates combine the fractional-cost bounds with the centered Bernoulli representation in Lemma~\ref{lem:rounding-cost}.

\begin{lemma}
Let Algorithm~\ref{alg:scheduled-weighted} use the explicit schedule above. For $\lambda>0$, let $\chi(\lambda):=\log((e^\lambda-1)/(e\lambda))+\lambda/(e^\lambda-1)$. Then
\begin{equation}
  \E[\rho(X)]\le\frac1\lambda\cdot\log\left[
    e^\lambda
    +\left(1-\frac{B_n}{n}+\frac{B_n}{n}e^\lambda\right)^n
    +e^{\chi(\lambda)}\cdot\sum_{i=1}^{n-1}
    \left(1-\frac{B_i}{n}
      +\frac{B_i}{n}e^{\lambda\beta_i/B_i}\right)^n\right].
  \label{eq:uniform-softmax}
\end{equation}
For $n\ge200$, we also have
\begin{equation}
  \E[\rho(X)]\le{} \frac52+
    \sum_{i=n-199}^{n}
    \frac{\exp[-B_i\psi((3/2)/\beta_i)]}
         {(B_i/\beta_i)\log((3/2)/\beta_i)}
    +\frac{e^{-201\cdot\psi(3/2)}}
     {201\cdot\log(3/2)\bigl(1-e^{-\psi(3/2)}\bigr)}.
  \label{eq:uniform-integral-certificate}
\end{equation}
\end{lemma}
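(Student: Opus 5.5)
The plan is to prove the two displays separately. Both start from the decomposition $c_i(X_i)=p_i\cdot c_i(R_i)+d\cdot(Z-x)$ of Lemma~\ref{lem:rounding-cost}, with $d\le 1$ under the normalization, and from the moment bound of Lemma~\ref{lem:scheduled-load} with $s_i=B_i/n$ and $p_i=\beta_i/B_i$.

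\textbf{Soft-max bound \eqref{eq:uniform-softmax}.} I will write $\rho(X)=\max\{1,\max_i c_i(X_i)\}$ and apply Jensen's inequality to the logarithm:
\begin{equation*}
\E[\rho(X)]\le\frac1\lambda\log\E\left[e^{\lambda\rho(X)}\right]\le\frac1\lambda\log\left(e^\lambda+\sum_{i=1}^n\E\left[e^{\lambda c_i(X_i)}\right]\right).
\end{equation*}

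For $i=n$ we have $p_n=1$, so $c_n(X_n)=c_n(R_n)$. Equation~\eqref{eq:fractional-load-mgf} with $p_n=1$ and $s_n=B_n/n$ then gives the middle term.

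For $i<n$, I condition on $R_i$ and on the outcome of all pair steps in round $i$. These fix $d$ and $x$, and $Z$ is then an independent $\operatorname{Bernoulli}(x)$ coin. This gives
\begin{equation*}
\E\left[e^{\lambda d(Z-x)}\,\middle|\,\cdot\right]=e^{-\lambda dx}\left(1-x+xe^{\lambda d}\right).
\end{equation*}
I maximize the right-hand side over $x\in[0,1]$. This is the usual Hoeffding-lemma computation: the maximizer is $x=1/(\lambda d)-1/(e^{\lambda d}-1)$, and the logarithm of the maximum equals $\chi(\lambda d)$. I will then check that $\chi$ is nondecreasing in its argument, so the bound is $e^{\chi(\lambda)}$ since $d\le1$. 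Hence
\begin{equation*}
\E\left[e^{\lambda c_i(X_i)}\right]\le e^{\chi(\lambda)}\,\E\left[e^{\lambda p_ic_i(R_i)}\right],
\end{equation*}
and \eqref{eq:fractional-load-mgf} with $s_i=B_i/n$ and $p_i=\beta_i/B_i$ gives the remaining terms. I expect the main obstacle to be this Bernoulli step: computing the maximizer and verifying monotonicity of $\chi$. The conditioning itself is legitimate because the final coin is independent of everything before it.

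\textbf{Integral certificate \eqref{eq:uniform-integral-certificate}.} Here $q=200$. I use the deterministic bound $\rho(X)\le 1+\max_i p_ic_i(R_i)$ with threshold $3/2$, together with the integral representation of the excess and a union bound:
\begin{equation*}
\E[\rho(X)]\le\frac52+\sum_{i=1}^n\int_{3/2}^\infty \Prb\left[p_ic_i(R_i)\ge u\right]du.
\end{equation*}
By Lemma~\ref{lem:uniform-schedule}, $\beta_i\le 6/5<3/2$. So \eqref{eq:scheduled-chernoff} applies, and Lemma~\ref{lem:convex-integral} applies to $f(u)=B_i\psi(u/\beta_i)$, which satisfies $f'(3/2)=(B_i/\beta_i)\log((3/2)/\beta_i)>0$. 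This bounds each integral by $\exp[-B_i\psi((3/2)/\beta_i)]/f'(3/2)$, which is exactly the summand for $i\ge n-199$.

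For $i\le n-200$, we have $n-i+1>q$, so $\beta_i=1$ and $B_i=n-i+1\ge 201$. The term becomes $e^{-B_i\psi(3/2)}/(B_i\log(3/2))$. I bound $1/B_i\le 1/201$ and sum the geometric series $\sum_{B\ge201}e^{-B\psi(3/2)}$, which yields the final term.
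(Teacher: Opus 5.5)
Your proposal is correct and follows the paper's proof essentially step for step. Both use Jensen on $\frac1\lambda\log\E[e^{\lambda\rho(X)}]$ with the conditional Bernoulli bound times \eqref{eq:fractional-load-mgf} for the soft-max display, and then the union bound, Lemma~\ref{lem:convex-integral}, and the geometric tail for the certificate.

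The only difference is cosmetic. You maximize over $x$ first and then need $\chi$ nondecreasing, whereas the paper first shows monotonicity in $d$ for each fixed $x$ and then maximizes over $x$. Your monotonicity claim follows from the same derivative computation, since $\chi(\mu)$ is the maximum over $x$ of $\log(1-x+xe^{\mu})-\mu x$, each of which is nondecreasing in $\mu\ge0$.
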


\begin{proof}
We first bound each agent's exponential moment directly, using the rounding representation from Lemma~\ref{lem:rounding-cost}. We then combine these bounds to prove \eqref{eq:uniform-softmax}, and finally integrate the fractional-cost probability bounds to prove \eqref{eq:uniform-integral-certificate}.
Fix $\lambda>0$ and first consider a round $i<n$. Condition on the remaining item set $R_i$ and the outcomes of all pairwise rounding steps in round $i$, but not on the final Bernoulli decision. By Lemma~\ref{lem:rounding-cost},
\begin{equation*}
  c_i(X_i)=p_i\cdot c_i(R_i)+d\cdot(Z-x),
  \qquad Z\sim\operatorname{Bernoulli}(x),
  \qquad 0\le x\le1,\quad 0\le d\le1.
\end{equation*}
Here $x$ is the value of the last fractional coordinate, $d$ is the corresponding chore's cost, and $Z$ indicates whether that chore is selected in the final rounding step. If no fractional coordinate remains, take $d=x=Z=0$. The bound $d\le1$ follows from the normalization in Section~4. Under this conditioning, $p_i\cdot c_i(R_i),d,x$ are fixed, and only $Z$ can affect the current agent's cost. In the conditional calculations below, $\E_Z$ denotes expectation only over this final Bernoulli decision.

The conditional exponential moment of the error is
\begin{equation*}
  \E_Z[e^{\lambda d(Z-x)}]
  =(1-x)e^{-\lambda dx}+xe^{\lambda d(1-x)}.
\end{equation*}
For fixed $x$, its derivative with respect to $d$ is
\begin{align*}
  \frac{\partial}{\partial d}\bigl[(1-x)e^{-\lambda dx}+xe^{\lambda d(1-x)}\bigr]
  &=-\lambda x(1-x)e^{-\lambda dx}
       +\lambda x(1-x)e^{\lambda d(1-x)}\\
  &=\lambda x(1-x)
       \bigl(e^{\lambda d(1-x)}-e^{-\lambda dx}\bigr)\ge0
       \qquad(d\ge0).
\end{align*}
Thus this exponential moment is nondecreasing in $d$. Since $d\le1$,
\begin{equation*}
  \E_Z[e^{\lambda d(Z-x)}]
  \le e^{-\lambda x}\bigl(1+x(e^\lambda-1)\bigr).
\end{equation*}

Let $g(x):=\log\bigl(1+x(e^\lambda-1)\bigr)-\lambda x$.
A direct derivative calculation shows that $g$ is strictly concave and has its unique maximizer at $x_*:=\frac1\lambda-\frac1{e^\lambda-1}$.
Substitution gives
\begin{equation*}
  g(x_*)
  =\log\left(\frac{e^\lambda-1}{e\lambda}\right)
   +\frac{\lambda}{e^\lambda-1}
  =\chi(\lambda).
\end{equation*}
Therefore $\E_Z[e^{\lambda d(Z-x)}]\le e^{\chi(\lambda)}$ uniformly over all possible $x,d\in[0,1]$.

Since $p_i\cdot c_i(R_i)$ is fixed under the same conditioning,
\begin{equation*}
  \E_Z\left[e^{\lambda c_i(X_i)}\right]
  \le e^{\lambda p_i c_i(R_i)+\chi(\lambda)}.
\end{equation*}
Averaging first over the pairwise-rounding steps and then over $R_i$, and using \eqref{eq:fractional-load-mgf} with $s_i=B_i/n$ and $p_i=\beta_i/B_i$, gives
\begin{equation*}
  \E\left[e^{\lambda c_i(X_i)}\right]
  \le e^{\chi(\lambda)}
       \left(1-\frac{B_i}{n}
                 +\frac{B_i}{n}e^{\lambda\beta_i/B_i}\right)^n.
\end{equation*}
This does not require independence between the rounding error and $p_i\cdot c_i(R_i)$.

For the last round $i=n$, we have $p_n=1$, so every remaining chore is assigned without a random rounding decision and $c_n(X_n)=c_n(R_n)$. Hence $\E\left[e^{\lambda c_n(X_n)}\right]
  \le\left(1-\frac{B_n}{n}+\frac{B_n}{n}e^\lambda\right)^n$.

For every realization, we have $e^{\lambda\rho(X)} \le e^\lambda+\sum_{i=1}^n e^{\lambda c_i(X_i)}$.
Jensen's inequality therefore gives
\begin{equation*}
  \E[\rho(X)]
  \le\frac1\lambda\log\left[
       e^\lambda+\sum_{i=1}^n
             \E[e^{\lambda c_i(X_i)}]\right].
\end{equation*}
Substituting the one-round moment bounds proves \eqref{eq:uniform-softmax}.

Suppose $n\ge200$, so $q=200$, and set $a=3/2$. By Lemma~\ref{lem:uniform-schedule}, every $\beta_i\le6/5<a$. Since $n\cdot s_i\cdot p_i=\beta_i$, the deterministic bound from Section~\ref{sec:additive-load-analysis}, the integral representation of the nonnegative excess, a union bound, and Lemma~\ref{lem:scheduled-load} give
\begin{equation*}
  \E[\rho(X)]\le1+a+\sum_{i=1}^n
            \int_a^\infty e^{-B_i\psi(u/\beta_i)}\,du.
\end{equation*}

For the last $200$ rounds, namely $n-199\le i\le n$, Lemma~\ref{lem:convex-integral} gives
\begin{equation*}
  \int_a^\infty e^{-B_i\psi(u/\beta_i)}\,du
  \le
  \frac{\exp[-B_i\psi(a/\beta_i)]}
       {(B_i/\beta_i)\log(a/\beta_i)}.
\end{equation*}

For the earlier rounds $i\le n-200$, the explicit schedule satisfies $B_i=n-i+1$ and $\beta_i=1$. Therefore
\begin{align*}
  \sum_{i=1}^{n-200}
  \int_a^\infty e^{-B_i\psi(u)}\,du
  \le \sum_{i=1}^{n-200}
  \frac{e^{-(n-i+1)\psi(a)}}{(n-i+1)\log a}
  \le \sum_{k=201}^{\infty}\frac{e^{-k\psi(a)}}{k\log a}
  \le \frac{e^{-201\cdot\psi(a)}}
       {201\cdot\log a\bigl(1-e^{-\psi(a)}\bigr)}.
\end{align*}
For $n=200$, the first sum is empty. Combining the two ranges and substituting $a=3/2$ proves \eqref{eq:uniform-integral-certificate}.
\end{proof}

\begin{proofof}{Theorem~\ref{thm:additive-uniform-bound}}
The schedule is valid by Lemma~\ref{lem:uniform-schedule}.
For $2\le n\le200$, set $\lambda=21/10+\sqrt n/4$ in \eqref{eq:uniform-softmax}. Direct evaluation of the resulting finite certificate gives a maximum below $2.5936<2.594$. For $n\ge200$, the right-hand side of \eqref{eq:uniform-integral-certificate} is independent of $n$ after the geometric suffix is bounded, and its value is below $2.579<2.594$. These two ranges cover every $n\ge2$.
\end{proofof}

The two numerical inequalities used above are routine evaluations of the explicit certificates. Substituting $q=n$ and $\lambda=21/10+\sqrt n/4$ into \eqref{eq:uniform-softmax} for $2\le n\le200$ gives a maximum below $2.593530247$, attained at $n=22$. Evaluating \eqref{eq:uniform-integral-certificate} for the fixed $q=200$ suffix gives a value below $2.578868190$. Both values are strictly below $2.594$, which completes the numerical verification of Theorem~\ref{thm:additive-uniform-bound}.

\section{Supplementary results for goods}\label{app:goods}
In this appendix, we present the results for goods with agent arrival. We use the same agent-arrival model and the same sets $N=[n]$, $M$, $R_i$, and $X_i$ as in the main text. All valuations and the arrival order are fixed before the algorithm makes its random choices. For a nonnegative additive goods valuation $v_i$, define
\[
 \mu_i=\max_{(P_1,\ldots,P_n)\in\Pi_n(M)}\min_{j\in N}v_i(P_j),
 \qquad
 \rho(X)=\min\left\{1,\min_{i:\mu_i>0}\frac{v_i(X_i)}{\mu_i}\right\}.
\]
If all MMS values are zero, set $\rho(X)=1$. We establish the following three results. 
The first two use different parameter choices in one randomized allocation rule. Both algorithms use each arriving agent's MMS and require no prior information about future agents' valuations.

\begin{theorem}[Expected guarantee]\label{thm:goods-expectation}
For every $n$, there is a randomized online algorithm for additive goods such that
\[
 \E[\rho(X)]\ge\frac1{10}.
\]
\end{theorem}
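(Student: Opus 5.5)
The plan is to run the goods analogue of \scheduledalg, with a threshold rule for large goods and a schedule chosen so that per-agent failure probabilities are summable. We then lower bound $\E[\rho(X)]\ge x\cdot\Prb[\text{every agent gets }\ge x\mu_i]$ for a fixed constant $x$, and aim for something like $x=1/5$ with success probability at least $1/2$.

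Normalize $\mu_i=1$ in the analysis; the algorithm itself does use $\mu_i$, as the statement allows. We use the standard MMS-for-goods reduction to split the remaining goods for agent $i$ into large goods, those with $v_i(e)\ge x$, and small ones.

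\textbf{Algorithm.} On arrival, if some remaining good has $v_i(e)\ge x$, agent $i$ receives exactly one such good, and her ratio is at least $x$ deterministically. Otherwise every remaining good is worth less than $x$ to her. She then receives the output of \wpralg\ on $R_i$ with a fraction $p_i$ from a schedule fixed in advance, for example $p_i$ close to $1/(n-i+1)$. By Lemma~\ref{lem:rounding-cost}, her value is at least $p_i v_i(R_i)-x$. So it suffices that $v_i(R_i)\ge 2x/p_i$ holds for every agent that uses the rounding branch.

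\textbf{Lower tail of the remaining value.} First I would observe that Lemma~\ref{lem:rounding-product} also holds for test vectors with $0<z_e\le1$. In the concavity step the second derivative contains the factor $-(z_e-1)(z_f-1)\le0$, which is nonpositive whenever both factors have the same sign. Hence the invariant of Lemma~\ref{lem:remaining-invariant} extends to $z_e=e^{-\lambda v_i(e)}$.

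Two effects reduce the value left for agent $i$. The removal of goods by the fractional schedule is controlled by this product invariant. The single large goods taken by earlier agents are controlled by the MMS reduction: each of agent $i$'s $n$ MMS bundles loses at most the goods given away, so after $i-1$ single-good removals, at least $n-i+1$ of her bundles keep value close to $1$ in expectation under the retention benchmark.

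Together these give a Chernoff bound for $v_i(R_i)$ with the rate function $\psi$. Because every good is worth less than $x$ to her, the effective mean is of order $(n-i+1)s_i/x$ in units of $x$. The bound is then exponential in $1/x$ times the remaining mass, in the same way that Lemma~\ref{lem:scheduled-load} yields \eqref{eq:scheduled-chernoff}.

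\textbf{Summing failures.} The last step is a union bound over agents of the per-agent failure probabilities $\Prb[v_i(R_i)<2x/p_i]$. I expect this to be the main obstacle. An agent late in the sequence sees only a constant expected number of her bundles remaining, so her failure probability is a constant and not $o(1)$; a naive union bound therefore gives a failure probability of order $n$.

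To avoid this, I would choose the schedule so that agent $i$'s expected share exceeds the threshold by a margin growing with $n-i+1$. Then the Chernoff or second-moment bound for agent $i$ is of order $x^2/(n-i+1)^2$, and the sum over $i$ is $O(x^2)$.

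If the exponential bound is too weak for the last few agents, I would instead use a variance (Chebyshev) bound for them: treating each remaining MMS bundle as retained roughly independently gives variance $O(x)$ relative to mean $\Theta(1)$. Finally, I would tune $x$ and the schedule constants numerically, as in Appendix~\ref{app:schedule_for_uniform}, so that $x(1-O(x^2))\ge1/10$.
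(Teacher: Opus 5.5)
\paragraph{Gap: the decisive failure-probability estimate is missing.} Your argument depends on one claim: that for $x=1/5$ the union bound over agents leaves failure probability at most $1/2$ (more generally, that $x\cdot\Prb[\rho(X)\ge x]\ge 1/10$ for some $x$). That is exactly the step you leave open, and the tools you sketch for it do not work as stated.

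\emph{First}, the Chernoff bound ``exponential in $1/x$ times the remaining mass'' is not available. Goods are worth less than $x\mu_i$ only among those \emph{still present} when agent $i$ arrives, and that is an event determined by the outcome. The goods removed earlier can be worth up to $\mu_i$. If agent $i$ values exactly $n$ goods at $\mu_i$ each, she reaches the rounding branch only when all of them are gone. So the relevant granularity is $\mu_i$, and the exponent is at best of the order of the remaining mass in units of $\mu_i$, as in Lemma~\ref{lem:goods-exponential}. Your ``variance $O(x)$'' Chebyshev claim fails for the same reason.

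\emph{Second}, the schedule must scale with $x$. Take your example $p_i\approx 1/(n-i+1)$, let agents $1,\dots,n-1$ value all goods equally and tinily, and let agent $n$ value $n$ goods at $\mu_n$. Rounding preserves marginals, so exactly one of agent $n$'s goods survives in expectation. Her failure probability is therefore a positive quantity independent of $x$, not $O(x^2)$, even though your margin condition holds. The paper uses $p_i=3x/(n-i+1)$.

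\emph{Third}, the product invariant of Lemma~\ref{lem:remaining-invariant} does not survive the state-dependent singleton rounds. For $z_e\le1$, deleting a good multiplies the product by $1/z_e\ge 1$. Your bundle-counting remark must be replaced by a potential indexed by the waiting count $k$, such as $e^{\lambda(k-v_j(R_i))}$, which a singleton cannot increase.

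Even with all of this repaired, the paper's careful execution of exactly your threshold-plus-union-bound route (Theorem~\ref{thm:goods-probability}, with a tailored convex potential) only gives $\Prb[\rho(X)\ge x]\ge1-85x^2$ for $x\le1/15$. That yields $\E[\rho(X)]\ge\max_x x(1-85x^2)\approx0.042$. To reach $1/10$ through $x(1-Cx^2)$ you would need $C\le 400/27\approx14.8$ at $x\approx0.15$, and nothing in your plan indicates how to get this.

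\paragraph{The missing idea: bound expected shortfall, not failure probability.} The paper never bounds failure probabilities for this theorem. It uses the pointwise bound $(a-\rho(X))_+\le\sum_j(a-v_j(X_j))_+$ on processed values, together with Lemma~\ref{lem:goods-shortfall}, which gives $\E[(a-v_i(X_i))_+\mid R_i]\le a\bigl(1-p_iv_i(R_i)/(2a)\bigr)_+^2$.

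With $a=1/6$, $p_i=1/(3\ell)$ and $\ell=n-i+1$, this equals $\frac{1}{6\ell^2}(\ell-v_i(R_i))_+^2$. In turn this is at most $\frac{1}{6\ell^2}\cdot\frac{4}{e^2\lambda^2}e^{\lambda(\ell-v_i(R_i))}$.

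The potential $e^{\lambda(\ell-v_j(R_j))}$ with $\lambda=\log(11/2)$ has expectation at most $49/40$ (at most $11/8$ for the last agent), uniformly in $n$. This holds because singleton rounds cannot increase it and rounding rounds raise it by a summable amount (Lemma~\ref{lem:goods-expected-moment}).

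The $1/\ell^2$ factor makes the nonfinal shortfalls sum to less than $1/40$. The last agent's shortfall is at most $ae^{-\lambda}\cdot 11/8=1/24$. Hence $\E[\rho(X)]\ge a-\E[(a-\rho(X))_+]\ge 1/6-1/24-1/40=1/10$.

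This is what your approach loses. Under the union bound every bad event costs the full target $x$. Under shortfall accounting a near-miss is charged only its small deficit, and the final agent costs at most $a/4$.
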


\begin{theorem}[Probability guarantee]\label{thm:goods-probability}
For every $n$ and every $0<x\le1/15$, there is a randomized online algorithm for additive goods such that
\[
 \Prb[\rho(X)\ge x]\ge1-85\cdot x^2.
\]
\end{theorem}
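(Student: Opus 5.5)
The plan is to reuse the schedule-plus-rounding framework of Section~\ref{sec:additive}, mirrored for goods, with a tuning parameter tied to $x$. Normalize each agent so that $\mu_i=1$; agents with $\mu_i=0$ are ignored. Then $v_i(M)\ge n$, and by the usual MMS reduction we may assume $v_i(e)<1$. The algorithm, which the statement allows to know $\mu_i$ on arrival, gives agent $i$ a small share of the remaining goods through \wpralg. It runs on $R_i$ with weights $v_i$ and a fraction
\[
p_i:=\min\Bigl\{1,\frac{2x+v_i^{\max}(R_i)}{v_i(R_i)}\Bigr\},
\]
where $v_i^{\max}(R_i)$ is the largest remaining single value. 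The last agent takes everything, i.e.\ $p_n=1$.

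By Lemma~\ref{lem:rounding-cost}, agent $i$ then deterministically receives value at least $p_i v_i(R_i)-v_i^{\max}(R_i)\ge 2x\ge x$, provided $v_i(R_i)\ge 2x+v_i^{\max}(R_i)$. If this condition fails and some remaining good has value at least $x$, we instead give her one such good. The statement therefore reduces to a single failure event: some agent $j$ arrives with $v_j(R_j)<2x+1$ and no remaining good worth $x$ to her. Since $v_j(M)\ge n$, this requires earlier agents to have removed almost all of $j$'s value.

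To control removals, I would prove a goods analogue of Lemma~\ref{lem:remaining-invariant}. From Lemma~\ref{lem:rounding-product}, the marginal survival of each good is at least the independent-retention probability. One caveat: $p_i$ now depends on $R_i$, but it is bounded by $p_i\le (2x+1)/v_i(R_i)$. While $v_i(R_i)\ge n/2$, this gives $p_i\le 2(2x+1)/n$, so each good survives each round with probability $1-O(1/n)$.

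The key quantity is the random loss $L_j:=v_j(M)-v_j(R_j)$. Each earlier agent removes about $p_i v_i(R_i)\approx 2x+1$ of her own value, but in $j$'s valuation she removes only a $p_i$-fraction, roughly $(2x+1)\,v_j(R_i)/v_i(R_i)$, in expectation. Summed over rounds, this gives $\E[L_j]\lesssim n\cdot O(1/n)\cdot(\text{const})$ relative to $v_j(M)/n$ per round. Agent $j$ fails only if $L_j\ge v_j(M)-2x-1$. A mean bound alone gives a failure probability that is constant, not $O(x^2)$.

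The main obstacle is obtaining the $x^2$ rate. The heuristic that I would try to make rigorous is this. The only way to reach value below $x$ is for the agent's value to be concentrated in a few goods of value in $(x,1)$, with at least two of these needed to be taken by earlier agents in a correlated way. Each such take happens with probability $O(x)$ when fractions are scaled by $x$, which gives $x^2$ via the product-moment (negative-association-type) bound of Lemma~\ref{lem:rounding-product} applied to pairs. I would then sum over agents using the budget $\sum_i p_i\le O(1)$ across rounds to avoid an $n$ factor; the constant $85$ would come from this sum together with Chernoff bounds via $\psi$ for the small-good part.
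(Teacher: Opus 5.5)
There is a genuine gap, and it lies in the algorithm itself, not only in the analysis. With $p_i=(2x+v_i^{\max}(R_i))/v_i(R_i)$, agent $i$ takes fractional value $2x+v_i^{\max}(R_i)$ of her own value. This can be as large as $1+2x$, i.e.\ more than one MMS unit per agent, and your singleton branch is only a fallback.

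Take $n$ identical agents and $n$ goods, each worth $1$ to everyone, so $\mu_i=1$ and $v_i^{\max}(R_i)=1$. While $|R_i|\ge2$, your condition $v_i(R_i)\ge 2x+1$ holds and $p_i=(1+2x)/|R_i|$. Since all weights are equal, weighted pairwise rounding preserves $\sum_e x_e=1+2x$. So agent $i$ receives two goods with probability exactly $2x$ (by the fresh final coin) and one good otherwise. If any of agents $1,\ldots,n-1$ takes two goods, the remaining agents cannot all receive one, so someone gets value $0$. Hence $\Prb[\rho(X)\ge x]=(1-2x)^{n-1}$, which tends to $0$ and is already below $1-85x^2$ for $x=1/15$ and $n\ge5$.

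The same failure occurs if the goods are worth $1-\delta$ and $n$ extra goods worth $\delta$ are added with $n\delta<x$. Then $\mu_i=1$, every value is below $1$, and each early agent still takes two big goods with probability at least $2x-n\delta$. So appealing to a reduction to $v_i(e)<1$ does not help, and no bound on $L_j$ can rescue the scheme.

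Your last paragraph is also not yet an argument. Failures need not come from two large goods being taken jointly; in the example they come from cumulative over-consumption. Your fractions are not of order $x$, because of the $v_i^{\max}$ term. And the budget $\sum_i p_i=O(1)$ does not hold: here it is $(1+2x)$ times a harmonic sum.

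The paper's construction fixes exactly these points. Whenever some remaining good is worth at least $x$ to the arriving agent, she receives exactly one such good. This removes at most one unit of any future agent $j$'s capped value while the number $k$ of waiting agents drops by one, so the surplus $v_j(R_i)-k$ cannot decrease. Rounding is used only when every remaining good is worth less than $x$ to her. The rounding error is then below $x$, so the fraction can be $p_i=3x/(n-i+1)$, of order $x$ times the fair share. Lemma~\ref{lem:goods-shortfall} then gives value at least $x$ whenever $v_i(R_i)\ge 2(n-i+1)/3$.

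In a rounding round the surplus changes by $1-v_j(X_i)$. Lemma~\ref{lem:goods-convex} replaces $v_j(X_i)$, inside any convex nonincreasing function, by an independent Bernoulli sum $D$ with mean $p_iv_j(R_i)$. This mean is at most $\nu=9x/2$ when $v_j(R_i)\le 3k/2$; larger remaining values are handled through the exponential bound of Lemma~\ref{lem:goods-exponential}.

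The $x^2$ rate comes from the potential $H$ analyzed in Lemma~\ref{lem:goods-H}. It is linear on negative inputs and geometric with ratio $z=\nu/2$ on nonnegative ones, with height $h=\tfrac{13}{18}\nu^2$ at zero, because a zero surplus turns negative only if $D\ge2$. It satisfies $\E[H(t+1-D)]\le H(t)$, so $H$ of the shifted surplus is a supermartingale up to geometrically small terms (Lemma~\ref{lem:goods-probability-moment}). Markov's inequality with shift $m$, where $n-j+1=3m+r$, then bounds agent $j$'s failure probability by roughly $\tfrac{3}{r}hz^m$. Summability over agents comes from the factor $z^m$, not from any budget on $\sum_i p_i$, and it yields the constant $85$.
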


\begin{theorem}[Upper bound]\label{thm:goods-hardness}
For every even $n\ge2$ and every randomized online algorithm, some fixed binary goods instance on $2n$ items satisfies
\[
 \E[\rho(X)]\le\frac12+\frac12\cdot
 \frac{\binom n{n/2}^{\!2}}{\binom{2n}n}
 =\frac12+\frac{1+o(1)}{\sqrt{\pi n}}.
\]
\end{theorem}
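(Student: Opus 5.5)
We will prove Theorem~\ref{thm:goods-hardness} by adapting the hard instance of Theorem~\ref{thm:binary-hard-instance} to goods and then applying Yao's principle. Let $|M|=2n$ and set $t:=n/2$, which is an integer because $n$ is even. Before the algorithm starts, draw a uniformly random item set $C\subseteq M$ with $|C|=n$. The instance is as follows.
\begin{itemize}
\item Agents $1,\ldots,t$ value every item at one, so each has MMS two.
\item Agents $t+1,\ldots,n$ value exactly the items of $C$ at one and all other items at zero, so each has MMS one.
\end{itemize}

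Fix a deterministic online algorithm and consider how the first $t$ agents are served. If some agent $i\le t$ receives at most one item, then her ratio is at most $1/2$, so $\rho(X)\le1/2$. Otherwise each of the first $t$ agents receives at least two items, so their union $U$ satisfies $|U|\ge2t=n$. We then shrink $U$ to a fixed subset $U'\subseteq U$ of size exactly $n$. Here $U'$ is determined by the algorithm's behaviour on the first $t$ agents, and those agents have identical valuations that do not depend on $C$. Hence $U$, and therefore $U'$, is independent of $C$.

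On the event $|C\cap U'|>t$, the first $t$ agents have absorbed more than $t$ items of $C$. At most $n-t-1$ items of $C$ then remain for the $n-t$ later agents. Some later agent therefore receives no item of $C$, her value is zero, and $\rho(X)=0$. In all other cases we use the trivial bound $\rho(X)\le1$.

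Combining the cases, in both situations we have
\[
\E_C[\rho(X)]\le 1-\Prb[|C\cap U'|>t],
\]
where in the first case we may take $U'$ to be any fixed set of size $n$, since there $\rho(X)\le1/2$ already does at least as well. The random variable $|C\cap U'|$ is hypergeometric with population $2n$, $n$ marked items and sample size $n$. Its distribution is symmetric about $t=n/2$. Therefore
\[
\Prb\bigl[|C\cap U'|>t\bigr]=\tfrac12-\tfrac12\,\Prb\bigl[|C\cap U'|=t\bigr]=\tfrac12-\tfrac12\binom{n}{n/2}^2\Big/\binom{2n}{n},
\]
which gives exactly the bound stated in the theorem. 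Stirling's formula then yields the asymptotic form $\frac{1+o(1)}{\sqrt{\pi n}}$, just as in Theorem~\ref{thm:binary-hard-instance}.

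The one step that needs care is the case split. In the first case the correct bound is $\rho(X)\le1/2$; this is not covered by the hypergeometric estimate, but $1/2$ is already at most the claimed value. The argument must therefore be written so that each deterministic algorithm and each value of $C$ falls into exactly one of the two cases, with the appropriate bound applied in each.

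Finally, a randomized algorithm is a mixture of deterministic ones. We condition on its random seed, apply the bound above, and average over the seed and over $C$. Some fixed realization of $C$ then attains the bound, which gives an oblivious instance, exactly as in the binary chores proof.
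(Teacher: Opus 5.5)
Your proof is correct and matches the paper's argument: the same hidden-support instance with $n/2$ all-ones agents followed by $n/2$ agents valuing exactly $C$, the same case split on whether some early agent receives at most one item, and the same symmetric hypergeometric computation. Your event $|C\cap U'|>n/2$ is exactly the paper's event $|C\setminus U'|<n/2$ because $|C|=|U'|=n$, and the final averaging over the seed and over $C$ to obtain a fixed oblivious instance is also the same.
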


We first describe the shared allocation rule and then prove the three results in the order stated.

\subsection{The allocation rule and the current agent's shortfall}\label{subsec:goods-allocation}
The rule has two parameters. A target $a$ specifies when a single good is sufficient for the current agent. A constant $c$ determines the fraction of remaining value allocated when no such good exists. We first describe the rule for $0<a\le1$ and $0<c<2$, and select their values when analyzing the two guarantees.

When agent $i$ arrives, compute her MMS $\mu_i$ for the original set $M$ and the original number $n$ of agents. If $\mu_i>0$, replace each item value by $\min\{v_i(e)/\mu_i,1\}$ and define the value of a bundle as the sum of its processed item values. We continue to write $v_i$ for this processed valuation; $\rho(X)$ always refers to the original instance. Dividing all values by $\mu_i$ makes the MMS one. Capping item values at one preserves an MMS partition: a part containing an item of value at least one still has value at least one after capping, and every other part is unchanged. Conversely, capping cannot increase the minimum value of any partition. The processed MMS is therefore exactly one, and
\begin{equation}\label{eq:goods-normalization}
 0\le v_i(e)\le1,\qquad v_i(M)\ge n,\qquad
 \rho(X)\ge\min\{1,\min_{i:\mu_i>0}v_i(X_i)\}.
\end{equation}

In a nonfinal round $i<n$, set $p_i=c/(n-i+1)$. If $\mu_i=0$ or $v_i(R_i)=0$, allocate nothing. Otherwise, if a remaining good $e$ has processed value at least $a$, allocate the bundle $X_i=\{e\}$, using a fixed item ordering to break ties. A bundle containing exactly one good is called a \emph{singleton}; we refer to this branch as a \emph{singleton allocation}. If no such good exists, run \wpralg{} from Algorithm~\ref{alg:wpr} on the goods with positive value to agent $i$, using weights $\{v_i(e)\}_{e\in R_i}$ and initial fraction $p_i$. Goods of value zero are retained. To ensure all goods are allocated, the final agent receives all remaining goods.

We will account for an agent's loss through her \emph{shortfall}, the amount by which her received value is below $a$. A singleton allocation meets the target $a$ and therefore has no shortfall. In a rounding round, no single remaining good reaches the threshold, so every remaining good has processed value less than $a$. The pair steps preserve the agent's fractional value, and the final Bernoulli decision changes it by less than $a$. The following bound controls the resulting shortfall.
\begin{lemma}\label{lem:goods-shortfall}
For every positive-MMS agent $i<n$, conditional on the history before her arrival,
\begin{equation}\label{eq:goods-shortfall}
 \E[(a-v_i(X_i))_+\mid R_i]
 \le a\cdot\left(1-\frac{p_i\cdot v_i(R_i)}{2a}\right)_+^2,
\end{equation}
where $t_+=\max\{t,0\}$. Moreover, whenever $p_i\cdot v_i(R_i)\ge2a$, agent $i$ receives value at least $a$, regardless of the algorithm's random choices.
\end{lemma}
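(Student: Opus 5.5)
The plan is to condition on $R_i$ and split according to the branch the rule takes in round $i$. If $\mu_i=0$ the agent is excluded from the statement. If $v_i(R_i)=0$, the right-hand side of \eqref{eq:goods-shortfall} equals $a$, and the trivial bound $(a-v_i(X_i))_+\le a$ suffices. If some remaining good has processed value at least $a$, the singleton branch gives $v_i(X_i)\ge a$, so the shortfall is zero and \eqref{eq:goods-shortfall} holds trivially. In the same case the ``moreover'' clause also holds, since every run receives value at least $a$. The substantive case is therefore the rounding branch, where every remaining good satisfies $0\le v_i(e)<a$.

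In the rounding branch, apply Lemma~\ref{lem:rounding-cost} with costs replaced by the positive values $v_i(e)$. Zero-value goods are excluded from the rounding, which does not affect the value. Set $F:=p_i\cdot v_i(R_i)$. The lemma gives
\begin{equation*}
  v_i(X_i)=F+d\,(Z-x),\qquad 0\le d<a,\quad Z\sim\operatorname{Bernoulli}(x),
\end{equation*}
where $F$, $d$ and $x$ are fixed once we condition on the pair steps. Since $d(Z-x)\ge -dx> -a$, we get $v_i(X_i)> F-a$. When $F\ge 2a$, this gives $v_i(X_i)>a$ deterministically, which proves the second claim.

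For \eqref{eq:goods-shortfall}, fix the pair-step outcome. The shortfall is nonzero only if $Z=0$ and $F-dx<a$ (the event $Z=1$ gives $F+d(1-x)\ge F$, which may still fall below $a$; we handle it the same way). The conditional expectation is
\begin{equation*}
  (1-x)\,(a-F+dx)_+ + x\,(a-F-d(1-x))_+ .
\end{equation*}
We maximize this over $d\in[0,a]$ and $x\in[0,1]$ for fixed $F\in[0,2a)$. The function is convex in $d$, so it suffices to check $d=0$ and $d=a$. At $d=0$ the value is $(a-F)_+$. At $d=a$ it is $(1-x)(a-F+ax)_+ + x(ax-F)_+$.

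Optimizing over $x$ is the main obstacle, and we must show the maximum is at most $a(1-F/(2a))^2=(2a-F)^2/(4a)$. For $F\ge a$, only the first term can matter when $ax<F$. Maximizing $(1-x)(a-F+ax)$ gives $x=F/(2a)$ and value $(2a-F)^2/(4a)$, which is exactly the bound. For $F<a$, the candidate $(a-F)$ from $d=0$ satisfies $(a-F)\le(2a-F)^2/(4a)$, since this inequality is equivalent to $F^2\ge0$. At $d=a$ we handle both pieces. On $x\le F/a$ the first-term maximizer is again $x=F/(2a)$ and the value is the same. On $x>F/a$ the sum is linear-plus-quadratic, and we compute it explicitly: $(1-x)(a-F+ax)+x(ax-F)=a-F$, which is already bounded.

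It remains to average. Averaging the pointwise bound $(2a-F)_+^2/(4a)$ over the pair steps leaves it unchanged, because $F$ is fixed given $R_i$. This yields \eqref{eq:goods-shortfall}.
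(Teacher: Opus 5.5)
Your proposal is correct and follows essentially the paper's argument. You condition on the pair steps, use the representation $v_i(X_i)=F+d(Z-x)$ from Lemma~\ref{lem:rounding-cost} with $d\le a$ to obtain both the deterministic claim and the two-point expression for the conditional shortfall, and then bound that expression by $(2a-F)^2/(4a)$. The only difference is in that last bound: you reduce to $d\in\{0,a\}$ by convexity and locate the vertex of the quadratic in $x$, noting that the sum collapses to $a-F$ when both positive parts are active, whereas the paper rewrites the expectation as $\max\{0,a-\beta,(1-q)(a-\beta+dq)\}$ and finishes by completing the square, $(1-s/2)^2-(1-q)(1-s+q)=(q-s/2)^2$.
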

\begin{proof}
The singleton branch satisfies both claims. In a rounding round, put $\beta=p_i\cdot v_i(R_i)$. By Lemma~\ref{lem:rounding-cost}, conditional on the pair steps, the received value is $\beta+d\cdot(B-q)$, where $0\le d\le a$, $q\in[0,1]$, and $B$ is Bernoulli with mean $q$. If there is no fractional item, take $d=q=0$. The same representation covers an empty allocation.

If $\beta\ge2a$, then $v_i(X_i)\ge\beta-d\ge a$ for every outcome. Otherwise, put $s=\beta/a\in[0,2]$. Conditional on the outcomes of the pairwise rounding steps, the expected shortfall is
\[
\begin{aligned}
\E_B[(a-v_i(X_i))_+]
&=(1-q)\cdot(a-\beta+dq)_+
  +q\cdot(a-\beta-d(1-q))_+\\
&=\max\{0,a-\beta,(1-q)\cdot(a-\beta+dq)\}\\
&\le a\cdot\max\{0,1-s,(1-q)\cdot(1-s+q)\},
\end{aligned}
\]
where $\E_B$ is the expectation over the randomness of the Bernoulli variable $B$ and the second equality holds by the following argument. Since $d\ge0$, we have
\[
(a-\beta+dq)-(a-\beta-d(1-q))=d\ge0.
\]
Thus, $a-\beta-d(1-q)\ge0$ implies $a-\beta+dq\ge0$. If both are nonnegative, then $(1-q)\cdot(a-\beta+dq)_+ +q\cdot(a-\beta-d(1-q))_+ =a-\beta$; if only $a-\beta+dq$ is nonnegative, it equals $(1-q)\cdot(a-\beta+dq)$; if both are negative, it equals zero. The last inequality uses $d\le a$.

Each term in the maximum is at most $(1-s/2)^2$, since
\[
(1-s/2)^2-(1-s)=(s/2)^2\ge0,\qquad
(1-s/2)^2-(1-q)\cdot(1-s+q)=(q-s/2)^2\ge0.
\]
Taking expectation over the pairwise rounding outcomes proves~\eqref{eq:goods-shortfall}.
\end{proof}

\subsection{The expected guarantee}\label{subsec:goods-expectation}
To prove Theorem~\ref{thm:goods-expectation}, we choose $a=1/6$ and $c=1/3$ and bound the sum of the agents' expected shortfalls. The main quantity to control is the value that remains for a future agent. If $k$ agents are still waiting, the difference $k-v_j(R_i)$ measures agent $j$'s shortage relative to one unit per waiting agent. Allocating a singleton decreases the waiting count by one and removes at most one unit of her value, so it cannot increase this shortage. Rounding can increase it, but an exponential function bounds the expected increase in each round. This will suffice to sum all shortfalls directly.

Throughout the analysis, $\mathcal H_i$ denotes the allocations and random choices before round $i$, together with the resulting remaining set $R_i$. For a future agent $j>i$, we first bound $\E[e^{-\lambda\cdot v_j(R_{i+1})}\mid\mathcal H_i]$ in terms of her remaining value $v_j(R_i)$ before round $i$. We will use this bound repeatedly to control the exponential moment when agent $j$ arrives.
\begin{lemma}\label{lem:goods-exponential}
Fix a positive-MMS target agent $j$ and an earlier rounding round $i<j$. For every $\lambda>0$,
\begin{equation}\label{eq:goods-exponential}
 \E[e^{-\lambda\cdot v_j(R_{i+1})}\mid\mathcal H_i]
 \le \exp\left[-\bigl(\lambda-p_i\cdot(e^\lambda-1)\bigr)\cdot v_j(R_i)\right].
\end{equation}
A singleton or empty allocation satisfies $v_j(R_{i+1})\ge v_j(R_i)-1$.
\end{lemma}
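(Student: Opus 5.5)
The plan is to reduce \eqref{eq:goods-exponential} to a product-moment bound for the \emph{allocated} set, in the style of Lemma~\ref{lem:rounding-product}. Throughout, $v_j$ denotes agent $j$'s processed valuation. It is a deterministic function of the fixed instance (oblivious adversary), so it may be used in the analysis before agent $j$ arrives, and $0\le v_j(e)\le1$ by \eqref{eq:goods-normalization}.

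\textbf{Reduction.} In a rounding round, $R_{i+1}=R_i\setminus X_i$, so
\[
e^{-\lambda v_j(R_{i+1})}=e^{-\lambda v_j(R_i)}\prod_{e\in X_i}w_e,\qquad w_e:=e^{\lambda v_j(e)}\ge1 .
\]
Conditional on $\mathcal H_i$, the factor $e^{-\lambda v_j(R_i)}$ is fixed. It therefore suffices to show
\[
\E\Bigl[\prod_{e\in X_i}w_e\Bigm|\mathcal H_i\Bigr]\le\prod_{e\in R_i}\bigl(1-p_i+p_iw_e\bigr).
\]

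\textbf{Key step: the product bound for the allocated set.} The argument mirrors the proof of Lemma~\ref{lem:rounding-product}, but with the roles of $x_e$ and $1-x_e$ swapped. Let $P\subseteq R_i$ be the goods of positive value to agent $i$; only these enter \wpralg, and goods of value zero to agent $i$ are retained. Define
\[
G(\boldsymbol{x}):=\prod_{e\in P}\bigl(1+x_e(w_e-1)\bigr),
\]
which starts at $\prod_{e\in P}(1-p_i+p_iw_e)$.

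First, in a pair step on $e,f$ the two factors are affine in $s$ with slopes $(w_e-1)/v_i(e)\ge0$ and $-(w_f-1)/v_i(f)\le0$. Hence $\frac{d^2}{ds^2}G(\boldsymbol{x}(s))\le0$, so $G$ is concave along the step. Since $\E[s\mid\boldsymbol{x}]=0$, Jensen's inequality makes $G$ a supermartingale across pair steps.

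Second, the final Bernoulli step preserves $\E[G]$, because its factor is affine in the last coordinate.

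Third, at the final integral vector $G=\prod_{e\in X_i}w_e$. Every good in $R_i\setminus P$ contributes a factor $1-p_i+p_iw_e\ge1$ on the right-hand side, so those goods can only help. This step is the main point needing care: the lemma is stated for the complementary set with weights $z_e\ge1$. I would either redo the short concavity computation as above, or apply Lemma~\ref{lem:rounding-product} directly with fraction $1-p_i$ to the complement coordinates $1-x_e$. The latter works because \wpralg\ is symmetric under $x\mapsto1-x$, up to a sign of $s$.

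\textbf{Finishing.} Using $\log(1+y)\le y$,
\[
\prod_{e\in R_i}\bigl(1+p_i(w_e-1)\bigr)\le\exp\Bigl(p_i\sum_{e\in R_i}(e^{\lambda v_j(e)}-1)\Bigr).
\]
Convexity of $t\mapsto e^{\lambda t}$ on $[0,1]$ gives $e^{\lambda v_j(e)}-1\le v_j(e)(e^\lambda-1)$. The sum is therefore at most $(e^\lambda-1)\,v_j(R_i)$. Multiplying by $e^{-\lambda v_j(R_i)}$ yields \eqref{eq:goods-exponential}.

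\textbf{Singleton or empty rounds.} In these rounds $X_i$ contains at most one good, and that good has processed value at most one to agent $j$. Hence $v_j(R_{i+1})\ge v_j(R_i)-1$ deterministically.
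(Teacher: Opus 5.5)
Your proof is correct and is essentially the paper's argument. The paper applies the pair-step concavity from Lemma~\ref{lem:rounding-product} directly to the retained set with factors $z_e=e^{-\lambda v_j(e)}\in(0,1]$, noting that the second-derivative sign only needs $(z_e-1)(z_f-1)\ge0$; this is the same inequality as yours after factoring out $e^{-\lambda v_j(R_i)}$, since the two potentials differ only by the constant factor $\prod_{e\in P}z_e$, and the remaining steps (zero-value goods, $1+t\le e^t$, the chord bound, the singleton case) match the paper's.
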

\begin{proof}
The proof of Lemma~\ref{lem:rounding-product} also applies when all its factors $z_e$ lie in $(0,1]$: its second-derivative calculation only requires $(z_e-1)\cdot(z_f-1)\ge0$. Set $z_e=e^{-\lambda\cdot v_j(e)}$. An item of zero value to the current agent is retained; including it on the right-hand side is valid because $z_e\le p_i+(1-p_i)\cdot z_e$. Consequently,
\[
 \begin{aligned}
 \E[e^{-\lambda\cdot v_j(R_{i+1})}\mid\mathcal H_i]
 &\le e^{-\lambda\cdot v_j(R_i)}
       \prod_{e\in R_i}\left[1+p_i\cdot(e^{\lambda\cdot v_j(e)}-1)\right]\\
 &\le\exp\left[-\lambda\cdot v_j(R_i)
       +p_i\cdot\sum_{e\in R_i}(e^{\lambda\cdot v_j(e)}-1)\right]\\
 &\le\exp\left[-\bigl(\lambda-p_i\cdot(e^\lambda-1)\bigr)\cdot v_j(R_i)\right].
 \end{aligned}
\]
The second inequality uses $1+t\le e^t$. The last uses convexity on $[0,1]$, which implies $e^{\lambda\cdot v_j(e)}-1\le v_j(e)\cdot(e^\lambda-1)$. Finally, a singleton has target value at most one by~\eqref{eq:goods-normalization}.
\end{proof}

For the expected guarantee, one fixed exponential is enough. The next estimate shows that its expectation stays bounded throughout the allocation, uniformly over the target agent and the number of agents.

\begin{lemma}\label{lem:goods-expected-moment}
Use $c=1/3$ and let $\lambda=\log(11/2)$. For a positive-MMS target agent $j$, put $\ell=n-j+1$. Then
\begin{equation}\label{eq:goods-expected-moment}
 \E[e^{\lambda\cdot(\ell-v_j(R_j))}]
 \le
 \begin{cases}
  11/8,&\ell=1,\\
  49/40,&\ell\ge2.
 \end{cases}
\end{equation}
\end{lemma}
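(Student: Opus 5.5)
The plan is to track, for the fixed target agent $j$, the potential
\[
 \Phi_i:=\exp\bigl[\lambda\cdot(k_i-v_j(R_i))\bigr],\qquad k_i:=n-i+1,
\]
for rounds $i=1,\ldots,j$. Here $k_i$ is the number of agents still waiting before round $i$, so $k_j=\ell$ and $\Phi_j$ is exactly the quantity in \eqref{eq:goods-expected-moment}. Initially $R_1=M$ and $v_j(M)\ge n$ by \eqref{eq:goods-normalization}, so $\Phi_1\le1$. I would prove by forward induction a bound $\E[\Phi_i]\le 1+\sum_{k=k_i+1}^{n}\delta_k$, where $\delta_k$ is an explicit per-round increment, and then show that the resulting series is at most $9/40$ (respectively $3/8$ when $\ell=1$).

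\textbf{One-round step.} Fix round $i<j$ and write $k=k_i$, $y=\Phi_i$. There are three cases.
\begin{itemize}
\item In an empty round, $v_j$ is unchanged and $k$ drops by one, so $\Phi_{i+1}=e^{-\lambda}y\le y$.
\item In a singleton round, Lemma~\ref{lem:goods-exponential} gives $v_j(R_{i+1})\ge v_j(R_i)-1$, so again $\Phi_{i+1}\le y$.
\item In a rounding round, Lemma~\ref{lem:goods-exponential} with $p_i=c/k=1/(3k)$ and $e^\lambda-1=9/2$ gives
\[
 \E[\Phi_{i+1}\mid\mathcal H_i]\le \exp\Bigl[\lambda(k-1)-\Bigl(\lambda-\frac{3}{2k}\Bigr)v_j(R_i)\Bigr].
\]
Substituting $v_j(R_i)=k-\tfrac1\lambda\log y$, this equals $\alpha\,y^{1-\gamma_k}$, where $\alpha:=e^{3/2-\lambda}=2e^{3/2}/11<1$ and $\gamma_k:=3/(2k\lambda)\in(0,1)$ because $\lambda>3/2$.
\end{itemize}
Hence in every case $\E[\Phi_{i+1}\mid\mathcal H_i]\le f_k(\Phi_i)$ with $f_k(y):=\max\{y,\ \alpha y^{1-\gamma_k}\}$. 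The case distinction depends on the history, but the $\max$ absorbs that.

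\textbf{Handling the non-concavity.} The function $f_k$ is not concave, so I would not apply Jensen to it directly. Instead I would use the bound $f_k(y)\le y+\delta_k$, where
\[
 \delta_k:=\sup_{y>0}\bigl(\alpha y^{1-\gamma_k}-y\bigr)=\gamma_k(1-\gamma_k)^{(1-\gamma_k)/\gamma_k}\alpha^{1/\gamma_k}.
\]
This follows from the first-order condition, which places the maximizer at $y=(\alpha(1-\gamma_k))^{1/\gamma_k}$. Since $1/\gamma_k=2k\lambda/3$, $\delta_k$ decays geometrically, roughly like $\alpha^{2k\lambda/3}$ with $\alpha\approx0.815$. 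Taking expectations and iterating from round $1$ to round $j$ gives $\E[\Phi_j]\le 1+\sum_{k=\ell+1}^{n}\delta_k\le 1+\sum_{k>\ell}\delta_k$.

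\textbf{Closing the numerics (main obstacle).} The remaining task is to check $\sum_{k\ge3}\delta_k\le 9/40$ for $\ell\ge2$ and $\sum_{k\ge2}\delta_k\le 3/8$ for $\ell=1$. I expect this to be the delicate step. The constants $\lambda=\log(11/2)$ and $c=1/3$ look tuned, and the crude bound $y+\delta_k$ may be too lossy in the first few terms. To test it, I would evaluate $\delta_2,\ldots,\delta_6$ exactly and bound the tail by a geometric series in $\alpha^{2\lambda/3}$.

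If the sum overshoots, I would sharpen the argument in two ways. First, near $y\le1$ I would keep the exact map $\alpha y^{1-\gamma_k}$, which is concave, so Jensen applies on the event that the round is a rounding round. Second, I would use the fact that the increment is only realized when $\Phi_i$ is already tiny, namely near the maximizer $(\alpha(1-\gamma_k))^{1/\gamma_k}$, while the starting point has $\Phi_1\le1$. This suggests carrying the induction hypothesis on the concave majorant of $f_k$ rather than on $f_k(y)\le y+\delta_k$. The separate constant $11/8$ for $\ell=1$ reflects that the final increment, for $k=2\to1$ with $\gamma_2=3/(4\lambda)$ large, is the biggest single term; it should be computed on its own.
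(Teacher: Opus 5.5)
Your proposal is correct and follows essentially the paper's argument: the same exponential potential, the same one-round bound from Lemma~\ref{lem:goods-exponential} (with singleton and empty rounds non-increasing), and a state-independent additive increment summed over the rounds with $k=n,\ldots,\ell+1$ waiting agents. Your $\delta_k$ is exactly the supremum of the increment $\alpha y^{1-\gamma_k}-y$, which the paper bounds via AM--GM by $\eta_k=(4/5)^k(k-1)^{k-1}/k^k$; hence $\delta_k\le\eta_k$, and the paper's estimates $\sum_{k\ge2}\eta_k<3/8$ and $\sum_{k\ge3}\eta_k<9/40$ close your numerics directly (numerically $\sum_{k\ge2}\delta_k\approx0.30$), so the fallback sharpening is unnecessary.
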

\begin{proof}
Initially, the exponential is at most one because $v_j(M)\ge n$. A singleton or empty allocation cannot increase it. Consider a rounding round with $k=n-i+1\ge2$, and condition on $\mathcal H_i$. By~\eqref{eq:goods-exponential}, its expected increase is at most
\begin{equation}\label{eq:goods-moment-increase}
 e^{\lambda\cdot(k-1)-(\lambda-3/(2k))\cdot v_j(R_i)}
 -e^{\lambda\cdot(k-v_j(R_i))},
\end{equation}
where we used $p_i=1/(3k)$ and $e^\lambda-1=9/2$.

We bound this increase by a quantity depending only on $k$. If~\eqref{eq:goods-moment-increase} is nonpositive, the desired upper bound follows immediately. Otherwise, dividing the positive difference by $e^{\lambda\cdot(k-v_j(R_i))}$, we have
\[
e^{-\lambda+(3/(2k))\cdot v_j(R_i)}>1,
\]
and hence
\[
v_j(R_i)>\frac{2\lambda k}{3},
\qquad
0<e^{\lambda-(3/(2k))\cdot v_j(R_i)}<1.
\]

Since this is a nonfinal round, $k\ge2$. We apply the arithmetic--geometric mean inequality to $k-1$ copies of $e^{\lambda-(3/(2k))\cdot v_j(R_i)}/(k-1)$ and one copy of $1-e^{\lambda-(3/(2k))\cdot v_j(R_i)}$. These $k$ numbers are nonnegative and sum to $1$, so
\[
\left(\frac{e^{\lambda-(3/(2k))\cdot v_j(R_i)}}{k-1}\right)^{k-1}
\cdot\left(1-e^{\lambda-(3/(2k))\cdot v_j(R_i)}\right)
\le\frac{1}{k^k}.
\]
Multiplying by $(k-1)^{k-1}$, we obtain
\[
e^{(k-1)\cdot(\lambda-(3/(2k))\cdot v_j(R_i))}
\cdot\left(1-e^{\lambda-(3/(2k))\cdot v_j(R_i)}\right)
\le\frac{(k-1)^{k-1}}{k^k}.
\]

For $\lambda=\log(11/2)$, we have $\lambda>3/2$ and $e^{-\lambda\cdot(2\lambda/3-1)}<4/5$. Since $v_j(R_i)>2\lambda k/3$, it follows that
\[
e^{-(\lambda-3/2)\cdot v_j(R_i)}
\le e^{-\lambda\cdot(2\lambda/3-1)\cdot k}
\le\left(\frac45\right)^k.
\]
Combining these bounds, we have
\[
\begin{aligned}
&e^{\lambda\cdot(k-1)-(\lambda-3/(2k))\cdot v_j(R_i)}
-e^{\lambda\cdot(k-v_j(R_i))}\\
&\quad=
e^{-(\lambda-3/2)\cdot v_j(R_i)}
\cdot e^{(k-1)\cdot(\lambda-(3/(2k))\cdot v_j(R_i))}
\cdot\left(1-e^{\lambda-(3/(2k))\cdot v_j(R_i)}\right)\\
&\quad\le
\left(\frac45\right)^k\cdot\frac{(k-1)^{k-1}}{k^k}.
\end{aligned}
\]

Let $\eta_k=(4/5)^k\cdot(k-1)^{k-1}/k^k$ denote this bound on the increase in a round with $k$ waiting agents. Taking expectations and summing over earlier rounds yields
\begin{equation}\label{eq:goods-moment-sum}
 \E[e^{\lambda\cdot(\ell-v_j(R_j))}]
 \le1+\sum_{k=\ell+1}^{\infty}\eta_k.
\end{equation}
To bound the series, use $(1-1/k)^k\le e^{-1}$ and keep its first four terms:
\[
 \sum_{k=2}^{\infty}\eta_k
 \le\sum_{k=2}^{5}\left(\frac45\right)^k\frac{(k-1)^{k-1}}{k^k}
 +\frac{4}{5e}\left[\log5-\sum_{t=1}^{4}\frac{(4/5)^t}{t}\right]
 <\frac38.
\]
For the first inequality, set $t=k-1$ in the tail and use $\sum_{t\ge1}(4/5)^t/t=-\log(1-4/5)=\log5$. Since $\eta_2=4/25$, we also have $\sum_{k=3}^{\infty}\eta_k<9/40$. Combining these two estimates with~\eqref{eq:goods-moment-sum} proves~\eqref{eq:goods-expected-moment}.
\end{proof}

We can now finish the expectation analysis by converting the bounded exponential into shortfall bounds. Earlier agents have a quadratic shortfall divided by the square of the number still waiting, so their total is summable. The final agent receives everything that remains and is bounded separately.

\begin{proof}[Proof of Theorem~\ref{thm:goods-expectation}]
The case $n=1$ has ratio one. Assume $n\ge2$. For a positive-MMS agent $j<n$, write $\ell=n-j+1\ge2$. Since $a=1/6$ and $p_j=1/(3\ell)$, Lemma~\ref{lem:goods-shortfall} implies
\[
 \E[(a-v_j(X_j))_+]
 \le\frac1{6\ell^2}\cdot\E[(\ell-v_j(R_j))_+^2]
 \le\frac{49}{40}\cdot\frac{2}{3e^2\lambda^2\ell^2}.
\]
For the second inequality, apply~\eqref{eq:goods-expected-moment} and $(t_+)^2\le4e^{\lambda t}/(e^2\lambda^2)$; the latter follows by maximizing $t^2e^{-\lambda t}$ at $t=2/\lambda$ for $t\ge0$.

For the final agent, $\lambda<2$, so $\lambda a<1$. For $0\le w<a$, the inequality $e^{-\lambda w}\ge1-\lambda w\ge1-w/a$ implies $(a-w)_+\le a\cdot e^{-\lambda w}$; the same bound is immediate for $w\ge a$. Therefore, if $\mu_n>0$,
\[
 \E[(a-v_n(X_n))_+]
 \le a\cdot e^{-\lambda}\cdot
       \E[e^{\lambda\cdot(1-v_n(R_n))}]
 \le\frac16\cdot\frac2{11}\cdot\frac{11}{8}
 =\frac1{24}.
\]
If $\mu_n=0$, there is no final-agent shortfall to include.

To sum the other agents' shortfalls, observe that
\[
 \sum_{\ell=2}^{\infty}\frac1{\ell^2}
 \le\frac14+\sum_{\ell=3}^{\infty}
 \left(\frac1{\ell-1/2}-\frac1{\ell+1/2}\right)
 =\frac14+\frac25=\frac{13}{20}.
\]
The comparison holds because the parenthesized difference is $1/(\ell^2-1/4)\ge1/\ell^2$. Consequently,
\[
 \sum_{\substack{j<n\\\mu_j>0}}\E[(a-v_j(X_j))_+]
 \le\frac{49}{40}\cdot\frac{2}{3e^2\lambda^2}\cdot\frac{13}{20}
 <\frac1{40}.
\]
Finally, \eqref{eq:goods-normalization} implies the pointwise bound
\[
 (a-\rho(X))_+
 \le\sum_{j:\mu_j>0}(a-v_j(X_j))_+.
\]
Indeed, if the realized ratio is below $a$, its shortfall is at most the largest processed-value shortfall, which is at most their sum. Taking expectations gives
\[
 \E[\rho(X)]\ge a-\E[(a-\rho(X))_+]
 \ge\frac16-\frac1{24}-\frac1{40}=\frac1{10},
\]
which completes the proof.
\end{proof}

\subsection{The probability guarantee}\label{subsec:goods-probability}
To prove Theorem~\ref{thm:goods-probability}, fix $0<x\le1/15$ and choose $a=x$ and $c=3x$. For a nonfinal agent $i$, we have $a=x$ and $p_i=3x/(n-i+1)$. If $v_i(R_i)\ge2(n-i+1)/3$, then
\[
p_i\cdot v_i(R_i)
\ge\frac{3x}{n-i+1}\cdot\frac{2(n-i+1)}3
=2x=2a.
\]
By Lemma~\ref{lem:goods-shortfall}, agent $i$ therefore receives value at least $a=x$. We will bound the probability of falling below these thresholds. The exponential estimate alone loses too much near a threshold, so we use an explicit function that measures the shortage linearly and decays geometrically when there is a surplus.

\subsubsection{Comparing one allocation with independent deletions}
To use a function other than an exponential, we need a convex version of Lemma~\ref{lem:goods-exponential}. The argument follows the same pair-step concavity calculation as Lemma~\ref{lem:rounding-product}. It first compares weighted rounding with independently deleting each item, and then replaces each deleted fractional value by a zero-or-one contribution of the same mean.

\begin{lemma}\label{lem:goods-convex}
Fix a positive-MMS target agent $j$ and an earlier rounding round $i<j$. Conditional on $\mathcal H_i$, let $D$ be a sum of independent Bernoulli variables, one for each $e\in R_i$, with respective means $p_i\cdot v_j(e)$. For every convex nonincreasing function $g$ on $\mathbb R$ and every fixed shift $s$,
\begin{equation}\label{eq:goods-convex}
 \E[g(v_j(R_{i+1})+s)\mid\mathcal H_i]
 \le\E[g(v_j(R_i)+s-D)],
 \qquad \E[D]=p_i\cdot v_j(R_i).
\end{equation}
\end{lemma}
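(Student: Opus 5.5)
The plan is to prove \eqref{eq:goods-convex} in two comparison steps, both conditional on $\mathcal H_i$. Throughout, $R_i$ and all values are fixed, and $w_e:=v_j(e)\in[0,1]$. Put $h(t):=g(v_j(R_i)+s-t)$. Since $g$ is convex and nonincreasing, $h$ is convex and nondecreasing on $\mathbb R$. Because $v_j(R_{i+1})=v_j(R_i)-v_j(X_i)$, the left-hand side of \eqref{eq:goods-convex} equals $\E[h(v_j(X_i))]$. The goal is therefore $\E[h(v_j(X_i))]\le\E[h(D)]$.

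\emph{Step 1: weighted pairwise rounding versus independent selection.} Let $P\subseteq R_i$ be the goods of positive value to agent $i$, which are the goods passed to \wpralg{}. For a fractional vector $\boldsymbol x\in[0,1]^P$, define the multilinear potential
\[
\Phi(\boldsymbol x):=\E\Bigl[h\Bigl(\textstyle\sum_{e\in P}w_e\xi_e\Bigr)\Bigr],
\]
where the $\xi_e\sim\operatorname{Bernoulli}(x_e)$ are independent. Mimicking the proof of Lemma~\ref{lem:rounding-product}, consider a pair step on $e,f$, which moves along $x_e+s/v_i(e)$ and $x_f-s/v_i(f)$. The weights $v_i(e),v_i(f)$ are positive since $e,f\in P$. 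Along this line,
\[
\frac{d^2}{ds^2}\Phi=-\frac{2}{v_i(e)v_i(f)}\,\frac{\partial^2\Phi}{\partial x_e\partial x_f}.
\]
Moreover,
\[
\frac{\partial^2\Phi}{\partial x_e\partial x_f}=\E\bigl[h(S+w_e+w_f)-h(S+w_e)-h(S+w_f)+h(S)\bigr]\ge0,
\]
where $S$ is the sum over the other coordinates. The inequality holds because $h$ is convex and $w_e,w_f\ge0$. Hence $\Phi$ is concave along each pair step. Since each step has conditional mean displacement zero (Lemma~\ref{lem:pair-step}), Jensen's inequality shows that $\E[\Phi]$ does not increase. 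The final Bernoulli step leaves $\E[\Phi]$ unchanged, because $\Phi$ is affine in a single coordinate. At the start $\boldsymbol x\equiv p_i$, and at the end $\Phi=h(v_j(X_i))$. Therefore
\[
\E[h(v_j(X_i))]\le\E\Bigl[h\Bigl(\textstyle\sum_{e\in P}w_eY_e\Bigr)\Bigr],
\]
where the $Y_e\sim\operatorname{Bernoulli}(p_i)$ are independent.

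\emph{Step 2: from $w_eY_e$ to zero-one deletions.} Each $w_eY_e$ takes values in $[0,1]$ and has mean $p_iw_e$. Any $[0,1]$-valued variable is dominated in convex order by the Bernoulli variable with the same mean, since the chord of a convex function on $[0,1]$ lies above it. So replace the $w_eY_e$ one at a time by independent $D_e\sim\operatorname{Bernoulli}(p_iw_e)$. At each replacement, condition on the remaining summands and use that $t\mapsto h(c+t)$ is convex. Next, consider the goods in $R_i\setminus P$, which are retained by the algorithm. Adding independent $D_e\sim\operatorname{Bernoulli}(p_iw_e)$ for them can only increase $\E[h(\cdot)]$, because $h$ is nondecreasing and $D_e\ge0$. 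This yields $\E[h(D)]$ with $D=\sum_{e\in R_i}D_e$ and $\E[D]=p_i\cdot v_j(R_i)$.

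I expect the main point to check to be the sign of the mixed second derivative in Step 1. It uses only $w_e,w_f\ge0$ and convexity of $h$; no monotonicity of $h$ is needed there. Monotonicity enters only in the treatment of zero-value goods. Some care is also needed to note that restricting rounding to $P$ keeps all pair-step weights positive, so that Lemma~\ref{lem:pair-step} applies.
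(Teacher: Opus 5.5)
Your proposal is correct and follows essentially the same route as the paper. The paper likewise shows that the multilinear independent-deletion extension is concave along each pair step (in retention coordinates with $g$, where you use allocation coordinates and the reflected function $h$), replaces each $v_j(e)$-weighted deletion by a Bernoulli$(p_i v_j(e))$ variable by convexity, and uses monotonicity of $g$ to absorb the goods of zero value to agent $i$. Your chord argument is the paper's Jensen step with the coupling $\xi_e\zeta_e$ left implicit, and doing the monotonicity step last rather than first changes nothing.
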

\begin{proof}
Consider the expected loss when items are independently deleted with probabilities equal to their current rounding coordinates. We express the independent-deletion comparison in terms of retention probabilities: in the following calculation, $x_e$ and $x_f$ denote the complements of the corresponding allocation coordinates. Thus, items $e$ and $f$ are independently deleted with probabilities $1-x_e$ and $1-x_f$. Fix the deletion outcomes of all other items, and let $b$ be their retained value plus the fixed shift $s$. The four possible outcomes for $e,f$ yield the conditional expectation
\[
\begin{aligned}
&(1-x_e)\cdot(1-x_f)\cdot g(b)\\
&\quad+x_e\cdot(1-x_f)\cdot g(b+v_j(e))\\
&\quad+(1-x_e)\cdot x_f\cdot g(b+v_j(f))\\
&\quad+x_e\cdot x_f\cdot g(b+v_j(e)+v_j(f)).
\end{aligned}
\]
Expanding the probability products and collecting terms, this expectation equals
\[
\begin{aligned}
&g(b)
+x_e\cdot\bigl[g(b+v_j(e))-g(b)\bigr]\\
&\quad+x_f\cdot\bigl[g(b+v_j(f))-g(b)\bigr]\\
&\quad+x_e\cdot x_f\cdot
\bigl[g(b)-g(b+v_j(e))-g(b+v_j(f))
+g(b+v_j(e)+v_j(f))\bigr].
\end{aligned}
\]
The bracket in the last line is therefore the coefficient of $x_e\cdot x_f$.

We next show that this coefficient is nonnegative. If $v_j(e)+v_j(f)=0$, the coefficient is zero. Otherwise, both $b+v_j(e)$ and $b+v_j(f)$ lie between $b$ and $b+v_j(e)+v_j(f)$. By convexity of $g$, we have
\[
\begin{aligned}
g(b+v_j(e))
&\le
\frac{v_j(f)}{v_j(e)+v_j(f)}\cdot g(b)
+\frac{v_j(e)}{v_j(e)+v_j(f)}
\cdot g(b+v_j(e)+v_j(f)),\\
g(b+v_j(f))
&\le
\frac{v_j(e)}{v_j(e)+v_j(f)}\cdot g(b)
+\frac{v_j(f)}{v_j(e)+v_j(f)}
\cdot g(b+v_j(e)+v_j(f)).
\end{aligned}
\]
Adding these inequalities, the coefficients of each endpoint value sum to one, so
\[
g(b+v_j(e))+g(b+v_j(f))
\le g(b)+g(b+v_j(e)+v_j(f)).
\]
Consequently,
\[
g(b)-g(b+v_j(e))-g(b+v_j(f))
+g(b+v_j(e)+v_j(f))\ge0.
\]
We now use the nonnegative coefficient to analyze a weighted pair step. Along this step, the retention coordinates $x_e,x_f$ change with slopes $-1/v_i(e)$ and $1/v_i(f)$, respectively. The constant and linear terms in the expanded expectation have zero second derivative along the step, while the product $x_e\cdot x_f$ has second derivative $-2/(v_i(e)\cdot v_i(f))$. Therefore, the second derivative of the displayed expectation is
\[
-\frac{2}{v_i(e)\cdot v_i(f)}
\cdot\bigl[g(b)-g(b+v_j(e))-g(b+v_j(f))
+g(b+v_j(e)+v_j(f))\bigr]
\le0.
\]
This expectation is thus concave along the pair step. Averaging over the deletion outcomes of the other items preserves this concavity. Conditional on the current rounding coordinates and the outcomes of the preceding pair steps, each coordinate preserves its expectation. For a vector of retention probabilities $\boldsymbol{x}=(x_e)_{e\in R_i}$, define
\[
F(\boldsymbol{x})
=\E\left[g\left(s+\sum_{e\in R_i}v_j(e)\cdot Y_e\right)\right],
\qquad
Y_e\sim\operatorname{Bernoulli}(x_e)
\quad\text{independently}.
\]
Thus, $F(\boldsymbol{x})$ is the expected value of $g$ when each item $e$ is independently retained with probability $x_e$. Condition on $\mathcal H_i$ and the outcomes of the preceding pair steps, so the current vector $\boldsymbol{x}$ is fixed. Let $\boldsymbol{x}'$ be the random vector after the next pair step, and let $\E_{\mathrm{pair}}$ denote expectation over this step's random choice. Since each coordinate preserves its conditional expectation,
\[
\E_{\mathrm{pair}}[\boldsymbol{x}']=\boldsymbol{x}.
\]
By the preceding calculation, $F$ is concave along the line on which this pair step moves. Therefore, by Jensen's inequality,
\[
\E_{\mathrm{pair}}[F(\boldsymbol{x}')]
\le F\bigl(\E_{\mathrm{pair}}[\boldsymbol{x}']\bigr)
=F(\boldsymbol{x}).
\]

After all pair steps, fix their outcomes. If $x_e$ is the remaining fractional coordinate, we have
\[
b=s+\sum_{e'\in R_i\setminus\{e\}}v_j(e')\cdot x_{e'}.
\]
All coordinates in this sum are integral. The final choice deletes $e$ with probability $1-x_e$ and retains it with probability $x_e$. Hence, taking expectation only over this final choice, we have
\[
\E_{\mathrm{final}}\left[g\left(v_j(R_{i+1})+s\right)\right]
=(1-x_e)\cdot g(b)+x_e\cdot g(b+v_j(e))
=F(\boldsymbol{x}).
\]
If all coordinates are already integral, no final random choice is needed, and directly
\[
g\left(v_j(R_{i+1})+s\right)=F(\boldsymbol{x}).
\]

Let $\boldsymbol{x}$ now denote the initial retention vector: $x_e=1-p_i$ if $v_i(e)>0$, and $x_e=1$ otherwise. Applying the pair-step inequality successively and then averaging over the final choice bounds the actual expectation by $F$ at the initial retention probabilities. These probabilities are $1-p_i$ for items with positive value to agent $i$, and $1$ for items with zero value to her. Consequently,
\[
\E\left[g\left(v_j(R_{i+1})+s\right)\mid\mathcal H_i\right]
\le F(\boldsymbol{x})
,
\]
where the variables $Y_e$ are independent and satisfy
\[
\Pr[Y_e=1]=
\begin{cases}
1-p_i,&v_i(e)>0,\\
1,&v_i(e)=0.
\end{cases}
\]

Draw mutually independent variables $\xi_e\sim\operatorname{Bernoulli}(p_i)$ and $\zeta_e\sim\operatorname{Bernoulli}(v_j(e))$, and put $D=\sum_{e\in R_i}\xi_e\cdot\zeta_e$. Conditional on $(\xi_e)_e$, we have $\E[D\mid(\xi_e)_e]=\sum_{e\in R_i}v_j(e)\cdot\xi_e$. By convexity of $g$ and Jensen's inequality,
\[
\begin{aligned}
g\left(v_j(R_i)+s-\sum_{e\in R_i}v_j(e)\cdot\xi_e\right)
&=g\left(v_j(R_i)+s-\E[D\mid(\xi_e)_e]\right)\\
&\le\E\left[g\left(v_j(R_i)+s-D\right)\mid(\xi_e)_e\right].
\end{aligned}
\]

In the initial independent comparison, only items with $v_i(e)>0$ can be deleted. Since $g$ is non-increasing, deleting items with $v_i(e)=0$ can only increase its value. Using this observation and then taking expectation in the preceding Jensen inequality, we obtain
\[
\begin{aligned}
F(\boldsymbol{x})
&=\E_\xi\left[g\left(v_j(R_i)+s-
\sum_{\substack{e\in R_i\\v_i(e)>0}}v_j(e)\cdot\xi_e\right)\right]\\
&\le\E_\xi\left[g\left(v_j(R_i)+s-
\sum_{e\in R_i}v_j(e)\cdot\xi_e\right)\right]\\
&\le\E\left[g\left(v_j(R_i)+s-D\right)\right].
\end{aligned}
\]
Combining this bound with the comparison established by the rounding steps, we have
\[
\E\left[g\left(v_j(R_{i+1})+s\right)\mid\mathcal H_i\right]
\le F(\boldsymbol{x})
\le\E\left[g\left(v_j(R_i)+s-D\right)\right].
\]
The products $\xi_e\cdot\zeta_e$ are independent Bernoulli variables with means $p_i\cdot v_j(e)$, so $\E[D]=p_i\cdot v_j(R_i)$. This proves~\eqref{eq:goods-convex}. The argument $v_j(R_i)+s-D$ may be negative, hence $g$ is defined on all of $\mathbb R$.
\end{proof}

\subsubsection{An explicit function measuring shortage}
To bound the events $v_j(R_j)<2(n-j+1)/3$, we construct a potential $H$ that measures how little value remains for a future agent. Fix a future agent $j$ and consider round $i<j$, with $k=n-i+1$ agents still waiting. We evaluate $H$ at $v_j(R_i)-k$, the remaining value minus one unit per waiting agent. We require $H$ to be a nonnegative, non-increasing, and convex function, with $H(t)\ge\max\{-t,0\}$. In particular, if $v_j(R_i)<k$, then $H(v_j(R_i)-k)\ge k-v_j(R_i)$: the potential is at least the missing value.

Allocating $X_i$ removes $v_j(X_i)$ units of agent $j$'s value and reduces the waiting count by one. Thus,
\[
v_j(R_{i+1})-(k-1)
=v_j(R_i)-k+1-v_j(X_i).
\]
For a rounding round, Lemma~\ref{lem:goods-convex} therefore implies
\[
\E\left[H\left(v_j(R_{i+1})-(k-1)\right)\mid\mathcal H_i\right]
\le
\E\left[H\left(v_j(R_i)-k+1-D\right)\right],
\qquad
\E[D]=p_i\cdot v_j(R_i).
\]

We will choose $H$ so that, for every real $t$,
\[
\E[H(t+1-D)]\le H(t)
\qquad\text{whenever }\E[D]\le\frac92\cdot x.
\]
Taking $t=v_j(R_i)-k$ then shows that the expected potential cannot increase in any rounding round satisfying this mean bound. We will later apply the same comparison with a fixed shift in the argument of $H$ to bound each agent's failure probability.

In a round with $k$ waiting agents, $p_i=3x/k$. When the target's remaining value is at most $3k/2$, the mean comparison count is at most
\[
 \nu=\frac92\cdot x\le\frac3{10}.
\]
We first control this range. On negative inputs, let $H$ be linear so it records the shortage. On nonnegative integers, let it decrease geometrically. At input zero, the next input is negative only if at least two units are deleted, which motivates a starting height proportional to $\nu^2$. The following choices make both requirements work:
\begin{equation}\label{eq:goods-H}
 h=\frac{13}{18}\cdot\nu^2,\qquad z=\frac{\nu}{2},\qquad
 H(k)=
 \begin{cases}
  h-k,&k\le0,\\
  h\cdot z^k,&k\ge0,
 \end{cases}
 \quad k\in\mathbb Z.
\end{equation}
We extend $H$ to real inputs as follows. For $t\le0$, set $H(t)=h-t$. For every integer $k\ge0$ and $t\in[k,k+1]$, define
\[
H(t)=(k+1-t)\cdot h z^k+(t-k)\cdot h z^{k+1}.
\]
The two coefficients are nonnegative and sum to one. At $t=k$ and $t=k+1$, this formula equals $H(k)$ and $H(k+1)$, respectively, and between them it is linear in $t$.

On the negative half-line, the slope is $-1$. On $[k,k+1]$, the interval has length one, so its slope is
\[
H(k+1)-H(k)
=h z^{k+1}-h z^k
=-h z^k(1-z).
\]
Since $0<z<1$ and $h(1-z)\le h\le13/200<1$, these slopes satisfy
\[
-1\le-h(1-z)\le-hz(1-z)\le-hz^2(1-z)\le\cdots<0.
\]
All slopes are nonpositive, so $H$ is nonincreasing. The slopes never decrease as the input increases, so the continuous piecewise linear function $H$ is convex.

Finally, for $t\le0$, we have $H(t)=h-t\ge-t$. For $t\ge0$, choose an integer $k\ge0$ with $k\le t\le k+1$. Then $k+1-t\ge0$ and $t-k\ge0$. Since $h>0$ and $z>0$, we have
\[
(k+1-t)\cdot hz^k\ge0,
\qquad
(t-k)\cdot hz^{k+1}\ge0.
\]
Therefore,
\[
H(t)=(k+1-t)\cdot hz^k+(t-k)\cdot hz^{k+1}\ge0.
\]
Consequently, $H$ is nonnegative and satisfies $H(t)\ge\max\{-t,0\}=(-t)_+$ for every real $t$.
We now verify that this choice of $H$ satisfies $\E[H(t+1-D)]\le H(t)$ whenever $\E[D]\le\nu$. To handle rounds with larger remaining values, we also derive an exponential upper bound on $H$ in the following lemma, which allows us to apply Lemma~\ref{lem:goods-exponential}. For this bound, set $\lambda=\log(1/z)$, so $z^t=e^{-\lambda t}$.

\begin{lemma}\label{lem:goods-H}
For every sum $D$ of independent Bernoulli variables with $\E[D]\le\nu$ and every real $t$,
\begin{equation}\label{eq:goods-H-bounds}
 \E[H(t+1-D)]\le H(t),\qquad
 H(t)\le\frac{e^{\lambda h}}{e\lambda}\cdot z^t.
\end{equation}
\end{lemma}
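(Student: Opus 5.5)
We handle the two inequalities in \eqref{eq:goods-H-bounds} separately, beginning with the simpler exponential envelope.

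\textbf{The exponential envelope.} Since $z^t=e^{-\lambda t}$, we need $H(t)\,e^{\lambda t}\le e^{\lambda h}/(e\lambda)$.
\begin{itemize}
\item For $t\le0$, $H(t)=h-t$. Maximizing $(h-t)e^{\lambda t}$ over $t$ puts the critical point at $t=h-1/\lambda$, with value $e^{\lambda h-1}/\lambda$, which is exactly the claimed constant.
\item For $t\ge0$, on each $[k,k+1]$ the function $H$ is the chord of the convex function $hz^t$. Because $e^{\lambda t}$ is convex, the product $H(t)e^{\lambda t}$ should be maximized at an endpoint. More simply, use $H(t)\le hz^{k}$ and compare directly; at integers it equals $h\le e^{\lambda h}/(e\lambda)$, and this needs $e\lambda h\le e^{\lambda h}$, which holds since $e^{y}\ge ey$.
\item Between integers the endpoint argument is tedious. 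Instead, note that $H(t)z^{-t}$ on $[k,k+1]$ equals $h\bigl[(k+1-t)z^{k-t}+(t-k)z^{k+1-t}\bigr]$. With $u=t-k$ this is $h[(1-u)z^{-u}+uz^{1-u}]$, a convex function of $u$ (a sum of terms of the form linear$\times$exponential). Hence its maximum is at $u\in\{0,1\}$, where it equals $h$.
\end{itemize}

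\textbf{The supermartingale inequality.} By convexity of $H$ and Hoeffding-type convex order, a sum of independent Bernoullis with mean at most $\nu$ is dominated in convex order by $\operatorname{Pois}(\nu')$ with $\nu'=\E[D]$ (the comparison cited in Lemma~\ref{lem:binary-convex}). Since $H$ is nonincreasing and Poisson is stochastically increasing in its mean, it suffices to prove
\[
\E[H(t+1-P)]\le H(t),\qquad P\sim\operatorname{Pois}(\nu).
\]
We split by the location of $t$.
\begin{itemize}
\item \emph{Case $t\le -1$.} Every argument $t+1-P\le0$ lies in the linear region, so the expectation equals $h-t-1+\nu\le h-t$, using $\nu\le1$.
\item \emph{Case $t\ge0$ integer.} We need $\sum_j \Prb[P=j]\,H(t+1-j)\le hz^t$.
 \begin{itemize}
 \item For $t\ge1$, the mass at $j\le t+1$ lands on the geometric part, contributing at most $hz^{t+1}\E[z^{-P}]=hz^{t+1}e^{\nu(1/z-1)}$. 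This is $hz^t\cdot (\nu/2)e^{2-\nu}$.
 \item The negative-region tail then adds at most $\sum_{j\ge t+2}\Prb[P=j](h+j-t-1)$.
 \item For $t=0$, we compute directly: $e^{-\nu}[hz+h+\sum_{j\ge2}(h+j-1)\nu^j/j!]\le h$. The dominant terms are $h(1+\nu/2)e^{-\nu}+\nu^2/2+O(\nu^3)$ against $h=\tfrac{13}{18}\nu^2$. This is what fixes the constant $13/18$, and it must be checked numerically for $\nu\le 3/10$. Larger $t$ should be easier because the tail shrinks geometrically relative to $z^t$.
 \end{itemize}
\item \emph{Case of non-integer $t$.} Writing $t$ as a convex combination of its neighbouring integers (or of $t=-1$ and $t=0$ for $t\in(-1,0)$), the left side $\E[H(t+1-P)]$ is convex in $t$. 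This alone gives the wrong direction. However, $H$ is affine on each of these intervals, and every shifted argument $t+1-j$ stays within a single affine piece as $t$ ranges over the interval. So both sides are affine in $t$ there, and the integer cases suffice.
\end{itemize}

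\textbf{Main obstacle.} The hardest step is the $t=0$ computation, together with ensuring that $t\ge1$ is uniformly dominated. I would handle it by bounding the Poisson tails by geometric series in $\nu\le3/10$ and verifying the resulting polynomial inequality in $\nu$ on $(0,3/10]$.
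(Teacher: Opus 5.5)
Your reduction to $P\sim\operatorname{Pois}(\nu)$, via convex order and monotonicity in the mean, is sound. The paper instead works with $D$ directly, using $\Prb[D=0]\le e^{-\E[D]}$ and a size-bias tail bound. Your $t\le-1$ case, the $t\le 0$ half of the envelope, and the affine-interpolation argument for non-integer $t$ (the paper's argument) are also fine.

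\textbf{Envelope for $t\ge0$.} This part is wrong as argued. With $u=t-k\in[0,1]$, $H(t)z^{-t}=h\,e^{\lambda u}\bigl(1-(1-z)u\bigr)$. This equals $h$ at both endpoints but has derivative $h(\lambda-1+z)>0$ at $u=0$. So it is not convex, and its maximum is interior and strictly above $h$ (about $1.54h$ at $\nu=3/10$). This is unavoidable: $H$ is the chord of the convex function $hz^t$, so it lies above it. Your fallback $H(t)\le hz^k$ gives only $h/z=\tfrac{13}{9}\nu\approx0.43$ at $\nu=3/10$, which exceeds $e^{\lambda h}/(e\lambda)\approx0.22$. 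What works, as in the paper, is to bound the two pieces separately: $(1-u)e^{\lambda u}=z^{-1}(1-u)e^{-\lambda(1-u)}\le 1/(e\lambda z)$ and $uz^{1-u}\le1$. Then check $e\lambda h+h/z=\tfrac{13}{18}\nu(e\nu\lambda+2)\le\tfrac{26}{9}\nu<1\le e^{\lambda h}$, using $\nu\lambda=2z\log(1/z)\le 2/e$.

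\textbf{Integer $t\ge1$ in $\E[H(t+1-D)]\le H(t)$.} This is the more serious gap. Bounding the geometric part by $hz^{t+1}\E[z^{-P}]=hz^t\cdot\tfrac{\nu}{2}e^{2-\nu}$ also charges the large weights $z^{-j}$ for $j\ge t+2$, and this overcount breaks the inequality. At $\nu=3/10$ that factor is about $0.82$, while the negative-region tail is about $0.42\,hz$ at $t=1$ and about $0.20\,hz^2$ at $t=2$, so your bound exceeds $H(t)$ in both cases. The exact Poisson inequality at $t=1$ has only about 2\% slack. Computing the geometric part exactly (only $j\le t+1$) works for each fixed $t$, but you still need an argument uniform in $t$ and $\nu$; ``larger $t$ should be easier'' is not one.

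The missing idea is the paper's recursion. Let $\delta_k=\E[H(k+1-D)]-H(k)$. Since $H(u+1)=zH(u)$ for integers $u\ge0$, for $k\ge1$ one gets $\delta_k-z\delta_{k-1}=\sum_{d\ge k+1}\Prb[D=d]\,[(1-z)(h+d-k)-1]$, so the geometric contributions cancel exactly. The overshoot bound $\E[D-q\mid D\ge q]\le\nu/(q+1-\nu)$ follows from $(q+1)\Prb[D\ge q+1]\le\nu\Prb[D\ge q]$, which also holds for your Poisson $P$. With it, the right side is nonpositive as soon as $\tfrac{13}{18}\nu(2-\nu)(3-\nu)\le1$, and induction from $\delta_0\le0$ covers every $k$.

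\textbf{The $t=0$ display.} The $P=1$ term is $\nu h e^{-\nu}$, not $he^{-\nu}$. As written the inequality is false: to leading order its left side is $h-h\nu/2+\nu^2/2>h$. Corrected, it is exactly $\nu-1+[1-h(1-z)]e^{-\nu}\le0$. At $\nu=3/10$ this holds with a margin of only about $1.1\times10^{-4}$ against $h=0.065$. So it must be verified exactly, as the paper does with a truncated exponential series, not from leading-order terms.
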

\begin{proof}
We first prove the expected-decrease inequality at integer inputs. For $k\le-1$, every input $k+1-D$ is nonpositive, so
$\E[H(k+1-D)]-H(k)=\E[D]-1\le0$. For $k\ge0$, write $\delta_k=\E[H(k+1-D)]-H(k)$ for this expected change. At zero, separating $D=0$ from $D\ge1$ gives
\[
 \delta_0=\E[D]-1+[1-h(1-z)]\cdot\Prb[D=0]
 \le\nu-1+[1-h(1-z)]\cdot e^{-\nu}.
\]
Indeed, independence implies $\Prb[D=0]\le e^{-\E[D]}$, and the function $u-1+[1-h(1-z)]e^{-u}$ is increasing for $u\ge0$ because its derivative is $1-[1-h(1-z)]e^{-u}>0$. Thus $\delta_0\le0$ follows from $h(1-z)\ge1-(1-\nu)e^\nu$. The exponential series verifies this:
\[
 1-(1-\nu)e^\nu
 \le\frac{\nu^2}{2}+\frac{\nu^3}{3}+\frac{\nu^4}{8}+\frac{\nu^5}{24},
\]
\[
 h(1-z)-\left(\frac{\nu^2}{2}+\frac{\nu^3}{3}+\frac{\nu^4}{8}+\frac{\nu^5}{24}\right)
 =\nu^2\left(\frac29-\frac{25\nu}{36}-\frac{\nu^2}{8}-\frac{\nu^3}{24}\right)
 >0.
\]
For the first line, multiply $e^\nu\ge1+\nu+\nu^2/2+\nu^3/6+\nu^4/24$ by $1-\nu>0$. For the last inequality, the parenthesis decreases with $\nu$ and is positive at $\nu=3/10$.

To extend this check to positive integers, subtract $z\cdot\delta_{k-1}$ from $\delta_k$. The geometric branch cancels and, for $k\ge1$,
\begin{equation}\label{eq:goods-H-residual}
 \delta_k-z\delta_{k-1}
 =\sum_{d\ge k+1}\Prb[D=d]\cdot[(1-z)(h+d-k)-1].
\end{equation}
To see the cancellation, use $H(u+1)-zH(u)=0$ for integer $u\ge0$ and $H(u+1)-zH(u)=(1-z)(h-u)-1$ for integer $u<0$, with $u=k-D$.

We bound the excess deletion in~\eqref{eq:goods-H-residual} directly. Since $D=\sum_e\xi_e$ with independent Bernoulli $\xi_e$, for every integer $q\ge0$, independence implies
\[
 \begin{aligned}
 (q+1)\cdot\Prb[D\ge q+1]
 &\le\E[D\cdot\mathbf1_{\{D\ge q+1\}}] =\sum_e\E[\xi_e]\cdot\Prb[D-\xi_e\ge q]
 \le\nu\cdot\Prb[D\ge q].
 \end{aligned}
\]
In the equality, a summand is nonzero only when $\xi_e=1$, and $D-\xi_e$ is independent of $\xi_e$. Iterating this bound and summing a geometric series yields, whenever $\Prb[D\ge q]>0$ and $q\ge1$,
\[
 \E[D-q\mid D\ge q]
 =\sum_{b\ge1}\frac{\Prb[D\ge q+b]}{\Prb[D\ge q]}
 \le\sum_{b\ge1}\left(\frac{\nu}{q+1}\right)^b
 =\frac{\nu}{q+1-\nu}.
\]
If $\Prb[D\ge k+1]=0$, the right-hand side of~\eqref{eq:goods-H-residual} vanishes. Otherwise, apply this bound with $q=k+1$ to obtain
\[
 \delta_k-z\delta_{k-1}
 \le\Prb[D\ge k+1]\cdot
 \left[(1-z)\left(h+1+\frac{\nu}{3-\nu}\right)-1\right]\le0,
\]
where the last inequality is equivalent to $h\le\nu/[(2-\nu)(3-\nu)]$, or $(13/18)\cdot\nu(2-\nu)(3-\nu)\le1$. The polynomial $\nu(2-\nu)(3-\nu)$ is increasing on this interval, since its derivative is $6-10\nu+3\nu^2>0$, and the required bound holds at $\nu=3/10$. By induction, we have $\delta_k\le0$ for every nonnegative integer $k$. It remains to prove $\E[H(t+1-D)]\le H(t)$ for non-integer inputs. Write $t=k+\theta$, where $k\in\mathbb Z$ and $0\le\theta<1$. Since $D$ is integer-valued, $t+1-D$ lies between the consecutive integers $k+1-D$ and $k+2-D$. By the definition of $H$ between integers,
\[
H(t+1-D)
=(1-\theta)\cdot H(k+1-D)
+\theta\cdot H(k+2-D).
\]
Taking the expectation, and applying the already proved integer-input inequality at $k$ and $k+1$, we have
\[
\begin{aligned}
\E[H(t+1-D)]
&=(1-\theta)\cdot\E[H(k+1-D)]
+\theta\cdot\E[H(k+2-D)]\\
&\le(1-\theta)\cdot H(k)+\theta\cdot H(k+1)\\
&=H(k+\theta)=H(t).
\end{aligned}
\]
The last equality again follows from the definition of $H$ between integers. Thus $\E[H(t+1-D)]\le H(t)$ holds for every real $t$.

For the upper bound, maximizing $y\cdot e^{-\lambda y}$ over $y\ge0$ gives $y\cdot e^{-\lambda y}\le1/(e\cdot\lambda)$. If $t<0$, we use $y=h-t$ and obtain
\[
 H(t)\cdot z^{-t}=e^{\lambda h}\cdot(h-t)e^{-\lambda(h-t)}
 \le\frac{e^{\lambda h}}{e\cdot\lambda}.
\]
If $t=k+\theta\ge0$, with integer $k\ge0$ and $0\le\theta<1$, the definition between consecutive integers gives
\[
 H(t)\cdot z^{-t}
 =h\cdot[(1-\theta)+\theta z]\cdot z^{-\theta}
 \le h\cdot\left(\frac1{e\lambda z}+1\right).
\]
Here $(1-\theta)\cdot z^{-\theta}=z^{-1}\cdot (1-\theta)\cdot e^{-\lambda(1-\theta)}\le1/(e\cdot\lambda \cdot z)$ and $\theta \cdot z^{1-\theta}\le1$. Since $\nu=2e^{-\lambda}$, the same scalar exponential bound gives $e\cdot\nu\cdot \lambda\le2$. Hence
\[
 e\cdot \lambda \cdot h+h/z
 =\frac{13}{18}\cdot\nu\cdot (e\nu\lambda+2)
 \le\frac{26}{9}\cdot\nu\le\frac{13}{15}<1\le e^{\lambda h}.
\]
Multiplying the preceding bound by $e\lambda$ completes the proof.
\end{proof}

From the last step in the proof, we also have $\nu\lambda\le2/e<1$, which will be useful below.

\subsubsection{Following the potential until the target arrives}
In this section, we track $H(v_j(R_i)-k+m)$, where $k$ agents are still waiting and $m\ge0$ is a fixed integer. If $v_j(R_i)\le3k/2$, then by Lemmas~\ref{lem:goods-convex} and~\ref{lem:goods-H}, the conditional expected potential cannot increase. Otherwise, $v_j(R_i)-k>k/2$, and the exponential estimate bounds the expected increase by a term that decays geometrically in $k$. We sum these terms until agent $j$ arrives, choosing $m$ later to match her failure threshold.

\begin{lemma}\label{lem:goods-probability-moment}
For every positive-MMS target agent $j$, put $\ell=n-j+1$. For every integer $m\ge0$,
\begin{equation}\label{eq:goods-probability-moment}
 \E[H(v_j(R_j)-\ell+m)]
 \le H(m)+\frac{11}{9}\cdot
       \frac{z^{m+(\ell+3)/2}}{1-\sqrt z}.
\end{equation}
\end{lemma}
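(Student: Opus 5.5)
We track the potential $\Phi_i:=H(v_j(R_i)-k_i+m)$, where $k_i=n-i+1$ is the number of agents still waiting before round $i$. The plan is to bound the expected one-round increase $\E[\Phi_{i+1}\mid\mathcal H_i]-\Phi_i$ for every round $i<j$ and then telescope from $i=1$ to $i=j-1$.

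The starting value is easy. Since $v_j(M)\ge n$ by \eqref{eq:goods-normalization}, we have $v_j(R_1)-n+m\ge m$. Because $H$ is nonincreasing, $\Phi_1\le H(m)$.

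\textbf{Rounds that cannot increase the potential.} In a singleton or empty round, Lemma~\ref{lem:goods-exponential} gives $v_j(R_{i+1})\ge v_j(R_i)-1$. The argument therefore satisfies $v_j(R_{i+1})-(k-1)+m\ge v_j(R_i)-k+m$, and monotonicity of $H$ gives $\Phi_{i+1}\le\Phi_i$ deterministically.

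In a rounding round with $v_j(R_i)\le 3k/2$, apply Lemma~\ref{lem:goods-convex} with $g=H$, which is convex and nonincreasing. Use the fixed shift $s=-(k-1)+m$. This gives $\E[\Phi_{i+1}\mid\mathcal H_i]\le\E[H(t+1-D)]$ with $t=v_j(R_i)-k+m$. Here $\E[D]=p_i v_j(R_i)\le (3x/k)(3k/2)=\nu$, so Lemma~\ref{lem:goods-H} yields $\E[H(t+1-D)]\le H(t)=\Phi_i$. Again there is no increase.

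\textbf{Rounding rounds with large remaining value, $v_j(R_i)>3k/2$.} Here we drop the negative term $-\Phi_i$ and bound $\E[\Phi_{i+1}\mid\mathcal H_i]$ directly. By the second inequality of Lemma~\ref{lem:goods-H}, $H(u)\le \frac{e^{\lambda h}}{e\lambda}e^{-\lambda u}$. Hence
\[
\E[\Phi_{i+1}\mid\mathcal H_i]\le \frac{e^{\lambda h}}{e\lambda}e^{-\lambda(m-k+1)}\E\left[e^{-\lambda v_j(R_{i+1})}\mid\mathcal H_i\right].
\]
Lemma~\ref{lem:goods-exponential} bounds the last expectation by $\exp[-(\lambda-p_i(e^\lambda-1))v_j(R_i)]$.

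With $p_i=3x/k=\frac{2\nu}{3k}$ and $e^\lambda=2/\nu$, we get $p_i(e^\lambda-1)\le 4/(3k)$. The exponent is therefore at most $-\lambda v_j(R_i)+\frac{4}{3k}v_j(R_i)$. Since this expression is decreasing in $v_j(R_i)$ (as $\lambda$ exceeds $4/(3k)$ comfortably), we may substitute $v_j(R_i)=3k/2$. This gives roughly $e^{-3\lambda k/2+2}$, so the increase is at most a constant times $z^{m+k/2+1}$.

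I expect the main obstacle to be the constant bookkeeping. The target is the coefficient $\frac{11}{9}z^{(k+1)/2}$ with the exponent aligned to $(\ell+3)/2$. To reach it, I will use $e\nu\lambda\le 2$ and $\lambda h\le 13/15$ from the proof of Lemma~\ref{lem:goods-H}, absorb $e^{2}$-type factors into $z^{1/2}$-powers, and fall back on $\nu\le 3/10$ where needed.

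\textbf{Summing.} Rounds $i<j$ have $k=n-i+1\ge\ell+1$. Summing the per-round increases over $k=\ell+1,\ell+2,\dots$ is a geometric series in $\sqrt z$. Starting from $z^{m+(\ell+3)/2}$ at $k=\ell+1$, it totals at most $\frac{11}{9}\frac{z^{m+(\ell+3)/2}}{1-\sqrt z}$. Taking full expectations and telescoping then proves \eqref{eq:goods-probability-moment}.
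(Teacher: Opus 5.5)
Your overall structure is the paper's: the same three-way split per round, the same use of Lemma~\ref{lem:goods-convex}, \eqref{eq:goods-H-bounds} and \eqref{eq:goods-exponential}, and the same telescoping from $H(m)$ followed by a geometric sum. The gap is in the one quantitative step the lemma actually asserts, the constant $11/9$. The route you sketch for it fails.

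Your loosening $p_i(e^\lambda-1)\le 4/(3k)$ discards a factor $e^{-\nu}$. With it, the large-value branch gives the coefficient $e^{\lambda h+1}/\lambda$ in front of $z^{m+k/2+1}$. At $\nu=3/10$ (that is, $x=1/15$, $\lambda=\log(20/3)$, $h=13/200$) this coefficient is about $1.62>11/9$.

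The loss cannot be recovered the way you suggest. The per-round exponent $m+k/2+1$ that you derived is already exactly the one whose sum over $k\ge\ell+1$ produces $z^{m+(\ell+3)/2}/(1-\sqrt z)$. So there is no spare power of $z$ into which an $e^2$-type factor could be absorbed. (Your stated target $\frac{11}{9}z^{(k+1)/2}$ is off by a factor $\sqrt z$, omits $z^m$, and is inconsistent with $(\ell+3)/2$.) Also, $\lambda h\le 13/15$ is far weaker than the true value $\lambda h\approx0.12$ at the endpoint.

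The fix is to keep the exact identity $p_i(e^\lambda-1)\cdot\frac{3k}{2}=\nu(1/z-1)=2-\nu$. This gives a per-round increase of at most $\frac{e^{\lambda h+1-\nu}}{\lambda}\,z^{m+k/2+1}$. You then need an argument valid for every $x\in(0,1/15]$, not just the endpoint. With $\lambda=\log(2/\nu)$ and $h=\frac{13}{18}\nu^2$,
\[
\frac{d}{d\nu}\Bigl(\lambda h+1-\nu-\log\lambda\Bigr)=\frac{13}{18}\,\nu(2\lambda-1)-1+\frac{1}{\nu\lambda}>0 .
\]
This is positive because $\lambda\ge\log(20/3)>1$ and $\nu\lambda=2\lambda e^{-\lambda}\le 2/e<1$. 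Hence the coefficient is maximized at $\nu=3/10$, where it is about $1.201<11/9\approx1.222$. The margin is under $2\%$, so essentially no slack can be spent anywhere in this branch.
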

\begin{proof}
Condition on $\mathcal H_i$ in an earlier round, and write $k=n-i+1\ge2$. A singleton or empty allocation cannot increase $H(v_j(R_i)-k+m)$. If rounding is used and $v_j(R_i)\le3k/2$, the comparison count has mean $p_i\cdot v_j(R_i)\le\nu$. By~\eqref{eq:goods-convex} and~\eqref{eq:goods-H-bounds},
\[
 \E[H(v_j(R_{i+1})-(k-1)+m)\mid\mathcal H_i]
 \le\E[H(v_j(R_i)-k+m+1-D)]
 \le H(v_j(R_i)-k+m).
\]
If $v_j(R_i)>3k/2$, apply the exponential part of~\eqref{eq:goods-H-bounds} and then~\eqref{eq:goods-exponential}. The coefficient $\lambda-p_i(1/z-1)$ is positive: since $p_i=2\nu/(3k)\le\nu/3$, we have $p_i(1/z-1)\le(2-\nu)/3<1$, whereas $\lambda\ge\log(20/3)>1$. Thus we may replace $v_j(R_i)$ by $3k/2$ in the decreasing exponential and obtain
\[
 \begin{aligned}
 \E[H(v_j(R_{i+1})-(k-1)+m)\mid\mathcal H_i]
 &\le\frac{e^{\lambda h}}{e\lambda}
 e^{\lambda(k-1-m)-(\lambda-p_i(1/z-1))\cdot(3k/2)}\\
 &=\frac{e^{\lambda h+1-\nu}}{\lambda}\cdot z^{m+k/2+1},
 \end{aligned}
\]
where the equality holds by $p_i(3k/2)=\nu$ and $\nu(1/z-1)=2-\nu$.

We next check a uniform upper bound of $11/9$ on this coefficient. Since $\lambda=\log(2/\nu)$,
\[
 \frac{\mathrm d}{\mathrm d\nu}
 \log\left(\frac{e^{\lambda h+1-\nu}}{\lambda}\right)
 =\frac{13}{18}\cdot\nu(2\lambda-1)-1+\frac1{\nu\lambda}>0.
\]
The sign follows from $\lambda>1$ and $\nu\lambda<1$. Its maximum therefore occurs at $\nu=3/10$, where $h=13/200$ and $\lambda=\log(20/3)$. Substituting this endpoint, we obtain
\[
 \frac{e^{\lambda h+1-\nu}}{\lambda}
 \le\frac{\exp[(13/200)\cdot\log(20/3)+7/10]}{\log(20/3)}
 <\frac{11}{9}.
\]

The potential before round $i$ satisfies $H(v_j(R_i)-k+m)\ge0$, so all branches satisfy
\[
 \E[H(v_j(R_{i+1})-(k-1)+m)\mid\mathcal H_i]
 \le H(v_j(R_i)-k+m)+\frac{11}{9}\cdot z^{m+k/2+1}.
\]
Initially, $H(v_j(M)-n+m)\le H(m)$. Take expectations and sum over earlier waiting counts $k=n,n-1,\ldots,\ell+1$. Since $0<z<1$, we have
\[
\sum_{k=\ell+1}^{n}z^{m+k/2+1}
\le z^{m+(\ell+3)/2}\sum_{k=0}^{\infty}(\sqrt z)^k
=\frac{z^{m+(\ell+3)/2}}{1-\sqrt z}.
\]
Substituting this bound into the preceding inequality, we complete the proof.
\end{proof}

It remains to translate the potential bound into a bound on the probability that some agent fails. We group the number of waiting agents into triples because the sufficient remaining value is two thirds of that number. Within each triple, one common integer shift makes all three failure thresholds negative, where $H$ directly measures a shortage.

\begin{proof}[Proof of Theorem~\ref{thm:goods-probability}]
The case $n=1$ is immediate. Suppose $n\ge2$. By~\eqref{eq:goods-normalization} and the union bound, we have
\[
\Prb[\rho(X)<x]
\le\sum_{i:\mu_i>0}\Prb[v_i(X_i)<x].
\]
We now bound each probability on the right-hand side.

Fix a positive-MMS agent $i<n$, and write $n-i+1=3m+r$, where $m\ge0$ and $r\in\{1,2,3\}$. By Lemma~\ref{lem:goods-shortfall}, $v_i(X_i)\ge x$ whenever $v_i(R_i)\ge2(n-i+1)/3$. Thus, if $v_i(X_i)<x$, then
\[
\begin{aligned}
v_i(R_i)-(n-i+1)+m
&<\frac{2(n-i+1)}3-(n-i+1)+m\\
&=-\frac r3.
\end{aligned}
\]
Since $H(t)\ge(-t)_+$, this event implies
\[
H(v_i(R_i)-(n-i+1)+m)
\ge(n-i+1)-m-v_i(R_i)
>\frac r3.
\]
By Markov's inequality and then~\eqref{eq:goods-probability-moment}, we therefore have
\[
\begin{aligned}
\Prb[v_i(X_i)<x]
&\le\Prb\left[
H(v_i(R_i)-(n-i+1)+m)>\frac r3
\right]\\
&\le\frac3r\cdot
\E[H(v_i(R_i)-(n-i+1)+m)]\\
&\le\frac3r\cdot
\left[
hz^m+\frac{11}{9}\cdot
\frac{z^{m+(n-i+4)/2}}{1-\sqrt z}
\right].
\end{aligned}
\]

For the final agent with positive MMS, $X_n=R_n$. Whenever $v_n(X_n)<x$, we have
\[
H(v_n(R_n)-1)\ge1-v_n(R_n)>1-x.
\]
Applying Markov's inequality and then~\eqref{eq:goods-probability-moment} with $\ell=1$ and $m=0$, we obtain
\[
\begin{aligned}
\Prb[v_n(X_n)<x]
&\le\Prb[H(v_n(R_n)-1)>1-x]\\
&\le\frac1{1-x}\cdot\E[H(v_n(R_n)-1)]\\
&\le\frac1{1-x}\cdot
\left[
h+\frac{11}{9}\cdot\frac{z^2}{1-\sqrt z}
\right].
\end{aligned}
\]

We next sum these probability bounds to bound $\Prb[\rho(X)<x]$. We first sum the terms arising from $H(m)=hz^m$. For a nonfinal agent with $n-i+1=3m+r$, this term is $(3/r)\cdot hz^m$. Thus, for $r=1,2,3$, the respective terms are $3hz^m$, $(3/2)\cdot hz^m$, and $hz^m$.

When $m=0$, the case $r=1$ corresponds to the final agent, whose bound uses $h/(1-x)$ instead of $3h$. The cases $r=2,3$ contribute $(3/2)\cdot h$ and $h$. For each $m\ge1$, we sum the three terms listed above. Each value of $n-i+1$ occurs at most once, and including values greater than $n$ only increases the sum because every term is nonnegative. Therefore, the total contribution from the $H(m)$ terms is at most
\[
\begin{aligned}
&\frac h{1-x}+\frac32\cdot h+h
+\sum_{m=1}^{\infty}\left(3+\frac32+1\right)\cdot hz^m\\
&\quad=h\left[\frac1{1-x}+\frac52+\frac{11}{2}\cdot\frac z{1-z}\right]\\
&\quad=h\left[\frac{11/2}{1-z}+\frac1{1-x}-3\right].
\end{aligned}
\]
We have summed the $H(m)$ terms. We now sum the remaining terms in the agents' probability bounds. For a nonfinal agent with $\ell=n-i+1=3m+r$, the remaining term is
\[
\frac3r\cdot\frac{11}{9(1-\sqrt z)}\cdot z^{m+(\ell+3)/2}
=
\frac{11z^2}{9(1-\sqrt z)}
\cdot z^{5m/2}\cdot\frac3r\cdot(\sqrt z)^{r-1}.
\]
For $r=1,2,3$, the last factor is respectively $3$, $(3/2)\cdot\sqrt z$, and $z$. When $m=0$, the case $r=1$ corresponds to the final agent and uses $1/(1-x)$ instead of $3$, just as in the preceding sum.

Including all possible values of $\ell$ therefore bounds the remaining terms by
\[
\begin{aligned}
&\frac{11z^2}{9(1-\sqrt z)}
\left[
\frac1{1-x}+\frac32\cdot\sqrt z+z
+\sum_{m=1}^{\infty}
\left(3+\frac32\cdot\sqrt z+z\right)\cdot z^{5m/2}
\right]\\
&\quad=
\frac{11z^2}{9(1-\sqrt z)}
\left[
\frac1{1-x}+\frac32\cdot\sqrt z+z
+\left(3+\frac32\cdot\sqrt z+z\right)
\cdot\frac{z^{5/2}}{1-z^{5/2}}
\right]\\
&\quad=
\frac{11z^2}{9(1-\sqrt z)}
\left[
\frac{3+(3/2)\cdot\sqrt z+z}{1-z^{5/2}}
+\frac1{1-x}-3
\right].
\end{aligned}
\]

Adding this bound to the preceding bound for the $H(m)$ terms gives an upper bound on the sum of the agents' probabilities of receiving value below $x$. By the union bound already established, this also bounds $\Prb[\rho(X)<x]$. Substituting $h=(117/8)\cdot x^2$ and $z=(9/4)\cdot x$, we obtain
\[
\begin{aligned}
\Prb[\rho(X)<x]\le x^2\Biggl\{
&\frac{117}{8}
\left[\frac{11/2}{1-z}+\frac1{1-x}-3\right]\\
&+\frac{99}{16(1-\sqrt z)}
\left[
\frac{3+(3/2)\cdot\sqrt z+z}{1-z^{5/2}}
+\frac1{1-x}-3
\right]
\Biggr\}.
\end{aligned}
\]
Here $99/16=(11/9)\cdot(9/4)^2$ comes from the factor $11z^2/9$ in the second bound.

Both inner brackets are positive and increase with $x$, and $1/(1-\sqrt z)$ also increases with $x$. Thus the coefficient of $x^2$ is largest at $x=1/15$, where $z=3/20$. Substitution at this endpoint gives a coefficient less than $85$. Consequently,
\[
\Prb[\rho(X)<x]<85\cdot x^2,
\]
which completes the proof.
\end{proof}

\subsection{An upper bound for arbitrary online algorithms}\label{subsec:goods-hardness}
In this section, we provide a hardness result for goods. The hidden-support instance from Theorem~\ref{thm:binary-hard-instance} also bounds the expected goods ratio from above. Its distribution is unchanged. For goods, too many hidden-support items allocated to early agents leave some later agent with value zero.

\begin{proof}[Proof of Theorem~\ref{thm:goods-hardness}]
Let $|M|=2n$ and choose $C$ uniformly among its $n$-element subsets. The first $n/2$ agents value every item at one and have MMS two; the remaining $n/2$ agents value precisely the items of $C$ at one and have MMS one. Fix the algorithm's random seed. The early bundles are independent of $C$. If one has at most one item, then $\rho(X)\le1/2$ for every $C$. Otherwise their union $U$ contains at least $n$ items; choose any $U'\subseteq U$ with $|U'|=n$.

If $|C\setminus U'|<n/2$, fewer than $n/2$ valuable items remain for the last $n/2$ agents, so one receives value zero. On every other outcome, $\rho(X)\le1$. The hypergeometric variable $|C\setminus U'|$ is symmetric about $n/2$, and its probability of equaling $n/2$ is $\binom n{n/2}^{2}/\binom{2n}n$. Therefore
\[
 \E_C[\rho(X)]
 \le\Prb[|C\setminus U'|\ge n/2]
 =\frac12+\frac12\cdot\frac{\binom n{n/2}^{2}}{\binom{2n}n}
 =\frac12+O\!\left(\frac1{\sqrt n}\right).
\]
Averaging over the random seed and then over $C$ yields one fixed instance with the same upper bound. The Stirling estimate is the same as in Theorem~\ref{thm:binary-hard-instance}.
\end{proof}

\end{document}